\documentclass{article}

\PassOptionsToPackage{numbers,sort&compress}{natbib}
\makeatletter
\def\input@path{{./26.5-NeurIPS2026/}{./}{../}}%
\makeatother
\usepackage[preprint]{neurips_2026}

\usepackage[utf8]{inputenc}
\usepackage[T1]{fontenc}
\usepackage{hyperref}
\usepackage{url}
\usepackage{nicefrac}
\usepackage{microtype}

\usepackage{adjustbox}
\usepackage{algorithmic}
\usepackage[ruled,linesnumbered]{algorithm2e} 

\usepackage{amsmath,amsfonts,amsthm} 

\usepackage{booktabs} 
\usepackage{bm} 
\usepackage{diagbox} 
\usepackage{float} 
\usepackage{graphicx} 

\usepackage{listings} 
\usepackage{mathrsfs} 
\usepackage{mathtools} 
\usepackage{multirow} 

\usepackage[sort&compress, numbers]{natbib}

\usepackage{subcaption} 
\usepackage{thmtools} 
\usepackage{thm-restate} 
\usepackage{tikz} 

\usepackage[textsize=normalsize]{todonotes} 
\usepackage{xcolor} 

\usepackage[capitalize,noabbrev]{cleveref} 

\definecolor{pku-red}{RGB}{139,0,18} 

\theoremstyle{plain} 
\newtheorem{theorem}{Theorem}[section] 
 
\newtheorem{lemma}[theorem]{Lemma} 

\newtheorem{example}[theorem]{Example}

\theoremstyle{definition}
\newtheorem{definition}[theorem]{Definition}
\newtheorem{assumption}[theorem]{Assumption}
\newtheorem{fact}[theorem]{Fact}
\theoremstyle{remark}
\newtheorem{remark}[theorem]{Remark}

\crefname{assumption}{Assumption}{Assumptions}
\Crefname{assumption}{Assumption}{Assumptions}
\crefname{algorithm}{Algorithm}{Algorithms}
\Crefname{algorithm}{Algorithm}{Algorithms}
\crefname{appendix}{Appendix}{Appendices}
\Crefname{appendix}{Appendix}{Appendices}

\newcommand{\bbE}{\mathbb{E}}

\newcommand{\bbR}{\mathbb{R}}

\newcommand{\calF}{\mathcal{F}}

\newcommand{\calL}{\mathcal{L}}

\newcommand{\calP}{\mathcal{P}}

\newcommand{\calV}{\mathcal{V}}

\newcommand{\bmp}{{\bm{p}}}

\newcommand{\bms}{{\bm{s}}}
\newcommand{\bmt}{{\bm{t}}}

\newcommand{\bmv}{{\bm{v}}}

\newcommand{\bmx}{{\bm{x}}}

\newcommand{\iinn}{{i\in[n]}} 
\newcommand{\jinn}{{j\in[n]}}

\newcommand{\kinL}{{k\in[L]}}
\newcommand{\linL}{{l\in[L]}}

\newcommand{\jlei}{{j\leq i}}

\newcommand{\jgei}{{j\geq i}}
\newcommand{\jggei}{{j > i}}

\newcommand{\one}{\bm{1}} 

\newcommand{\argmax}{\mathrm{argmax}}

\newcommand{\dd}{\mathrm{d}} 
\newcommand{\poly}{\mathrm{poly}} 

\newcommand{\ie}{\emph{i.e.}}    
\newcommand{\eg}{\emph{e.g.}}    
\newcommand{\wlg}{\emph{w.l.o.g.}}    
\newcommand{\resp}{\emph{resp.}} 

\newcommand{\Wlg}{\wlg~}

\newcommand{\SW}{\mathrm{SW}}

\newcommand{\tbmp}{\tilde{\bmp}}

\newcommand{\tbmx}{\tilde{\bmx}}
\newcommand{\tbms}{\tilde{\bms}}

\newcommand{\tp}{\tilde{p}}

\newcommand{\tx}{\tilde{x}}

\newcommand{\ts}{\tilde{s}}

\newcommand{\tmu}{{\tilde{\mu}}}
\newcommand{\bmmu}{{\bm{\mu}}}
\newcommand{\tbmmu}{{\tilde{\bmmu}}}

\usepackage[normalem]{ulem} 
\usepackage{marginnote}
\let\oldcite\cite
\renewcommand{\cite}[1]{\mbox{\oldcite{#1}}}

\usepackage{enumerate}
\usepackage[inline]{enumitem}

\title{Incentivizing Data Trading via Profit Reallocation}

\author{%
Yunxuan Ma\thanks{Equal contribution.} \\
Peking University \\
\texttt{yunxuanma@pku.edu.cn} \\
\And
Wu Xin\footnotemark[1] \\
Peking University \\
\texttt{xinwu@pku.edu.cn} \\
\And
Jichen Li \\
Tsinghua University \\
\texttt{jichenli@mail.tsinghua.edu.cn} \\
\And
Hongyin Chen \\
Technion \\
\texttt{hongyin.chen.contact@gmail.com} \\
\And
Xiaoqi Dong \\
Renmin University of China \\
\texttt{xiaoqi.dong@ruc.edu.cn} \\
\And
Yusen Zheng \\
Peking University \\
\texttt{yusen@stu.pku.edu.cn} \\
\And
Yukun Cheng \\
Jiangnan University \\
\texttt{ykcheng@amss.ac.cn} \\
\And
Xiaotie Deng\thanks{Corresponding author.} \\
Peking University \\
\texttt{xiaotie@pku.edu.cn} \\
}

\begin{document}

\maketitle
\begin{abstract}


Data trading is a central approach to data circulation, yet data markets remain far less active than expected.
A primary bottleneck is the lack of effective economic incentives.
Existing approaches often treat data as traditional goods, overlooking its inherent replicability and resale potential: buyers can replicate and resell data products, thereby forming transaction chains.
Upstream sellers do not benefit from downstream resales and thus have limited incentives to sell.
However, the impact of data resale on market performance remains insufficiently understood.

To address this gap, we propose a sequential, chain-based data trading model that explicitly captures data resale.
The model reflects data flows in settings such as LLM training and strategic decision-making.
We integrate this model with a profit reallocation mechanism.
By reallocating profits along the transaction chain, this mechanism ensures upstream sellers benefit from downstream resales.
We next develop efficient algorithms, including a polynomial-time exact algorithm for the discrete model and an FPTAS for the continuous model, to compute its sequential equilibria.
We theoretically show that profit reallocation expands trade and improves social welfare under certain conditions,
and empirical results demonstrate that our mechanism increases transaction volume by 120.0\% and social welfare by 50.4\% in synthetic environments, compared with the baseline mechanism that does not reallocate profits.
\end{abstract}

\section{Introduction}
\label{sec:intro}

Data trading is a key approach to facilitating data circulation and unlocking data value \cite{Fernandez2020DataMarketPlatforms,agarwal2019marketplace,biswas2021incentive}, with industrial implementations including Snowflake Data Marketplace \cite{SnowflakeDocs_MarketplaceAbout}, AWS Data Exchange \cite{AWSDocs_DataExchangeLanding}, and Dawex \cite{Dawex_Web_DataMarketplaceModel}.
However, data markets are insufficiently active and remain underdeveloped \citep{Koutroumpis2017UnfulfilledPotential,OECD2021ValueOfData,EuropeanParliament2023DataActPressRelease}.
A key reason is the lack of effective incentive mechanisms for market participants \citep{OECD_EASD_2019,azcoitia2022survey, bauer2024designing,xin2025tbds}.

Existing research has sought to address this challenge through data auctions \citep{ghosh2011selling,zhang2020selling,agarwal2024towards,ravindranath2023data},
profit sharing \citep{cao2017game,yu2020fairness},
marketplace design \citep{agarwal2019marketplace,biswas2021incentive},
and incentive mechanisms \citep{li2023martfl,zhang2025incentive}.
However, these approaches often overlook the replicability and resale potential of data~\citep{OECD2021ValueOfData,JonesTonetti2020NonrivalryData}.
Unlike traditional goods, data can be replicated, processed into data products, and resold, thereby forming transaction chains \citep{xin2025tbds,azcoitia2022survey,biswas2021incentive}.
Because upstream sellers do not benefit from downstream resales, they may have weak incentives to sell data in the first place.
Some studies recognize that data can generate profit through resale~\citep{biswas2021incentive,xin2025tbds}.
However, \citet{biswas2021incentive} does not study incentive design for data resale, and \citet{xin2025tbds}'s approach relies on public information.
To the best of our knowledge, no prior work has addressed incentive design in data markets while explicitly accounting for data resale.

To bridge this gap, we study incentive design for data trading when resale is non-negligible, and model the system as a sequential, chain-based data trading game that explicitly captures resale behavior.

Building upon this model, we propose a profit reallocation mechanism that ensures upstream sellers continue to benefit from downstream trades, thereby incentivizing their willingness to trade data.
An important contribution is that our mechanism generalizes fixed-ratio profit reallocation \citep{xin2025tbds,Swash2021Whitepaper}, which has limited expressiveness. We only require the mechanism to be budget feasible, a general and economically meaningful condition.

We develop two efficient algorithms to compute sequential equilibria under the proposed mechanism: a polynomial-time exact algorithm for discrete models and a fully polynomial-time approximation scheme (FPTAS) for continuous models.
These equilibrium computation algorithms enable efficient evaluation of profit reallocation mechanisms.
The core algorithmic challenge is to compute player utilities efficiently, as naive computation has exponential complexity.
We design a two-level dynamic program to enable polynomial computation, which might be of independent interest.

Finally, we theoretically establish the advantages of profit reallocation: under certain conditions, compared with the baseline mechanism that does not reallocate profits, profit reallocation expands the equilibrium trade event and improves social welfare. This comparison is non-trivial because changes in the mechanism can affect the resulting equilibria in complex ways.
Our theoretical findings are quantified through experiments: compared with the baseline mechanism, profit reallocation mechanism increases transaction volume by $120.0\%$ and social welfare by $50.4\%$ in simulated environments.

In summary, our contributions are: (1) a chain-based data trading game that models participants' strategic behavior (\cref{subsec:game,subsec:solution}), (2) a generalized profit reallocation mechanism (\cref{subsec:mechanism}), (3) efficient equilibrium computation algorithms (\cref{sec:algo}), and (4) theoretical (\cref{sec:theoretic}) and experimental (\cref{sec:exp}) validation of profit reallocation mechanisms.

Our approach applies directly to numerous chain-based data trading settings and suggests extensions to richer data trading topologies, such as trees or directed acyclic graphs (DAGs), that build on chains\footnote{For space limits, further related works are deferred to \cref{sec:related}.}.

\section{Model}
\label{sec:model}

\subsection{Sequential Data Trading Game}\label{subsec:game}

We consider a sequential data trading game on a data platform, involving $n + 1$ players indexed by $i \in \{0, 1, \dots, n\}$. In this setting, replicable data are traded along a chain. After acquiring data from player $i-1$, each player $i \in [n] \coloneqq \{1,\cdots,n\}$ can process the data into a usable ``data product'', which can then be sold downstream to player $i+1$. Player $i$ may obtain value from the data product and incur a processing cost; we use $v_i \in \mathcal{V}_i \subseteq \mathbb{R}$ to denote the value minus the processing cost, which may be positive or negative. The variable $v_i$ is player $i$'s private information, and we refer to it as \emph{player $i$'s valuation}. We assume that valuations are Markovian: $v_i$ depends only on the preceding valuation $v_{i-1}$.

\begin{assumption}[Markovian] 
\label{def:markov}
For each player $i \in [n]$, the valuation $v_i$ depends only on the immediate preceding valuation $v_{i-1}$. Formally, for any $i \geq 1$, the conditional distribution of $v_i$ satisfies:
$f_i(v_i \mid v_0, v_1, \dots, v_{i-1}) = f_i(v_i \mid v_{i-1})$,
where $f_i$ denotes the conditional PDF (or $F_i$ for the conditional CDF and $\calF$ for the joint distribution).
\end{assumption}

This Markovian structure is natural for two reasons: (a) it captures the intuition that if player $i$ creates a data product $P$ from an input $I$, then the value of $P$ should depend on the characteristics of $I$ rather than on the more ``atomic'' data used to generate $I$; and (b) it strictly relaxes the standard assumption of independent private valuations \citep{myerson1981optimal}. We assume that $\calF$, $f_i$s, and $F_i$s are common knowledge, as is standard in the economics literature \citep{MWG}.

The game proceeds through $n$ sequential trades. In trade $i$, player $i-1$ acts as the seller and makes a take-it-or-leave-it offer by posting a non-negative price $p_{i-1} \in \mathbb{R}_+$ on the platform.\footnote{We require the price to be non-negative because a negative price is economically meaningless.} After observing the posted price, buyer $i$ decides the willingness to buy ($x_i = 1$) or not ($x_i = 0$). The decision $x_i = 1$ means that player $i$ buys the data, and hence trade $i$ occurs, provided that all previous trades have occurred. We therefore use $y_i \coloneqq \prod_{j\le i} x_j$ to indicate whether trade $i$ actually occurs and player $i$ obtains the data. If $y_i = 1$, trade $i$ occurs, player $i$ realizes valuation $v_i$, and pays $p_{i-1}$; otherwise, $y_i=0$ and player $i$ receives no valuation and makes no payment. At each stage $i$, the buyer observes their own valuation $v_i$ and the public transaction history
\begin{equation}
\label{eq:hi}
h_i = (p_0,x_1,p_1,\ldots,p_{i-2},x_{i-1},p_{i-1}).
\end{equation}
When $i=0$, we let $h_0 = \perp$. We denote $H_i$ as all possible values of $h_i$.

\begin{remark}
Real-world trading may involve richer trading dynamics than posted-price dynamic, but posted-price dynamic is common in platform-mediated markets (\eg, financial exchanges, e-commerce platforms): sellers announce a price, and buyers decide whether to accept the offer. Posted-price dynamic is also a standard modeling choice in the sequential trading literature \citep{condorelli2017bilateral,manea2018intermediation}.
\end{remark}

A strategy for player $i$ is a function pair $(\tx_i, \tp_i)$\footnote{We use $\tilde{\cdot}$ (\eg, $\tx$) to denote a strategy and the corresponding non-tilde variable (\eg, $x$) to denote the action induced by that strategy.}, where $\tx_i: \calV_i \times H_i \to \{0, 1\}$ and $\tp_i: \calV_i \times H_i \to \mathbb{R}_+$ are the buying and pricing strategies, respectively. We call $(\tbmx,\tbmp)$ a strategy profile, where $\tbmx=\{\tx_i\}_\iinn$ and $\tbmp=\{\tp_i\}_{0\le i<n}$.

\begin{remark}
Player $0$ has no buying strategy, and player $n$ has no pricing strategy.
\end{remark}

\begin{definition}[Game Dynamics]
\label{def:game-evolve}
Given a valuation profile $\bmv$ and a strategy profile $(\tbmx, \tbmp)$, the game evolves as follows: \\
1. For each round $i \in \{1, \dots, n\}$, player $i-1$ posts a price $p_{i-1} = \tp_{i-1}(v_{i-1}, h_{i-1})$ to player $i$. \\
2. Player $i$ then responds with the willingness to buy $x_i = \tx_i(v_i, h_i)$. \\
3. The game concludes with the realization of the full history $h = (p_0, x_1, \dots, p_{n-1}, x_n)$.
\end{definition}

\begin{figure*}[t]
\centering
\includegraphics[width=\textwidth]{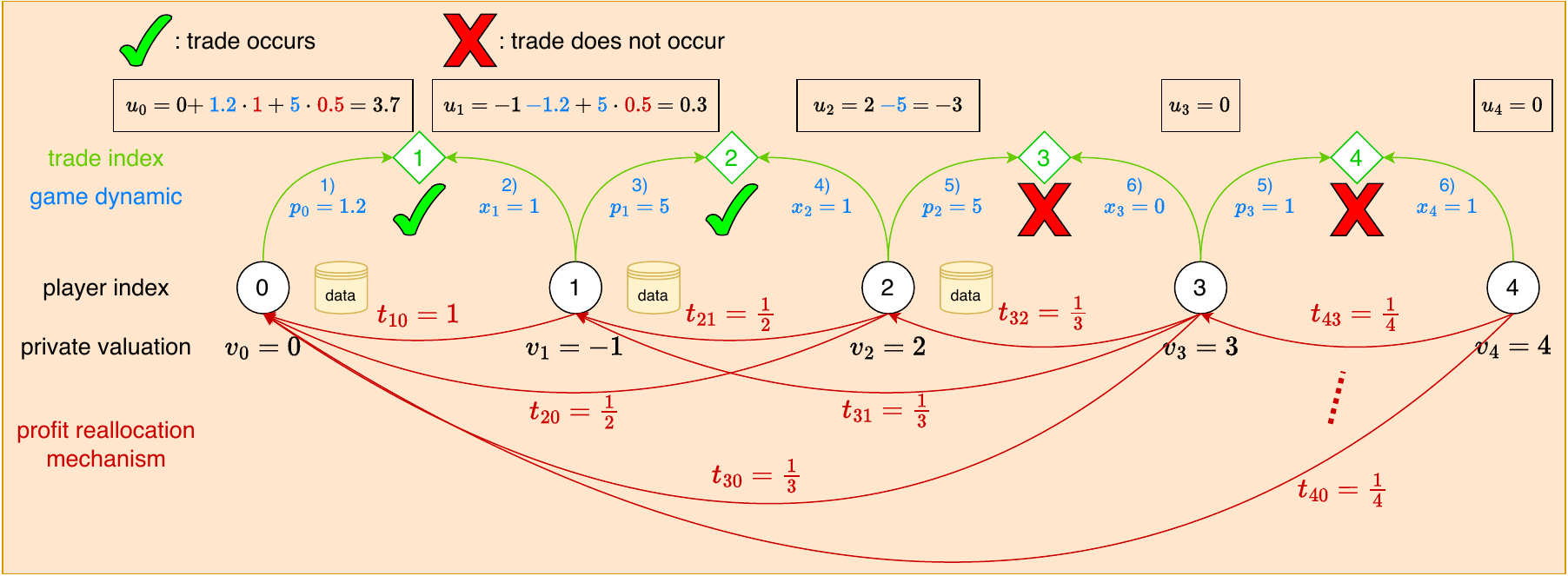}
\caption{A running example with $n=4$. Best viewed in colors (same for remaining figures).}
\label{fig:example}
\end{figure*}

\subsection{Profit Reallocation Mechanism}\label{subsec:mechanism}


A \emph{profit reallocation mechanism} is formally defined by a nonnegative matrix $\bmt\in\bbR_+^{n\times (n+1)}$, which specifies how profits from trades are shared among players.
Specifically, $t_{ij}$ is the proportion of profit from \emph{trade} $i$ allocated to player $j$: if trade $i$ occurs at price $p_{i-1}$, the mechanism first charges player $i$ the payment $p_{i-1}$ and then reallocates this profit among players $0$ to $i-1$. Player $j$ receives $p_{i-1}\cdot t_{ij}$ from the mechanism.
The model is therefore specified by $\mathrm{Model}=(n,\bmt,\calF)$.

To ensure the proposed mechanism is well-defined and economically viable, we impose the following two natural conditions:

\begin{definition}[Budget Feasibility]
\label{def:mechanism:1}
The total proportion of profit redistributed from any trade $i$ must not exceed the realized profit. Formally, for all $i \in [n]$, $\sum_{j=0}^n t_{ij} \leq 1$.
\end{definition}

\begin{definition}[Past Only]
\label{def:mechanism:2}
Only players who participate in trades \emph{prior to} trade $i$ are eligible to receive a portion of the reallocated profit from trade $i$. This implies $t_{ij} = 0$ for all $j \geq i$.
\end{definition}

The baseline mechanism, in which the seller (player $i-1$) retains the entirety of the profit in trade $i$, constitutes a special case where $t_{i,i-1}=1$ and $t_{ij}=0$ for all $j \neq i-1$.

\paragraph{Comparison to TBDS \citep{xin2025tbds}.}
Transaction-Based Data Sharing (TBDS) reallocates profits along a chain using a single parameter $\alpha\in[0,1]$: 
when player $i-1$ earns profit from player $i$, an $\alpha$ fraction is redistributed to its upstream seller (player $i-2$, if any). Therefore, TBDS is a special case of our mechanism where $t_{ij} = (1-\alpha)\cdot \alpha^{i-j-1}$ if $0<j<i$ and $t_{i0} = \alpha^{i-1}$.

\subsection{Utility and Solution Concept}\label{subsec:solution}

Given a profit reallocation mechanism $\bmt$ and the full history $h$, 
the deterministic utility of player $\iinn$ with private valuation $v_i$ is
\begin{equation}\label{eq:ui-history}
U_i(h; v_i) = (v_i-p_{i-1})\,\prod_{k=1}^i x_k
+ \sum_{j=i+1}^n (t_{ji} p_{j-1}\prod_{k=1}^j x_k).
\end{equation}
Recall that $\prod_{k=1}^j x_k = y_j \in \{0,1\}$ is an indicator that trade $j$ occurred, and $h = (p_0, x_1, \dots, p_{n-1}, x_n)$ is the full game history.
When $i=0$, we simply let $p_{-1} = 0$.
In \cref{eq:ui-history}, the first term is the surplus of player $i$ from trade $i$.
The second term represents the profit reallocated to player $i$ from \emph{subsequent trades} $j > i$.
(By \cref{def:mechanism:2}, player $i$ does not benefit from previous trades $j<i$.)
Specifically, when trade $j$ occurs ($\prod_{k=1}^j x_k$) at price $p_{j-1}$, player $i$ receives a reallocated profit of $t_{ji} p_{j-1}$.
We define social welfare as the sum of players' utilities, \ie, $\SW(h;\bmv) = \sum_{i=0}^n u_i(h;v_i)$.

\begin{example}
We provide an example in \cref{fig:example} with $n=4$ where players $0,1,2,3,4$ engage in trades $1,2,3,4$.
The valuation profile, the profit reallocation mechanism, and all players' actions are illustrated in the figure; together, they uniquely determine all players' utilities.
Although player $4$ chooses $x_4=1$ (willingness to buy), trade $4$ does not occur because an upstream trade (trade $3$) fails. Consequently, player $4$ obtains zero utility.
\end{example}

A critical issue is which solution concept best predicts market outcomes.
Because trades occur on the platform, players typically do not know one another, making collusion unlikely.
Therefore, 
we adopt \emph{sequential equilibria}~\cite{kreps1982sequential} as a natural solution concept for non-cooperative games with sequential, incomplete information and individual utility maximization. 
This concept requires each player to maximize her expected utility given the information available to her.
We leave the study of other solution concepts (\ie, anti-collusion and sybil-proofness) to future work.

Specifically, let the players' belief profile be $\tilde{\bm\mu} = \{\tmu_i\}_{0\le i\le n}$, where $\tilde{\mu}_i(\cdot|v_i, h_i) \in \Delta(\calV)$ represents player $i$'s belief about the valuation profile $\bmv$ based on her current information, including her private valuation $v_i$ and the observed history $h_i$.

\begin{definition}[Sequential equilibria~\cite{kreps1982sequential}]
\label{def:equilibrium}
A sequential equilibrium consists of a strategy profile and a belief profile
$(\tilde{\bmx},\tilde{\bmp},\tilde{\bmmu})$ satisfying the following two conditions:
\begin{itemize}[left=0em]
\item \textbf{Rationality}:
For every player $i$ and every information set $(v_i, h_i)$, the action pair $(x_i = \tx_i(v_i,h_i), p_i = \tp_i(v_i,h_i))$ maximizes her expected utility under the belief $\tmu_i(\cdot | v_i, h_i)$:
\begin{align}
\label{eq:sequential-eq}
\bbE_{\bmv\sim \tmu_i(\cdot|v_i, h_i)}[U_i(h;v_i)] 
\geq \bbE_{\bmv\sim \tmu_i(\cdot|v_i, h_i)}[U_i(h';v_i)],
\end{align}
for all deviating actions $(x'_i, p'_i)$, where the full history $h$ (\resp, $h'$) is generated according to \cref{def:game-evolve} under valuation profile $\bmv$ when player $i$ takes action $(x_i, p_i)$ (\resp, $(x'_i, p'_i)$) and all other players follow $(\tbmx_{-i},\tbmp_{-i})$.
In other words, each player maximizes expected utility based on her posterior beliefs, as is standard in the economics literature \citep{MWG}.

\item \textbf{Posterior Belief:} For every player $i$, the belief $\tmu_i(\cdot|v_i,h_i)$ is the posterior of $\bmv$ induced by the prior $\calF$, the observed information $(v_i,h_i)$, and the strategy profile $(\tbmx,\tbmp)$.
\end{itemize}
\end{definition}

Under the strategy and belief profiles $(\tilde{\bmx},\tilde{\bmp},\tilde{\bm\mu})$, player $i$'s expected utility from choosing action pair $(x_i,p_i)$ given information $(v_i, h_i)$ is
\begin{equation}
\label{eq:utility}
u_i(x_i, p_i;v_i,h_i) = \big(\prod_{j\le i} x_j\big)
\Big[(v_i-p_{i-1})+\bbE_{\bmv_{i+1:n}\sim \tmu_i(\cdot|v_i,h_i)}
\big(\sum_{i < j \le n} t_{ji} p_{j-1}\prod_{i< k \le j} x_{k}\big)\Big],
\end{equation}
where $x_k=\tilde x_k(v_k,h_k)$, $p_k=\tilde p_k(v_k,h_k)$, and $h_k$ is recursively defined by \cref{eq:hi} for $k>i$.
Without loss of generality, we view $\tmu_i$ as a belief over $\bmv_{i+1:n}$, since player $i$'s expected utility in \cref{eq:utility} does not depend on $\bmv_{0:i-1}$.

\section{Equilibrium Computation}
\label{sec:algo}

This section presents algorithms for computing a sequential equilibrium given $n$, a mechanism $\bmt$, and a valuation distribution $\calF$.
In the original game, each player $i$'s strategy is a high-dimensional function, which is difficult to analyze and compute.
We first establish a crucial simplification (\cref{subsec:simplification}) for sequential equilibrium under the Markovian assumption (\cref{def:markov}).
Building on this simplification, we provide a polynomial-time exact algorithm when valuations have finite support (\cref{subsec:exact-finite}).
We then provide an FPTAS for the more general case of continuous valuation support (\cref{subsec:approx-continuous}).

\subsection{Equilibrium Simplification}
\label{subsec:simplification}


We first simplify equilibrium strategies to single-variable functions.
In a sequential equilibrium, player $i$'s posterior belief depends only on $v_i$, regardless of the history $h_i$ and strategy profile $(\tbmx,\tbmp)$.

\begin{fact}
\label{fact:simplify}
Let $\calF$ satisfy \cref{def:markov} and $(\tbmx,\tbmp,\tbmmu)$ be a sequential equilibrium.
We have $\tmu_i(\bmv_{i+1:n}|v_i,h_i) = \calF(\bmv_{i+1:n}|v_i)$.
\end{fact}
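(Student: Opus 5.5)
The plan is a conditional-independence argument. The history $h_i$ is generated by the strategy profile as a deterministic function of $\bmv_{0:i}$. Under the Markovian assumption, $\bmv_{i+1:n}$ is conditionally independent of $\bmv_{0:i-1}$ given $v_i$. So conditioning additionally on $h_i$ should not change the conditional law of the future valuations.

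\textbf{Step 1: $h_i$ is a function of $\bmv_{0:i}$.} By \cref{def:game-evolve}, I will show by induction on $k$ that each $p_{k}=\tp_k(v_k,h_k)$ and $x_k=\tx_k(v_k,h_k)$ is a deterministic function of $\bmv_{0:k}$. Consequently $h_i=(p_0,x_1,\dots,x_{i-1},p_{i-1})$ is a measurable function $g_i(\bmv_{0:i-1})$, for fixed strategies. It in fact depends only on $\bmv_{0:i-1}$, since $p_{i-1}$ uses $v_{i-1}$ and $h_{i-1}$. I will assume pure strategies, as in the paper's definition. With mixed strategies, the independent randomization would be appended, and the argument is unchanged.

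\textbf{Step 2: Markov factorization.} From \cref{def:markov}, the joint density factors as
\[
f(\bmv)=f_0(v_0)\prod_{k=1}^n f_k(v_k\mid v_{k-1}).
\]
Hence the conditional density of $\bmv_{i+1:n}$ given $\bmv_{0:i}$ is $\prod_{k>i} f_k(v_k\mid v_{k-1})$. This depends only on $v_i$ and equals $\calF(\bmv_{i+1:n}\mid v_i)$.

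\textbf{Step 3: Bayes' rule.} The posterior belief condition asks for the law of $\bmv_{i+1:n}$ given the event $\{v_i,\ g_i(\bmv_{0:i-1})=h_i\}$. Write this as the conditional expectation, over $\bmv_{0:i-1}$ restricted to the set $\{g_i=h_i\}$ and given $v_i$, of the Step 2 conditional law. The integrand $\calF(\bmv_{i+1:n}\mid v_i)$ does not depend on $\bmv_{0:i-1}$, so it factors out. The remaining normalized integral equals one, which gives $\tmu_i(\bmv_{i+1:n}\mid v_i,h_i)=\calF(\bmv_{i+1:n}\mid v_i)$.

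\textbf{Main obstacle: off-path histories.} The difficulty is histories $h_i$ that have zero probability under the prior and the strategies, where Bayes' rule is undefined. Examples are a deviating price, or continuous supports where every history is null.
\begin{itemize}
\item For continuous supports, I would phrase Step 3 with regular conditional distributions. The identity then holds for almost every $(v_i,h_i)$, and one can choose a version for which it holds everywhere.
\item For genuinely off-path histories, sequential equilibrium defines beliefs as limits of posteriors along fully mixed strategy perturbations. Each perturbed profile still makes $h_i$ independent of $\bmv_{i+1:n}$ given $v_i$, because the history is generated from $\bmv_{0:i-1}$ plus independent trembles. So each perturbed posterior equals $\calF(\cdot\mid v_i)$, and so does the limit.
\end{itemize}
I would present this consistency argument as the key point.
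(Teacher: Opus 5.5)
Your proposal is correct and takes essentially the same approach as the paper, which justifies the fact in one line: it follows from the Markov property together with the observation that, for fixed strategies and fixed $v_i$, the history $h_i$ (a function of $\bmv_{0:i-1}$) is independent of $\bmv_{i+1:n}$. Your Steps 1--3 make this precise, and your remarks on off-path histories (via trembles) and on continuous supports (via regular conditional distributions) fill in points the paper leaves implicit.
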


\cref{fact:simplify} follows from \cref{def:markov} and the observation that, for any fixed strategy profile and fixed $v_i$, $h_i$ is independent of $\bmv_{i+1:n}$.
We next present the core simplifications, established by \cref{lem:simplify-sell} and \cref{lem:simplify-buy}.

\begin{restatable}{lemma}{lemSimplifySell}
\label{lem:simplify-sell}
Let $\calF$ satisfy \cref{def:markov}.
There exists a sequential equilibrium $(\tbmx,\tbmp,\tbmmu)$ such that the following holds for some functions $\{\tx^\dagger_i(v,p)\}_\iinn$ and $\{\tp^\dagger_i(v)\}_{0\le i < n}$:
\begin{equation}
\label{eq:lem:simplify-sell}
\begin{aligned}
\tx_i(v_i, h_i) = \tx^\dagger_i(v_i, p_{i-1}) ,\quad \forall v_i, h_i, \forall 1\le i\le n\quad
\tp_i(v_i, h_i) = \tp^\dagger_i(v_i) ,\quad \forall v_i, h_i, \forall 0\le i < n
\end{aligned}
\end{equation}

\end{restatable}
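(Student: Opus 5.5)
We construct the equilibrium by backward induction over stages, looking for strategies that depend only on the current valuation and, for buyers, the current price. By \cref{fact:simplify}, for any strategy profile the posterior of player $i$ over $\bmv_{i+1:n}$ given $(v_i,h_i)$ is $\calF(\bmv_{i+1:n}\mid v_i)$, whatever $h_i$ is. So we fix the belief profile $\tmu_i(\cdot\mid v_i,h_i)=\calF(\cdot\mid v_i)$ once and for all. The posterior-belief condition then holds automatically, and only rationality remains to be checked.

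The induction runs downward in $i$ and builds functions $\tx^\dagger_i(v,p)$, $\tp^\dagger_i(v)$ together with continuation values $W_i(v)$ for $i=n,n-1,\dots,0$. Here $W_i(v)$ is the expected reallocated profit that player $i$ receives from trades $j>i$, conditional on trade $i$ having occurred and on $v_i=v$, when everyone downstream follows the daggered strategies. By \cref{eq:utility}, once trade $i$ has occurred this profit depends only on the downstream players' actions. Under the inductive hypothesis those actions are determined by $\bmv_{i+1:n}$ and $p_i$ alone, with $p_{j}=\tp^\dagger_j(v_j)$ and $x_{j+1}=\tx^\dagger_{j+1}(v_{j+1},p_j)$, and by the Markov property (\cref{def:markov}) their distribution given $v_i$ does not involve the earlier history.

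For the seller side, player $i$ chooses $p_i$ to maximize
\[
\Phi_i(v_i,p)\coloneqq \bbE_{\bmv_{i+1:n}\sim\calF(\cdot\mid v_i)}\Big[\sum_{i<j\le n} t_{ji}\,p_{j-1}\prod_{i<k\le j}x_k\Big],
\]
with $p_i=p$ plugged in. This depends only on $(v_i,p)$. We set $\tp^\dagger_i(v_i)\in\argmax_p \Phi_i(v_i,p)$ and $W_i(v)=\max_p\Phi_i(v,p)$.

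For the buyer side, the key observation concerns the factor $\prod_{j\le i}x_j$ in \cref{eq:utility}. If some earlier $x_j=0$, player $i$'s utility is $0$ whatever she does, so any action is optimal. If all earlier trades occurred, buying yields $v_i-p_{i-1}+W_i(v_i)$ and not buying yields $0$. We therefore set $\tx^\dagger_i(v_i,p_{i-1})=\one[v_i-p_{i-1}+W_i(v_i)\ge 0]$. This choice is optimal at every history, including histories where $y_{i-1}=0$, since those are indifferent. Similarly, the pricing choice only matters through the product $\prod_{j\le i} x_j$ multiplying the bracket, so $\tp^\dagger_i(v_i)$ is optimal at every history.

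The main obstacle is existence of the maximizer in the pricing step. The function $\Phi_i(v,\cdot)$ need not be continuous, because the threshold rule $\tx^\dagger_{i+1}$ makes acceptance probabilities jump. The tie-breaking in favour of buying, i.e.\ accepting when $v-p+W\ge0$, makes the acceptance event $\{p\le v_{i+1}+W_{i+1}(v_{i+1})\}$ closed in $p$, so that $p\mapsto \Pr[\text{accept}]\cdot p$ is upper semicontinuous. I would try to propagate upper semicontinuity and boundedness (or restrict attention to a compact price range using bounded valuation support) inductively to obtain a maximizer. If this fails in general, I would fall back on the finite or bounded supports used later in the paper. Measurability of the selected $\tp^\dagger_i$ can then be obtained by a measurable selection argument, for instance by taking the largest maximizer.
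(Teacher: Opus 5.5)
Your proposal is correct and follows essentially the paper's own route: backward induction in which the Markov property (Fact~\ref{fact:simplify}) makes the continuation term depend only on $(v_i,p_i)$, histories with an earlier failed trade are payoff-irrelevant, and single-variable optimal rules are chosen stage by stage (you also build in the threshold form, which the paper derives in the next corollary). The one place you go beyond the paper is existence of the pricing maximizer, which the paper only asserts (``clearly there always exists''): your closed-acceptance-set, upper-semicontinuity and compact-price-range argument is the right repair under bounded support, and some such hypothesis is genuinely needed, since with independent valuations, $t_{n,n-1}=1$ and $\Pr[v_n\ge p]=(1-e^{-p})/p$, seller $n-1$'s revenue $p\Pr[v_n\ge p]=1-e^{-p}$ never attains its supremum and no sequential equilibrium exists.
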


Due to space constraints, we defer the full proof of \cref{lem:simplify-sell}, together with the proofs of other statements, to \cref{sec:omitted-proofs}.


\begin{restatable}{corollary}{lemSimplifyBuy}
\label{lem:simplify-buy}
Let $\calF$ satisfy \cref{def:markov}.
There is a sequential equilibrium $(\tbmx,\tbmp,\tbmmu)$ such that the conclusions of \cref{lem:simplify-sell} hold and further, for each player $\iinn$, $\tx^\dagger_i(v_i,p_{i-1})$ is a threshold strategy:
\begin{equation}
\label{eq:lem:simplify-buy:1}
\begin{aligned}
\tilde x^\dagger_i(v_i, p_{i-1})=
\begin{cases}
1, & \text{if } \ts^\dagger_i(v_i)\ge p_{i-1}\\
0, & \text{otherwise.}
\end{cases}
\end{aligned}
\end{equation}
for some function $\ts^\dagger_i(v_i)$.
Specifically, $\ts^\dagger_i(v_i)$ has the form:
\begin{equation}
\label{eq:lem:simplify-buy:2}
\begin{aligned}
\ts^\dagger_i(v_i) = v_i + \max_{p_i \in \bbR_+} \bbE_{\bmv_{i+1:n}\sim \calF(\cdot|v_i)}
\Big[\sum_{i<j\leq n} \Big( t_{ji} p_{j-1} \prod_{i<k\leq j}x_k \Big)\Big],
\end{aligned}
\end{equation}
where the maximum value is achieved by $\tp^\dagger_i(v_i)$. $\{p_j,x_j\}_{j > i}$ in \cref{eq:lem:simplify-buy:2} is determined by $\bmv_{i+1:n}$ and game dynamic as in \cref{def:game-evolve}.
\end{restatable}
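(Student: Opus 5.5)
We start from the equilibrium of \cref{lem:simplify-sell}, built by backward induction, and show that each buyer's decision is a threshold rule in the offered price. The plan has two parts: first rewrite player $i$'s expected utility as an affine function of $p_{i-1}$; then read off the threshold and check that the resulting profile still satisfies \cref{def:equilibrium}.

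\textbf{Step 1: factor the utility.} By \cref{fact:simplify}, player $i$'s belief over $\bmv_{i+1:n}$ is $\calF(\cdot\mid v_i)$, whatever the history $h_i$ and the realized $p_{i-1}$. By \cref{eq:lem:simplify-sell}, downstream behavior uses only $x_k=\tx^\dagger_k(v_k,p_{k-1})$ and $p_k=\tp^\dagger_k(v_k)$. Hence once player $i$ has fixed her price $p_i$, the continuation variables $\{p_j,x_j\}_{j>i}$ are determined by $p_i$ and $\bmv_{i+1:n}$ alone; they do not depend on $p_{i-1}$ or on $h_i$.

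Write
\[
R_i(p_i;v_i)=\bbE_{\bmv_{i+1:n}\sim\calF(\cdot\mid v_i)}\Big[\sum_{i<j\le n} t_{ji}p_{j-1}\prod_{i<k\le j}x_k\Big].
\]
Then \cref{eq:utility} becomes
\[
u_i(x_i,p_i;v_i,h_i)=\Big(\prod_{j<i}x_j\Big)\,x_i\big[v_i-p_{i-1}+R_i(p_i;v_i)\big].
\]

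\textbf{Step 2: choose the actions.} Player $i$ acts only when $\prod_{j<i}x_j=1$; on the other information sets any action is optimal, and we are free to prescribe the same rule there. The pricing action maximizes $R_i(\cdot;v_i)$ independently of $x_i$ and $p_{i-1}$, so we take $\tp^\dagger_i(v_i)\in\argmax_{p_i} R_i(p_i;v_i)$.

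The buying action is optimal exactly when $x_i=1$ if $v_i+\max_{p_i}R_i(p_i;v_i)>p_{i-1}$, and $x_i=0$ if the inequality is reversed. Defining $\ts^\dagger_i(v_i)=v_i+\max_{p_i}R_i(p_i;v_i)$ gives \cref{eq:lem:simplify-buy:2}. At equality the player is indifferent, so we break ties toward buying, which yields \cref{eq:lem:simplify-buy:1}.

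\textbf{Step 3: backward induction and consistency.} We build these rules backward from $i=n$, where $R_n\equiv 0$ and $\ts^\dagger_n(v_n)=v_n$. When $R_i$ is defined, the downstream rules are already fixed threshold and pricing functions, so the definition is well posed. The construction is thus the same backward induction used for \cref{lem:simplify-sell}, with the tie-breaking made explicit. Beliefs are set by \cref{fact:simplify}, so the posterior-belief condition holds automatically.

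\textbf{Main obstacle: existence of the maximizer.} The hardest step is showing that $\max_{p_i\in\bbR_+}R_i(p_i;v_i)$ is attained, since prices are continuous and downstream acceptance is discontinuous in price. Because player $i+1$ accepts exactly when $p_i\le\ts^\dagger_{i+1}(v_{i+1})$, and ties are broken toward buying, the acceptance event is a closed set in $p_i$.

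We would first show that $R_i$ is upper semicontinuous in $p_i$. Only trades $j\ge i+1$ depend on $p_i$, and all of them require trade $i+1$ to occur. Hence $R_i(p_i)=\bbE[\one\{\ts^\dagger_{i+1}(v_{i+1})\ge p_i\}\,G(p_i,v_{i+1})]$, where $G$ collects $t_{i+1,i}p_i$ and the later reallocations. The later reallocations do not depend on $p_i$ once trade $i+1$ happens, because $\tp^\dagger_{i+1}$ depends only on $v_{i+1}$. Since $G\ge 0$ and is continuous in $p_i$, $R_i$ is upper semicontinuous.

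We would then show that $R_i(p_i)=0$ for $p_i$ above the essential supremum of $\ts^\dagger_{i+1}$, assuming bounded valuations as in the discrete and continuous settings. This confines the maximization to a compact interval, where an upper semicontinuous function attains its maximum. If valuations are unbounded, the statement would need a supremum in place of the maximum, or an added boundedness assumption.
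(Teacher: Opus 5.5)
Your proof is correct and follows the same argument as the paper. Like the paper, you restrict to the branch where trade $i-1$ occurred and write the utility as $x_i\bigl[v_i-p_{i-1}+R_i(p_i;v_i)\bigr]$, with $R_i$ independent of $p_{i-1}$ and $h_i$. You then break ties toward buying to get the threshold $\ts^\dagger_i(v_i)=v_i+\max_{p_i}R_i$ and take $\tp^\dagger_i$ as a maximizer, building on the structure from \cref{lem:simplify-sell}.

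The paper takes attainment of the maximum for granted. Your upper-semicontinuity-plus-compactness step, which is exactly where tie-breaking toward buying matters, fills a point the paper leaves implicit. Your caveat is also genuine: with unbounded valuations the maximum can fail to exist. For example, with $n=1$, $t_{10}=1$ and $\Pr[v_1\ge p]=1/(1+p)$, the seller's revenue $p/(1+p)$ has no maximizer, so no sequential equilibrium exists.
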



$\ts^\dagger_i(v_i)$ in \cref{lem:simplify-buy}
represents player $i$'s expected gain from buying the data from player $i-1$ and setting the optimal price $p_i$ to player $i+1$.
\cref{lem:simplify-buy,lem:simplify-sell} allow us to restrict attention to sequential equilibria with simplified, single-variable expressions $(\tp^\dagger_i(v_i),\ts^\dagger_i(v_i))$ in \cref{eq:lem:simplify-sell,eq:lem:simplify-buy:1}.
Therefore, with a slight abuse of notation, we use $\tp_i(v_i)$ and $\ts_i(v_i)$ to denote player $i$'s pricing and threshold strategies, $(\tbmp,\tbms) = (\{\tp_i\}_{0\le i < n}, \{\ts_i\}_\iinn)$ to denote a strategy profile, and $(\tbmp^*,\tbms^*)$ to denote an equilibrium strategy profile. The original strategy profile $(\tbmx,\tbmp,\tbmmu)$ can be recovered from $(\tbmp,\tbms)$ using \cref{eq:lem:simplify-sell,eq:lem:simplify-buy:1,fact:simplify}.

Given a strategy profile $(\tbmp,\tbms)$, we define an auxiliary gain function for player $i$ from buying the data and setting price $p_i$ to player $i+1$:
\begin{equation}
\label{eq:lem:simplify-buy:3}
g_i(v_i,p_i) = v_i + \bbE_{\bmv_{i+1:n}\sim \calF(\cdot|v_i)}\Big[\sum_{i<j\leq n}\Big( t_{ji} p_{j-1} \prod_{i<k\leq j}x_k\Big)\Big].
\end{equation}
We denote $g_i(v_i,p_i)$ as player $i$'s gain function.
By \cref{lem:simplify-buy}, a sufficient condition for $(\tbmp,\tbms)$ to form a sequential equilibrium is
\begin{equation}
\label{SE:transform}
\tp_i(v_i) \in \argmax_{p_i \in \bbR_+} g_i(v_i,p_i), \quad \ts_i(v_i) = g_i(v_i,\tp_i(v_i)), \quad 0 \le i \le n
\end{equation}
In the sequel, we focus on computing sequential equilibria that satisfy \cref{SE:transform}.



\subsection{Polynomial Algorithm for Finite Valuation Support}
\label{subsec:exact-finite}

We first consider the case where the model has finite valuation support and provide a polynomial-time algorithm for computing an exact sequential equilibrium.


\begin{assumption}[Finite Valuation Support]\label{asp:finite}
We say that $\calF$ has finite valuation support if $\Pr[\forall i, v_i\in W]=1$ for $\bmv\sim\calF$, where $W=\{w_1,\ldots,w_L\}\subseteq \bbR$.
\end{assumption}

Under \cref{asp:finite}, player $i$'s strategy can be represented by finitely many entries: $p^*(i,k) \coloneqq \tp^*_i(w_k)$ and $u^*(i,k) \coloneqq \ts^*_i(w_k)$ for $\iinn, \kinL$.
The conditional distribution $f_i(v_i|v_{i-1})$ also has finite representation:
$Q = \{q_{i,k,\ell}\}_{i\in[n], k,\ell\in[L]}$, where $q_{i,k,\ell}\coloneqq f_i(v_i = w_\ell|v_{i-1} = w_k)$ is the conditional density of $v_i$ given $v_{i-1} = w_k$, satisfying $\sum_{\ell\in[L]}q_{i,k,\ell}=1$ for each $(i,k)$.

\begin{restatable}{theorem}{thmAlgoExact}
\label{thm:algo-exact}
Let \cref{def:markov,asp:finite} hold.
Then, given the model representation $(n,L,W,\bmt, Q)$,
\Cref{alg:exact} (deferred to \cref{sec:omitted-algo}) outputs an exact sequential equilibrium in $O(n^2L^4)$ running time.
\end{restatable}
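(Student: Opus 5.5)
The plan is backward induction over players $i=n,n-1,\dots,0$. For each $i$ we compute the table entries $p^*(i,k)$ and $u^*(i,k)$ so that \cref{SE:transform} holds; by \cref{lem:simplify-buy} and \cref{fact:simplify}, the resulting $(\tbmp^*,\tbms^*)$ then extends to a sequential equilibrium.

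\textbf{Base case.} Player $n$ receives no reallocated profit, so $u^*(n,k)=w_k$.

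\textbf{Inductive step.} Suppose the strategies of all players $j>i$ are already fixed. Then $g_i(w_k,p)$ in \cref{eq:lem:simplify-buy:3} is a function of $p$ alone.

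The first key step is to restrict the price search to finitely many candidates. Buyer $i+1$ accepts iff $u^*(i+1,\ell)\ge p$, where $\ell$ indexes $v_{i+1}=w_\ell$. Let $R_i(i+1,\ell)$ denote the expected reallocated profit that player $i$ receives from trades $j>i+1$, given $v_{i+1}=w_\ell$ and that trade $i+1$ occurs. This quantity does not depend on $p_i$, because the downstream strategies depend only on the downstream valuations. Hence
\[
g_i(w_k,p)=w_k+\sum_{\ell:\,u^*(i+1,\ell)\ge p} q_{i+1,k,\ell}\bigl(t_{i+1,i}\,p+R_i(i+1,\ell)\bigr).
\]
Between consecutive thresholds the acceptance set is constant, and since $t_{i+1,i}\ge 0$ the function is nondecreasing in $p$ there. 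So a maximizer over $\bbR_+$ exists, and it lies in the candidate set
\[
C_i=\{0\}\cup\{u^*(i+1,\ell):\ u^*(i+1,\ell)\ge 0\}.
\]
Two details make this work. First, ties are resolved in favor of buying (the threshold rule uses $\ge$), so the supremum on each piece is attained at its right endpoint. Second, prices above every threshold yield only $w_k$, which is at most the value at $p=0$ when some threshold is nonnegative. We then set $p^*(i,k)$ to be a maximizer over $C_i$ and $u^*(i,k)=g_i(w_k,p^*(i,k))$.

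\textbf{Inner DP.} The second step, which I expect to be the main obstacle, is computing the quantities $R_i(j,\ell)$ efficiently. Naive expansion over downstream valuation paths is exponential. For fixed $i$ we instead run a backward recursion over $j=n,\dots,i+1$:
\[
R_i(j,\ell)=\sum_{m:\,u^*(j+1,m)\ge p^*(j,\ell)} q_{j+1,\ell,m}\bigl(t_{j+1,i}\,p^*(j,\ell)+R_i(j+1,m)\bigr),\qquad R_i(n,\ell)=0.
\]
This uses the Markov property (\cref{def:markov}) and the fact that downstream players use the already-computed single-variable strategies. It forms the inner level of a two-level DP, with the outer level being the induction over $i$. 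I would argue its correctness by induction on $j$, unrolling the product $\prod_{i<k\le j}x_k$ one trade at a time.

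\textbf{Running time.}
\begin{itemize}
\item For each $i$, the inner DP costs $O(nL^2)$.
\item Evaluating $g_i(w_k,\cdot)$ at all $|C_i|\le L+1$ candidates for all $k$ costs $O(L^3)$.
\end{itemize}
The total is therefore $O(n^2L^2+nL^3)$, which lies within the claimed $O(n^2L^4)$ bound. The bound still holds if \cref{alg:exact} recomputes sums more naively, for example by re-evaluating the inner DP once per candidate price.

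Correctness then follows because every entry satisfies \cref{SE:transform} exactly, so the output is an exact sequential equilibrium.
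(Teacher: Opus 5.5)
Your proposal is correct. Its outer structure matches the paper's proof: backward induction over $i$ with $u^*(n,k)=w_k$, reduction to \cref{SE:transform} via \cref{lem:simplify-buy}, and restriction of the seller's price to the next player's nonnegative thresholds (the paper's set $\calL$, with $p=0$ when $\calL=\emptyset$). Your explicit piecewise formula for $g_i(w_k,\cdot)$ makes the paper's terse ``raise the price to the next threshold'' argument cleaner, because it shows the downstream reallocation terms are constant on each interval between thresholds.

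Where you genuinely differ is the inner dynamic program. The paper runs a \emph{forward} recursion on reach probabilities $s_{i,k,k'}(j,\ell)=\Pr[v_j=w_\ell,\ \text{trades } i+1,\dots,j \text{ occur}]$. It initializes this with $q_{i+1,k,\ell}\one\{u^*(i+1,\ell)\ge u^*(i+1,k')\}$. It then converts these probabilities into reallocated profit through the identity $p^*(j,\ell)\Pr[X(j,\ell)]=r^*(j,\ell)\Pr[Y(j,\ell)]$, where $Y(j,\ell)$ is the reach event above, $X(j,\ell)$ additionally requires trade $j+1$, and $r^*(j,\ell)$ are precomputed one-shot revenues. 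The initial distribution depends on the seller's valuation index $k$ and the candidate price index $k'$, so this $O(nL^2)$ recursion is rerun for every triple $(i,k,k')$; that is exactly where $O(n^2L^4)$ comes from.

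Your \emph{backward} recursion instead computes the conditional continuation value $R_i(j,\ell)$. By the Markov property this value depends on neither $k$ nor $p_i$, so it is shared across all seller valuations and candidate prices, giving $O(n^2L^2+nL^3)$. The two recursions are forward and backward evaluations of the same Markov-chain expectation, so they produce identical $g_i$ values. That observation is what transfers your correctness argument to \cref{alg:exact} as written: its tables coincide with yours up to argmax tie-breaking, and your extra candidate $0$ is weakly dominated whenever $\calL\neq\emptyset$. Your remark about re-running the inner DP per candidate price correctly anticipates that algorithm's cost.

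The paper's route mirrors the stated algorithm line by line. Yours yields a better complexity bound and a more transparent justification of the finite price search.
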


\cref{thm:algo-exact} enables efficient evaluation of a given profit reallocation mechanism, and therefore supports the experimental comparison of different mechanisms.

\subsection{FPTAS for Continuous Valuation Support}
\label{subsec:approx-continuous}

For games with continuous valuation support, finding an exact sequential equilibrium is technically challenging because the game generally has no finite representation. 
We therefore seek an $\varepsilon$-approximate sequential equilibrium.


\begin{definition}[$\varepsilon$-approximate sequential equilibrium]\label{def:eps-equilibrium}
A strategy profile
$(\tbmp,\tbms)$ is an $\varepsilon$-approximate sequential equilibrium if, for every player $i$, it satisfies: (a) $g_i(v_i,\tp_i(v_i)) - \ts^\dagger_i(v_i) \ge - \varepsilon \cdot \max\{1, \ts^\dagger_i(v_i)\}$; and (b) $|\ts_i(v_i) - \ts^\dagger_i(v_i)| \le \varepsilon \cdot \max\{1, \ts^\dagger_i(v_i)\}$,
where $\ts^\dagger_i(v_i)$ and $g_i(v_i,p_i)$ are defined by \cref{eq:lem:simplify-buy:2,eq:lem:simplify-buy:3}, respectively.
\end{definition}

Note that \cref{def:eps-equilibrium} reduces to the definition of exact sequential equilibrium when $\varepsilon=0$, making \cref{def:eps-equilibrium} a meaningful measure of approximation.

We show that an FPTAS exists for computing an $\varepsilon$-approximate sequential equilibrium.
The result requires a smoothness condition on the conditional distribution $f_i(v_i|v_{i-1})$.


\begin{definition}[Log-Lipschitzness]\label{asp:lipschitz}
Under \cref{def:markov}, we say that $f_i$ is $K_0$-log-Lipschitz if, for each $\iinn$, $f_i(v_i | v_{i-1})$ is $K_0$-log-Lipschitz in $v_{i-1}$: for all $v_i>0$ and $a,b>0$,
$\bigl|\log f_i(v_i | a) - \log f_i(v_i | b)\bigr|\le K_0 \,| a - b|$.
\end{definition}


\begin{restatable}{theorem}{thmAlgoApprox}
\label{thm:algo-approx}
Under \cref{def:markov}, assume that $f_i$ is $K_0$-log-Lipschitz for all $\iinn$, and that each $v_i$ has support $[L_0,H_0]\subseteq \bbR$. 
Then, given the model representation $(n,\calF, \{t_{ij}\}_{0\le i,j\le n})$, \cref{alg:approx} (deferred to \cref{sec:omitted-algo}) outputs an $\varepsilon$-approximate sequential equilibrium in 
$\poly\bigl(n,\varepsilon^{-1},H_0 - L_0, K_0\bigr)$ time.
\end{restatable}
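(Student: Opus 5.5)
We plan to prove \cref{thm:algo-approx} by discretizing the continuous problem and running the exact backward-induction machinery of \cref{thm:algo-exact} on a grid. The error introduced by discretization will be controlled using the log-Lipschitz condition. Concretely, fix a grid $W_\delta=\{L_0, L_0+\delta, \ldots, H_0\}$ of size $L=O((H_0-L_0)/\delta)$.

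\textbf{Discretized model.} For each grid point $w_k$ we define a discretized conditional law $q_{i,k,\ell}$ by integrating $f_i(\cdot\mid w_k)$ over the cell around $w_\ell$. We would then compute, by backward induction from player $n$ down to player $0$, the quantities $p(i,k)$ and $u(i,k)$. The recursion is exactly the one in \cref{alg:exact}, using the gain function \cref{eq:lem:simplify-buy:3} and the characterization \cref{SE:transform}. For a general $v_i$ we extend the strategy by evaluating at the nearest grid point, setting $\tp_i(v_i)=p(i,k(v_i))$ and $\ts_i(v_i)=v_i-w_{k(v_i)}+u(i,k(v_i))$. The $v_i$ term enters the gain additively, so it is reproduced exactly. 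The price search is also effectively finite: the downstream response of player $i+1$ is a threshold on $\ts_{i+1}$, so the optimal price can be taken among the $L$ values $u(i+1,\ell)$ (and $0$). Thus each stage costs $\poly(L)$ and the whole procedure runs in $\poly(n,L)$ time.

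\textbf{Error propagation.} This is the main step. We compare the true continuation value
\[
\bbE_{\bmv_{i+1:n}\sim\calF(\cdot\mid v_i)}\Big[\sum_{j>i} t_{ji}p_{j-1}\prod_{i<k\le j}x_k\Big]
\]
with its discretized version. Two sources of error arise. The first is replacing $v_i$ by the nearby grid point $w_k$ in the conditioning. By $K_0$-log-Lipschitzness, the densities $f_{i+1}(\cdot\mid v_i)$ and $f_{i+1}(\cdot\mid w_k)$ differ by a multiplicative factor in $[e^{-K_0\delta},e^{K_0\delta}]$. Because later valuations are Markov, the multiplicative perturbation applies only to the first step and is then averaged through. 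Expectations of nonnegative quantities, namely reallocated profits since $t,p\ge 0$, therefore change by a factor $e^{\pm K_0\delta}$. The second source is the mismatch between downstream strategies evaluated on the grid and the players' true best responses. We would handle it by induction on $i$ from $n$ downwards, with the hypothesis that $\ts_j$ is within relative error $\eta_j$ of $\ts^\dagger_j$ and that $\tp_j$ is $\eta_j$-optimal. The multiplicative form of the error is what lets us use the relative measure $\max\{1,\ts^\dagger_i\}$ in \cref{def:eps-equilibrium}.

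\textbf{Main obstacle.} The delicate point is that the downstream acceptance events $x_k$ are discontinuous in prices and thresholds. A small error in $\ts_{j}$ can flip an acceptance, so errors need not compose additively. Our first attempt would be to compare seller $i$'s optimal value, rather than individual actions. Lowering the price by a factor $(1-\eta)$ guarantees acceptance by every buyer whose true threshold is within relative error $\eta$, while losing at most an $\eta$ fraction of revenue. This is a standard robustness-of-posted-prices argument, and it yields $\eta_i\le \eta_{i+1}+O(K_0\delta)$. Unrolling over $n$ stages gives total relative error $O(nK_0\delta)$. Choosing $\delta=\Theta(\varepsilon/(nK_0))$ then gives an $\varepsilon$-approximate sequential equilibrium. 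The grid size is $L=O(n K_0(H_0-L_0)/\varepsilon)$, so the total running time is $\poly(n,\varepsilon^{-1},H_0-L_0,K_0)$ via the $O(n^2L^4)$ bound of \cref{thm:algo-exact}.
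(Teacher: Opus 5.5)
There is a genuine gap, and your choice of lifting creates it. With $\ts_i(v_i)=v_i-w_{k(v_i)}+u(i,k(v_i))$, the lifted thresholds are no longer constant on cells. So the acceptance events $\{\ts_j(v_j)\ge \tp_{j-1}(v_{j-1})\}$ that occur in the continuous game differ from the events $\{u(j,k_j)\ge p(j-1,k_{j-1})\}$ under which the backward induction computed every upstream $u(i,\cdot)$ and $p(i,\cdot)$.

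In \cref{def:eps-equilibrium}, $g_i$ and $\ts^\dagger_i$ are evaluated against the \emph{given} lifted downstream strategies. So this inconsistency is exactly what must be controlled. The downstream players' own approximation quality, which your induction on $\eta_j$ tracks, does not enter player $i$'s conditions.

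The inconsistency cannot be controlled. Log-Lipschitzness regularizes $f_j(v_j\mid v_{j-1})$ only in $v_{j-1}$. The density of $v_j$ may therefore put almost all its mass on the thin sliver of a cell where acceptance flips, so the flipped probability is $\Theta(1)$, not $O(K_0\delta)$.
\begin{itemize}
\item Already for $n=1$ with cells centred at grid points, as you describe: let $v_1$ concentrate just below a grid point $w_m$. The discrete model prices at $w_m$ believing it sells almost surely. In the continuous game, where $\ts_1(v_1)=v_1$, it almost never sells, so condition (a) fails by roughly $t_{1,0}w_m$.
\item Right-open cells do not rescue you for $n\ge 2$. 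Suppose $c_\ell=u(j,\ell)-w_\ell$ is slightly larger at $\ell^*-1$ than at $\ell^*$, and $u(j,\ell^*-1)<0\le u(j,\ell^*)=p_{j-1}$. Then buyers in a sliver of cell $\ell^*-1$ accept in the continuous game, while the discrete model counts them as rejecting. With reallocation, this can shift an upstream player's expected reallocated profit by a constant, violating (b).
\end{itemize}
Your price-lowering argument does not repair this. It compares optimal values only, whereas (a) concerns the algorithm's actual output price $u(i+1,\ell^*)$, which sits exactly on a discontinuity. It also says nothing about upstream thresholds computed under the wrong events. Relatedly, with non-constant lifted thresholds you can no longer restrict continuous deviations to $\{0\}\cup\{u(i+1,\ell)\}$.

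The missing idea is to lift so that the continuous game reproduces the discrete one exactly at the level of cells. Use half-open cells $[w_k,w_{k+1})$, $k(v)=\lfloor (v-L_0)/\varepsilon_0\rfloor$, and $q_{i,k,\ell}=F_i(w_{\ell+1}\mid w_k)-F_i(w_\ell\mid w_k)$. Lift piecewise-constantly: $\ts_i(v_i)=u^*(i,k(v_i))$ and $\tp_i(v_i)=p^*(i,k(v_i))$. Then three things follow.
\begin{itemize}
\item For any fixed $p_i$, player $i$'s realized continuation payoff depends only on $(k_{i+1},\dots,k_n)$ and equals the discrete one, so no acceptance ever flips.
\item Deviations can be restricted to the finite set of downstream threshold values.
\item The only discrepancy left is in cell-tuple probabilities.
\end{itemize}
The discrete chain conditions every transition on a grid point, not only the first. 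Applying log-Lipschitzness at each of the $n$ transitions therefore places $\Pr_{\calF(\cdot\mid v_i)}[\cap_j\{v_j\in[w_{k_j},w_{k_j+1})\}]$ within the factors $e^{\mp nK_0\varepsilon_0}$ of $\prod_j q_{j,k_{j-1},k_j}$. Since payoffs are nonnegative, $g_i(v_i,p)$ lies within that factor of the discrete gain, up to the additive term $v_i-w_{k_i}\in[0,\varepsilon_0)$. This gives (a) and (b) directly for $\varepsilon_0=\Theta\bigl(\min(\varepsilon,1)/((1+nK_0)(1+\max(0,-L_0)))\bigr)$, with no error propagation across stages. The additive $\varepsilon_0$ threshold error you tried to remove is harmless; removing it is what breaks the proof.
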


The high-level idea of \cref{alg:approx} is to discretize the continuous valuation supports and compute exact equilibria in the discretized model. We prove \cref{thm:algo-approx} by showing that exact equilibria of the discretized model are approximate equilibria of the original model. 
\cref{thm:algo-approx} indicates that discretized models provide a tractable and provably accurate approximation for analyzing continuous models.


\section{Equilibrium Comparison between Profit Reallocation Mechanisms}
\label{sec:theoretic}

This section establishes key results about the equilibrium comparison between profit reallocation mechanisms.
\cref{thm:global} is our main finding: compared with the baseline mechanism ($t_{ij}=\one\{i=j+1\}$), profit reallocation mechanism expands the equilibrium trade event.

There may be multiple sequential equilibria under a mechanism $\bmt$, and some may not satisfy the simplification established in \cref{subsec:simplification}.
To make the equilibrium under $\bmt$ well-defined, we consider the simplified equilibrium ($\tbmp^*,\tbms^*$) established by \cref{lem:simplify-sell,lem:simplify-buy}, and require the tie-breaking rule to be independent of the mechanism $\bmt$.
This convention is primarily technical, but it is also behaviorally natural: players use the single-variable strategies characterized in \cref{subsec:simplification} rather than more complex history-dependent strategies, and their tie-breaking rules do not depend on the mechanism.
To ensure tractability, we also impose two assumptions:

\begin{assumption}[Positivity]
\label{asp:posi}
Positivity requires $\Pr_{\bmv \sim \calF}[v_i > 0] = 1$ for all $0 \le i\le n$.
\end{assumption}

\begin{assumption}[Homogeneity]
\label{asp:homo}
Let \cref{def:markov} hold.
Homogeneity requires that $\forall\alpha,w,v>0$ and $\iinn$, we have $F_i(\alpha v|\alpha w)=F_i(v|w)$.
\end{assumption}

The Positivity assumption states that the value generated from data always exceeds the processing cost.
The Homogeneity assumption is widely used in economic models, especially market models \citep[\S6.2]{AGT:nisan_algorithmic_2007}, \citep[\S3.2.3]{DSGE}.
It captures settings in which production output scales linearly with the input.

\subsection{Local Comparison between Mechanisms}
\label{subsec:partial}

We first compare the equilibria induced by two profit reallocation mechanisms $\bmt^1$ and $\bmt^2$ that differ locally. This result provides a building block for comparing any profit reallocation mechanism with the baseline mechanism by connecting them through finitely many locally different mechanism pairs.

\cref{thm:partial} is the main theorem of \cref{subsec:partial}. It shows that increasing future reallocated profit and decreasing one-shot profit for player $i$ expands the equilibrium trade event of trade $i+1$ while leaving other equilibrium trade events unchanged.

\begin{restatable}{theorem}{thmPartial}\label{thm:partial}
Let the valuation distribution satisfy \cref{def:markov,asp:posi,asp:homo}.
Fix player index $i$. Let $\bmt^1,\bmt^2\in\bbR_+^{n\times (n+1)}$ be two profit reallocation mechanisms that differ only in how revenue is allocated to player $i$.
Specifically: (a) $t^1_{j,i'}=t^2_{j,i'}$ for all $j>i'>i$; (b) $t^1_{j,i}\le t^2_{j,i}$ for all $j\ge i+2$; (c) $t^1_{i+1,i}\ge t^2_{i+1,i}$; (d) $t^k_{j,i'}=0$ for all $i'<i$, $k=1,2$ and $\jinn$ unless $j=i'+1$. 
Let $(\tbmp^{k,*},\tbms^{k,*})$ be the corresponding equilibria for $\bmt^k$, $k\in\{1,2\}$, and define $E_i^k$ as the set of valuation profiles under which player $i$ is willing to buy: $E_i^k = \{\bmv\in\calV: \one\{\ts^{k,*}_i(v_i) \ge \tp^{k,*}_{i-1}(v_{i-1})\}\}$.
Then: 1. $E_j^1 = E_j^2$ for $j \ne i+1$; 2. $E_{i+1}^1 \subseteq E_{i+1}^2$.
\end{restatable}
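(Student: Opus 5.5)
The plan is to show by backward induction that, under \cref{asp:posi,asp:homo}, the simplified equilibrium strategies are \emph{linear} in the player's own valuation: $\ts^{k,*}_j(v)=\sigma^k_j v$ and $\tp^{k,*}_j(v)=\pi^k_j v$. Every event $E^k_j$ then takes the form
\[
E^k_j=\{\bmv : v_j/v_{j-1}\ge r^k_j\},\qquad r^k_j=\pi^k_{j-1}/\sigma^k_j .
\]
The theorem reduces to comparing these ratio cutoffs.

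\textbf{Step 1 (linearity and scale-free events).} By Markovianity and homogeneity, for every $m>0$ the conditional law of $(v_{j+1}/v_j,\dots,v_n/v_j)$ given $v_j$ does not depend on $v_j$. This follows by chaining $F_{j+1}(\alpha v\mid\alpha w)=F_{j+1}(v\mid w)$ along the chain. Assume downstream strategies are linear. Then in $g_j(v_j,p_j)$ from \cref{eq:lem:simplify-buy:3} every later price is $\pi_{k-1}v_{k-1}$, and every indicator $x_k$ for $k\ge j+2$ is $\one\{v_k/v_{k-1}\ge r_k\}$. Write the choice variable as $p_j=\sigma_{j+1}v_j\rho$. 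Then $g_j(v_j,p_j)=v_j\bigl(1+\sigma_{j+1}\phi_j(\rho)\bigr)$, where $\phi_j$ does not depend on $v_j$. The maximizer $\rho^*_j$ is therefore a constant, chosen by a mechanism-independent tie-breaking rule such as the smallest maximizer. Hence $\pi_j=\sigma_{j+1}\rho^*_j$, $\sigma_j=1+\sigma_{j+1}\phi_j(\rho^*_j)$, and $r_{j+1}=\rho^*_j$. Positivity is what allows this normalization by $v_j$. This step also exposes the key invariance: the cutoff $r_{j+1}$ depends only on the \emph{shape} of $\phi_j$, not on the scale $\sigma_{j+1}$ of the buyer's threshold.

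\textbf{Step 2 (unchanged events).} For players $i'>i$, condition (a) and the mechanism-independent tie-breaking give identical $\phi_{i'}$ under $\bmt^1$ and $\bmt^2$. So $\sigma_j,\pi_j$ agree for $j>i$, and $E^1_j=E^2_j$ for $j\ge i+2$. For players $i'<i$, condition (d) makes $\phi_{i'}(\rho)=t_{i'+1,i'}\,\rho\,\Pr[v_{i'+1}/v_{i'}\ge\rho]$, which involves only the seller's direct share of trade $i'+1$. I will read ``differ only in how revenue is allocated to player $i$'' as saying this share agrees across the two mechanisms. Its maximizer does not depend on $\sigma_{i'+1}$ at all. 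So $r_{i'+1}$ coincides across the mechanisms for all $i'<i$, even though $\sigma_i$, and hence all upstream scales $\sigma_{i'}$, may differ. This gives $E^1_j=E^2_j$ for all $j\le i$.

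\textbf{Step 3 (the comparison for trade $i+1$).} Let $Z=v_{i+1}/v_i$. Downstream behaviour is common to both mechanisms, so I can write
\[
\phi^k_i(\rho)=a_k R(\rho)+\Psi_k(\rho),\qquad R(\rho)=\rho\Pr[Z\ge\rho],\quad a_k=t^k_{i+1,i}.
\]
Here $\Psi_k(\rho)=\bbE[Z\,C_k\,\one\{Z\ge\rho\}]$, with $C_k\ge0$ collecting $\sum_{j\ge i+2}t^k_{j,i}\pi_{j-1}(v_{j-1}/v_{i+1})\prod_{i+1<l\le j}x_l$, divided by $\sigma_{i+1}$. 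Each $\Psi_k$ is nonincreasing in $\rho$. Conditions (b) and (c) give $a_2\le a_1$, and $\Psi_2-\Psi_1=\bbE[Z\,\Delta C\,\one\{Z\ge\rho\}]$ with $\Delta C\ge0$, so this difference is also nonincreasing. Therefore $\phi^2_i-\phi^1_i=-(a_1-a_2)R+(\Psi_2-\Psi_1)$, and I want the revealed-preference inequality $\rho^2\le\rho^1$. Suppose instead $\rho^2>\rho^1$. Add the two optimality inequalities $\phi^1(\rho^1)\ge\phi^1(\rho^2)$ and $\phi^2(\rho^2)\ge\phi^2(\rho^1)$. This yields
\[
(a_1-a_2)\bigl(R(\rho^1)-R(\rho^2)\bigr)\ \ge\ (\Psi_2-\Psi_1)(\rho^1)-(\Psi_2-\Psi_1)(\rho^2)\ \ge 0 .
\]
The main obstacle is that $R$ is not monotone, so this does not immediately contradict. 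I will try to use the pointwise bound $\bbE[Z\one\{Z\ge\rho\}]\ge R(\rho)$, together with the extra structure that the reallocated revenue $C_k$ is itself at least proportional to a tail mass, in order to show that $\Psi_2-\Psi_1$ falls at least as fast as $(a_1-a_2)R$ on $[\rho^1,\rho^2]$. Any remaining case of equality is a tie, and the smallest-maximizer tie-breaking rule resolves it in favour of $\rho^2\le\rho^1$. With $r^k_{i+1}=\rho^k$, we get $E^1_{i+1}=\{Z\ge\rho^1\}\subseteq\{Z\ge\rho^2\}=E^2_{i+1}$.
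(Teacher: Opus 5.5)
Your Steps 1 and 2 follow the paper's route. The paper proves the same degree-one homogeneity of the simplified strategies (its sequential-invariance lemma). It then uses condition (a) to make the subgame from player $i+1$ onward identical under both mechanisms. It uses condition (d) to get $\tp^{k,*}_{j-1}(v)=\hat p^*_j\,\ts^{k,*}_j(1)\,v$, so that $E^k_j$ does not depend on the mechanism for $j\le i$.

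The gap is in Step 3, exactly where you flag the obstacle. Summing the two optimality inequalities leaves you with $(a_1-a_2)\bigl(R(\rho^1)-R(\rho^2)\bigr)\ge(\Psi_2-\Psi_1)(\rho^1)-(\Psi_2-\Psi_1)(\rho^2)\ge 0$. You plan to close this by showing, from the structure of $C_k$ and tail-mass bounds, that $\Psi_2-\Psi_1$ falls at least as fast as $(a_1-a_2)R$ on $[\rho^1,\rho^2]$. That cannot succeed. Condition (b) permits $t^1_{j,i}=t^2_{j,i}$ for all $j\ge i+2$, in which case $\Psi_2-\Psi_1\equiv 0$ while $a_1>a_2$. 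You would then need $R(\rho^2)\ge R(\rho^1)$, which fails for generic $\rho^1<\rho^2$ because $R$ is not monotone. It can only hold because of \emph{which} points $\rho^1,\rho^2$ are, and adding the two inequalities discards exactly that information. No structural property of $R$ and $\Psi_k$ alone will deliver it.

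The missing step is to use the optimality of $\rho^2$ for $\phi^2$ on its own; this is what the paper does when it derives $r^2_i(v_i,p^2)\ge r^2_i(v_i,p^1)$. You have $a_2R(\rho^2)+\Psi_2(\rho^2)\ge a_2R(\rho^1)+\Psi_2(\rho^1)$, and $\Psi_2(\rho^2)\le\Psi_2(\rho^1)$ because $\Psi_2$ is nonincreasing and $\rho^2>\rho^1$. Together these give $R(\rho^2)\ge R(\rho^1)$ whenever $a_2>0$. The left-hand side of your summed inequality is then $\le 0$, so both optimality inequalities hold with equality. In particular $\rho^1<\rho^2$ is also a maximizer of $\phi^2$, which contradicts the smallest-maximizer (or any fixed, mechanism-independent) tie-breaking rule.

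If $a_2=0$, then $\phi^2=\Psi_2$ is nonincreasing, and the smallest-maximizer rule gives $\rho^2=0\le\rho^1$ directly. The paper's scaling identity $r^2_i=(t^2_{i+1,i}/t^1_{i+1,i})\,r^1_i$ tacitly needs this case handled separately as well. With this step inserted, your argument is the paper's, written in the normalized coordinate $\rho=p_i/(\sigma_{i+1}v_i)$.
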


\subsection{Comparison with Baseline Mechanism}
\label{subsec:global}

We now leverage \cref{thm:partial} to compare the baseline mechanism $\bmt$ (where $t_{ij}=\one\{i=j+1\}$) against an arbitrary feasible profit reallocation mechanism.
We prove that profit reallocation mechanism weakly expands the equilibrium trade event of each trade $i$.

\begin{restatable}{theorem}{thmGlobal}
\label{thm:global}
Let \cref{def:markov,asp:posi,asp:homo} hold.
Let $\bmt^1$ denote the baseline mechanism $t^1_{ij}=\one\{i=j+1\}$, and $\bmt^2$ be any other feasible profit reallocation mechanism that satisfies \cref{def:mechanism:1,def:mechanism:2}.
Denote $(\tbmp^{1,*},\tbms^{1,*})$ and $(\tbmp^{2,*},\tbms^{2,*})$ as the corresponding equilibria. 
Then, $E^1_i \subseteq E^2_i$ for all $\iinn$.
\end{restatable}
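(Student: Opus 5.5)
The plan is to interpolate between the baseline $\bmt^1$ and the target $\bmt^2$ with a finite chain of mechanisms. Consecutive mechanisms in the chain will differ only in one column, that is, only in the allocation to a single player. Each consecutive pair will satisfy hypotheses (a)--(d) of \cref{thm:partial}. The conclusion then follows by composing the resulting inclusions.

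\textbf{Construction of the chain.} Since row $j$ of any mechanism satisfying \cref{def:mechanism:2} is supported on columns $0,\dots,j-1$, a mechanism is determined by its columns $0,\dots,n-1$. For $m\in\{0,1,\dots,n\}$ define $\bmt^{(m)}$ column by column:
\[
t^{(m)}_{j,i'} =
\begin{cases}
t^2_{j,i'}, & \text{if } i'\ge m,\\
\one\{j=i'+1\}, & \text{if } i'<m.
\end{cases}
\]
Thus $\bmt^{(n)}=\bmt^1$ is the baseline and $\bmt^{(0)}=\bmt^2$. The mechanisms $\bmt^{(m+1)}$ and $\bmt^{(m)}$ differ only in column $i=m$: in $\bmt^{(m+1)}$ it is the baseline column, and in $\bmt^{(m)}$ it is the column of $\bmt^2$. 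We therefore modify players in decreasing index order, from $n-1$ down to $0$. This order is essential, because hypothesis (d) of \cref{thm:partial} forces all players with index below $i$ to still receive baseline allocations.

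\textbf{Checking the hypotheses.} Apply \cref{thm:partial} with $i=m$ to the pair $(\bmt^{(m+1)},\bmt^{(m)})$.
\begin{itemize}[left=0em]
\item Hypothesis (a) holds because all columns $i'>m$ equal those of $\bmt^2$ in both mechanisms.
\item Hypothesis (b) holds because for $j\ge m+2$ the old entry is $0$ and the new entry is $t^2_{j,m}\ge 0$.
\item Hypothesis (c) holds because $t^2_{m+1,m}\le\sum_{j'} t^2_{m+1,j'}\le 1 = t^{(m+1)}_{m+1,m}$ by budget feasibility (\cref{def:mechanism:1}). This is the only place feasibility is used.
\item Hypothesis (d) holds because for $i'<m$ both mechanisms use the baseline column, which vanishes except at $j=i'+1$.
\end{itemize}
One should also confirm that each intermediate $\bmt^{(m)}$ is itself a legitimate mechanism satisfying \cref{def:mechanism:1,def:mechanism:2}. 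Rows $j\le m$ are baseline rows, because columns $\ge m\ge j$ vanish in those rows under past-only. Rows $j>m$ consist of $t^2$ entries in columns $\ge m$, plus zeros in the baseline columns $i'<m$ (since $j\ne i'+1$ there). Hence every row sum is at most $1$.

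\textbf{Composing the steps.} \cref{thm:partial} gives $E^{(m+1)}_j=E^{(m)}_j$ for $j\ne m+1$ and $E^{(m+1)}_{m+1}\subseteq E^{(m)}_{m+1}$. Hence for every $j$ and every step, $E^{(m+1)}_j\subseteq E^{(m)}_j$. Chaining over $m=n-1,\dots,0$ yields $E^1_j=E^{(n)}_j\subseteq E^{(0)}_j=E^2_j$ for all $j\in[n]$.

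\textbf{Expected obstacle.} I expect the main subtlety to be well-definedness rather than computation. The equilibrium attached to each intermediate $\bmt^{(m)}$ must be the same simplified equilibrium (via \cref{lem:simplify-sell,lem:simplify-buy}) in both applications of \cref{thm:partial} in which it appears, as the right-hand mechanism of one step and the left-hand mechanism of the next. This is exactly what the convention that the tie-breaking rule is independent of $\bmt$ guarantees, and I would state it explicitly. Beyond that, the only delicate choice is the decreasing processing order, which is what makes hypothesis (d) hold at every step.
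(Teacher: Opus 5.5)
Your proposal is correct and follows the paper's proof essentially verbatim. The paper builds the same chain $\bmt^n,\dots,\bmt^0$, in which columns $j\ge k$ are taken from the target mechanism and the remaining columns are baseline; it then applies \cref{thm:partial} with $i=k-1$ to each adjacent pair $(\bmt^k,\bmt^{k-1})$ and chains the resulting inclusions, which is what you do with $i=m$. Your explicit checks of hypotheses (a)--(d), of the feasibility of the intermediate mechanisms, and of the $\bmt$-independent tie-breaking convention fill in details the paper leaves as ``one can verify.''
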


Note that $\bmv\in \cap_{\jlei} E^k_j$ ($k=1,2$) means that trade $i$ occurs under mechanism $\bmt^k$, valuation profile $\bmv$, and the players' equilibrium strategies.
Under the assumptions of \cref{thm:global}, if we further require the mechanism be budget-balanced, \ie, $\sum_{j} t_{ij} = 1$, then, the occurrence of trade $i$ increases social welfare by $v_i \ge 0$.
Overall, $E^1_i \subseteq E^2_i$ implies that, compared with the baseline mechanism, any feasible profit reallocation mechanism weakly expands trade events, and any feasible, budget-balanced profit reallocation mechanism weakly improves social welfare under equilibrium.

\section{Experimental Evaluation}
\label{sec:exp}

This section presents computational experiments that complement the theoretical analysis and evaluate the empirical effects of profit reallocation.
The experiments are designed to address the following questions that are not fully resolved by the theory:
\begin{enumerate}
\item[(a)] To what extent do profit reallocation mechanisms improve upon the baseline mechanism?
\item[(b)] Do these improvements persist when \cref{asp:homo,asp:posi} are not satisfied?
\item[(c)] Are the improvements robust across valuation distributions $\calF$ and chain lengths $n$?
\item[(d)] Does profit reallocation adversely affect the utilities of individual players?
\end{enumerate}

\subsection{Experimental Setup}

We construct four finite-support valuation distributions $\calF$ to evaluate the robustness of our theoretical predictions.
All models satisfy \cref{def:markov}, which enables exact equilibrium computation via \cref{alg:exact}. Recall that a game with chain length $n$ consists of players $0$--$n$ and trades $1$--$n$.


\paragraph{Valuation models.}
All four models use a five-neighbor Markov transition: given $v_{i-1}=w_k$, draw $k'\sim\mathrm{Uniform}(\{k-2,k-1,k,k+1,k+2\})$ and set $v_i=w_{k'}$, except that Model 2 truncates $k'$ to $[0,2n]$.
They differ only in the support and valuation grid:
\begin{itemize}[left=0em]\itemsep0.1em
\item \textbf{Model 1} uses $L=4n+1$, $W=\{w_{-2n},\ldots,w_{2n}\}$, $w_k=\exp(k/2)$, and $v_0=w_0$; it satisfies both \cref{asp:homo,asp:posi}.
\item \textbf{Model 2} uses $L=2n+1$, $W=\{w_0,\ldots,w_{2n}\}$, $w_k=k/2n$, and $v_0=w_0$; it violates homogeneity but preserves positivity.
\item \textbf{Model 3} uses $L=4n+1$, $W=\{w_{-2n},\ldots,w_{2n}\}$, $w_k=(-e)^{k/2}$, and $v_0=w_0$; it violates positivity but preserves homogeneity.
\item \textbf{Model 4} uses $L=4n+1$, $W=\{w_{-2n},\ldots,w_{2n}\}$, $w_k=k/2n$, and $v_0=w_0$; it violates both homogeneity and positivity.
\end{itemize}
Model 1 is covered by our theory, while Models 2--4 test robustness when one or both assumptions fail.


\paragraph{Profit reallocation mechanisms.}
We compare three profit reallocation mechanisms:
\begin{itemize}[left=0em]
\item RAW (baseline mechanism): the seller retains the entire profit ($t_{i,i-1}=1$, $t_{ij}=0$ when $j \neq i-1$).
\item AVE: profit is reallocated uniformly to upstream players ($t_{ij} = 1/i$ for all $j \in \{0,\ldots,i-1\}$).
\item OPT: The mechanism that maximizes equilibrium social welfare subject to budget balance.
\end{itemize}

For RAW and AVE, the profit reallocation mechanism $\bmt$ is fixed in advance; we then run \cref{alg:exact} on each model to obtain the exact equilibrium induced by that mechanism.
For OPT, we parameterize $\bmt$ and, for each candidate mechanism, compute the induced equilibrium $\mathrm{eq}(\bmt,M)$ using \cref{alg:exact} on model $M$. We then search for $\bmt^*(M)$ that maximizes equilibrium social welfare $\SW(\mathrm{eq}(\bmt,M),\bmt,M)$.
We compute $\bmt^*(M)$ using the zeroth-order optimization procedures detailed in Appendix~\ref{sec:omitted_exp}.

We evaluate two metrics. The first is expected social welfare, $\SW = \bbE_{\bmv\sim\calF}[\sum_{i=0}^n u_i(h;v_i)]$, defined as the sum of player-wise utilities. The second is transaction volume, $\mathrm{TV} = \bbE_{\bmv\sim\calF}[\sum_{i=1}^n \one\{\bmv \in \cap_{1\le j\le i} E_j\}]$, defined as the expected number of successful trades, where $\one\{\bmv \in \cap_{1\le j\le i} E_j\}$ indicates that trade $i$ occurs and $E_j$ is defined in \cref{thm:global}.
Both metrics are evaluated under equilibrium strategies. For each seed, we generate $M=10,000$ independent samples of $\bmv$ from $\calF$ to estimate these expectations.

\subsection{Results and Analysis}

\begin{table*}[t]
\centering
\caption{Social welfare and transaction volume across mechanisms and models with chain length $n=5$. We report mean value and standard deviations with $5$ different seeds.}
\adjustbox{max width=\textwidth}{
\begin{tabular}{l|cccc|cccc}
\toprule
Metrics: & \multicolumn{4}{c}{Social Welfare} & \multicolumn{4}{c}{Transaction Volume} \\
\cmidrule(lr){1-1}\cmidrule(lr){2-5}\cmidrule(lr){6-9}
Mechanisms & Model 1 & Model 2 & Model 3 & Model 4 & Model 1 & Model 2 & Model 3 & Model 4 \\
\midrule
RAW &
4.400$_{\pm 0.053}$ & 0.232$_{\pm 0.003}$ & 2.021$_{\pm 0.008}$ & 0.241$_{\pm 0.002}$ &
0.664$_{\pm 0.011}$ & 0.858$_{\pm 0.011}$ & 0.660$_{\pm 0.013}$ & 0.876$_{\pm 0.012}$ \\
AVE &
6.973$_{\pm 0.073}$ & 0.474$_{\pm 0.001}$ & 2.575$_{\pm 0.012}$ & 0.269$_{\pm 0.002}$ &
1.363$_{\pm 0.013}$ & 3.055$_{\pm 0.003}$ & 1.353$_{\pm 0.014}$ & 1.001$_{\pm 0.014}$ \\
OPT &
10.804$_{\pm 0.099}$ & 0.621$_{\pm 0.002}$ & 2.577$_{\pm 0.012}$ & 0.337$_{\pm 0.004}$ &
4.401$_{\pm 0.002}$ & 4.524$_{\pm 0.002}$ & 1.353$_{\pm 0.014}$ & 1.519$_{\pm 0.028}$ \\
\bottomrule
\end{tabular}
}
\label{tab:exp}
\end{table*}

\begin{figure*}[t]
\centering
\begin{subfigure}{0.235\linewidth}
\includegraphics[width=\linewidth]{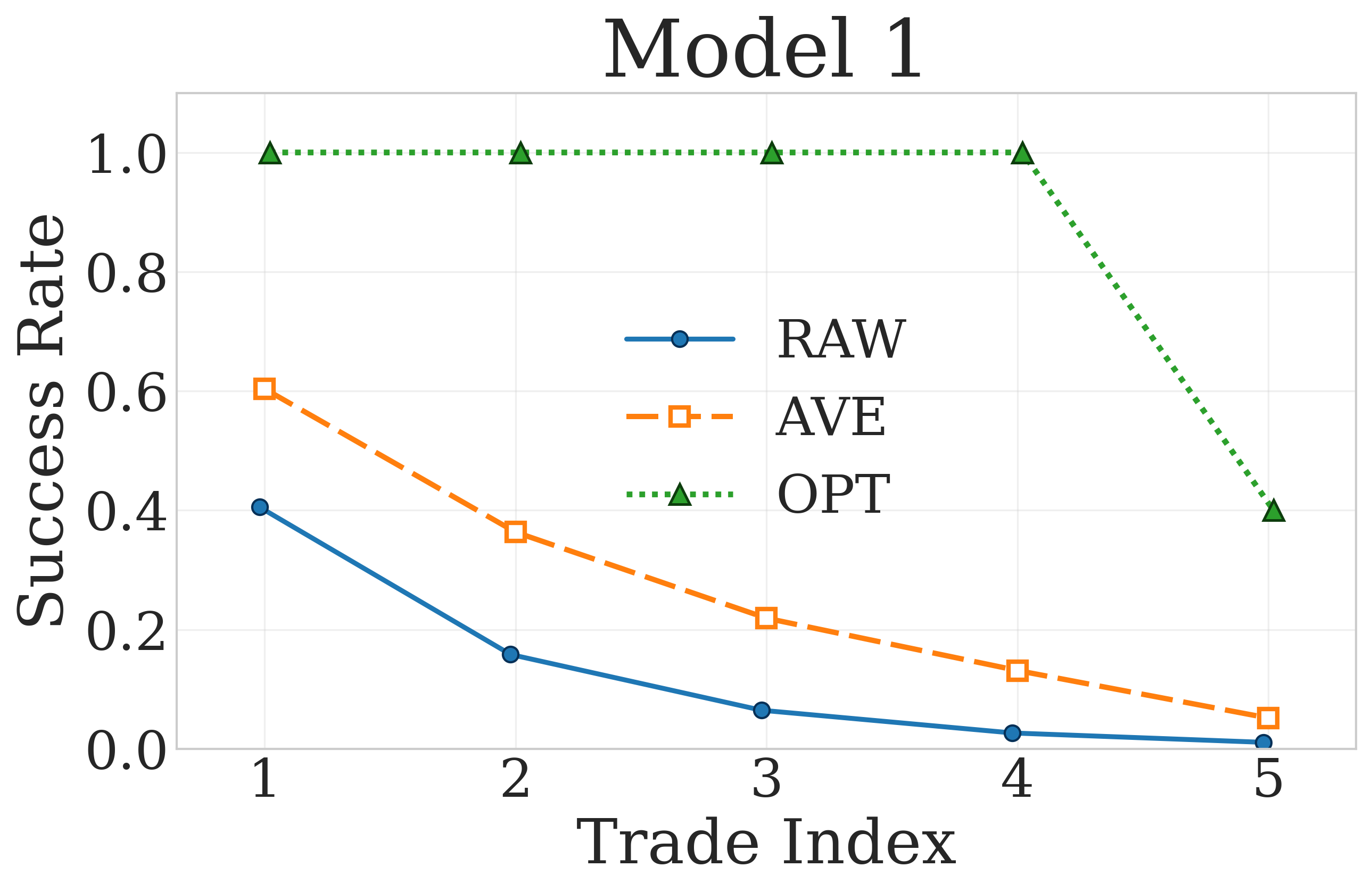}
\end{subfigure}
\hfill
\begin{subfigure}{0.235\linewidth}
\includegraphics[width=\linewidth]{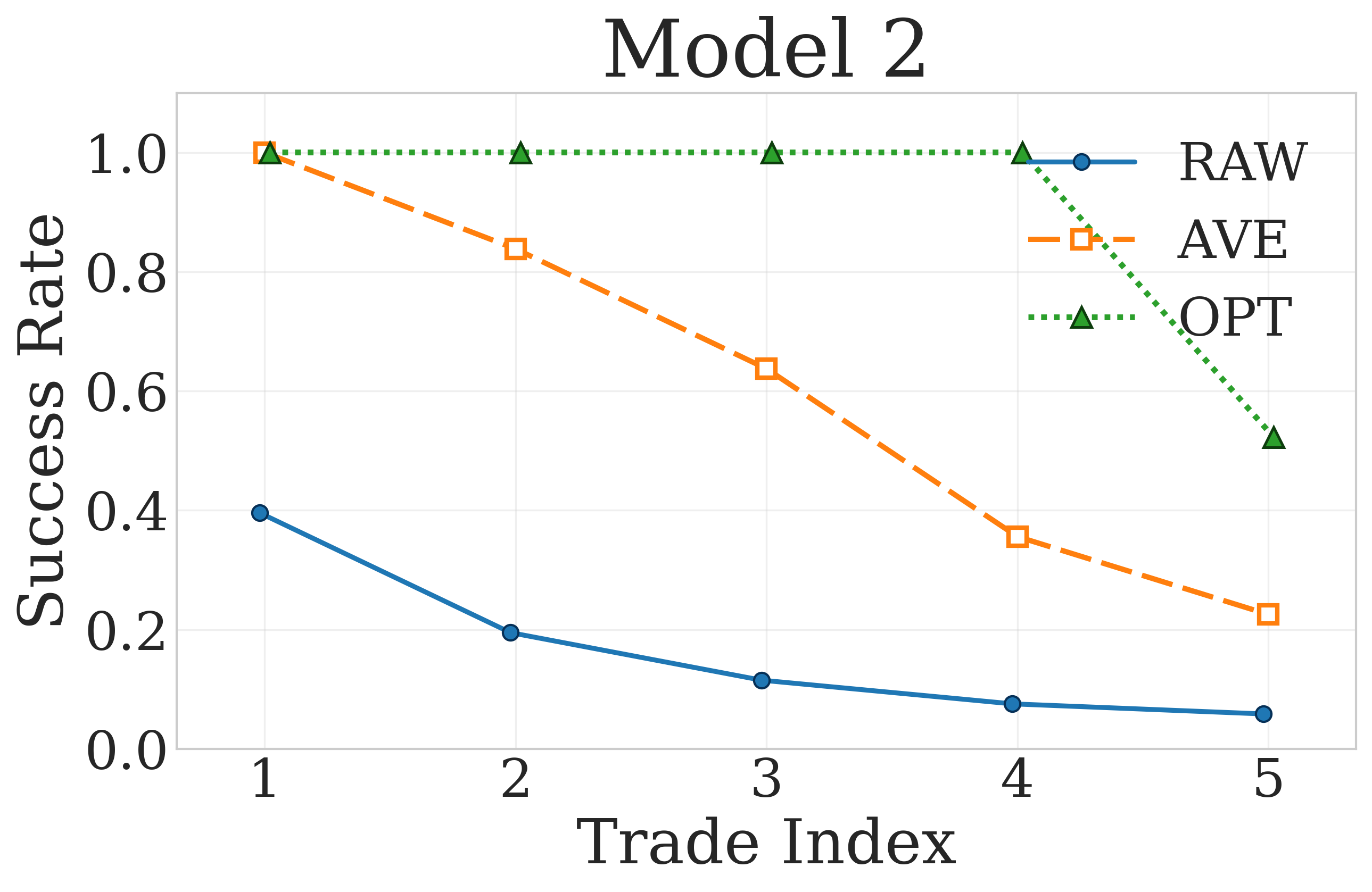}
\end{subfigure}
\hfill
\begin{subfigure}{0.235\linewidth}
\includegraphics[width=\linewidth]{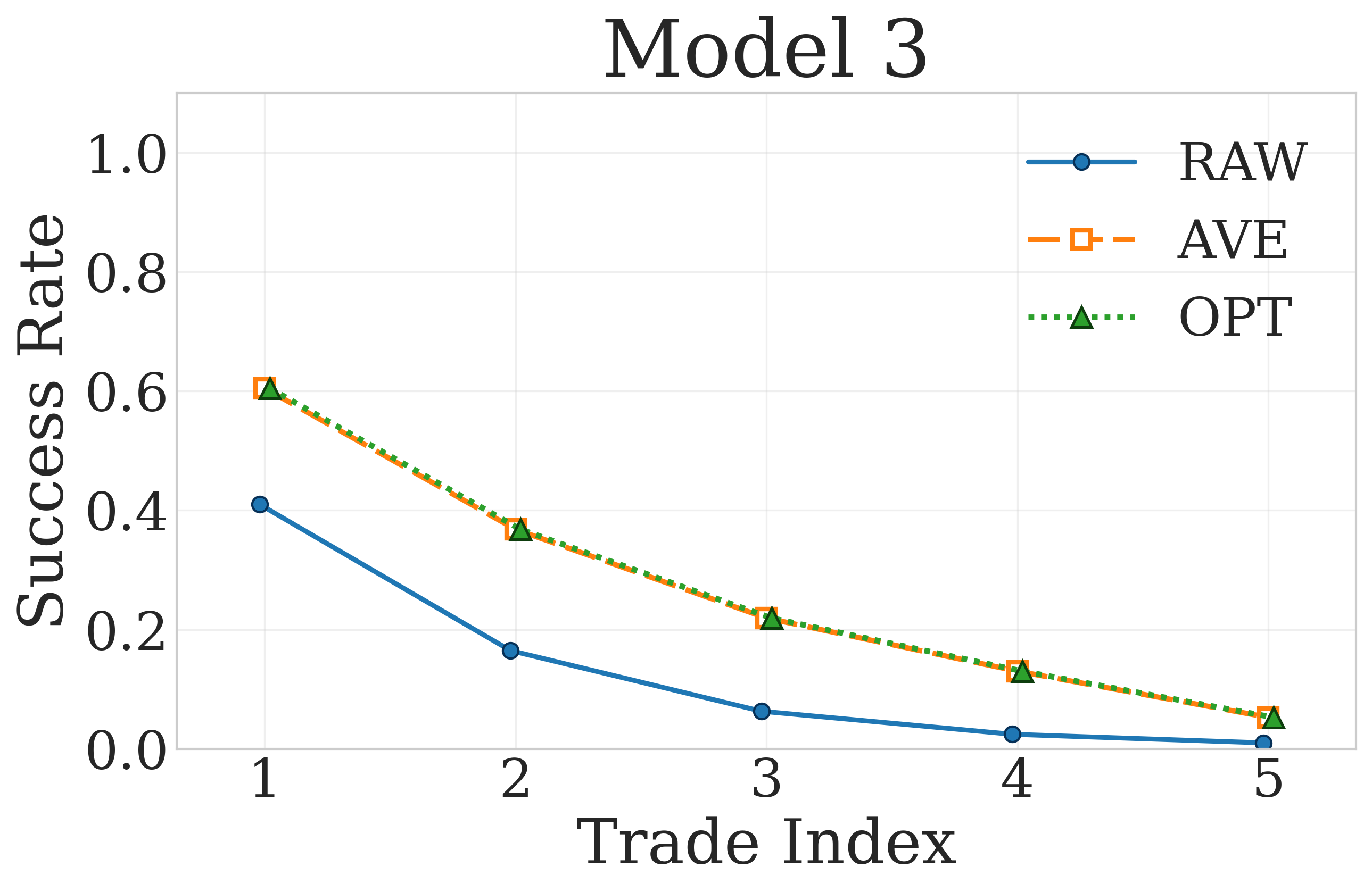}
\end{subfigure}
\hfill
\begin{subfigure}{0.235\linewidth}
\includegraphics[width=\linewidth]{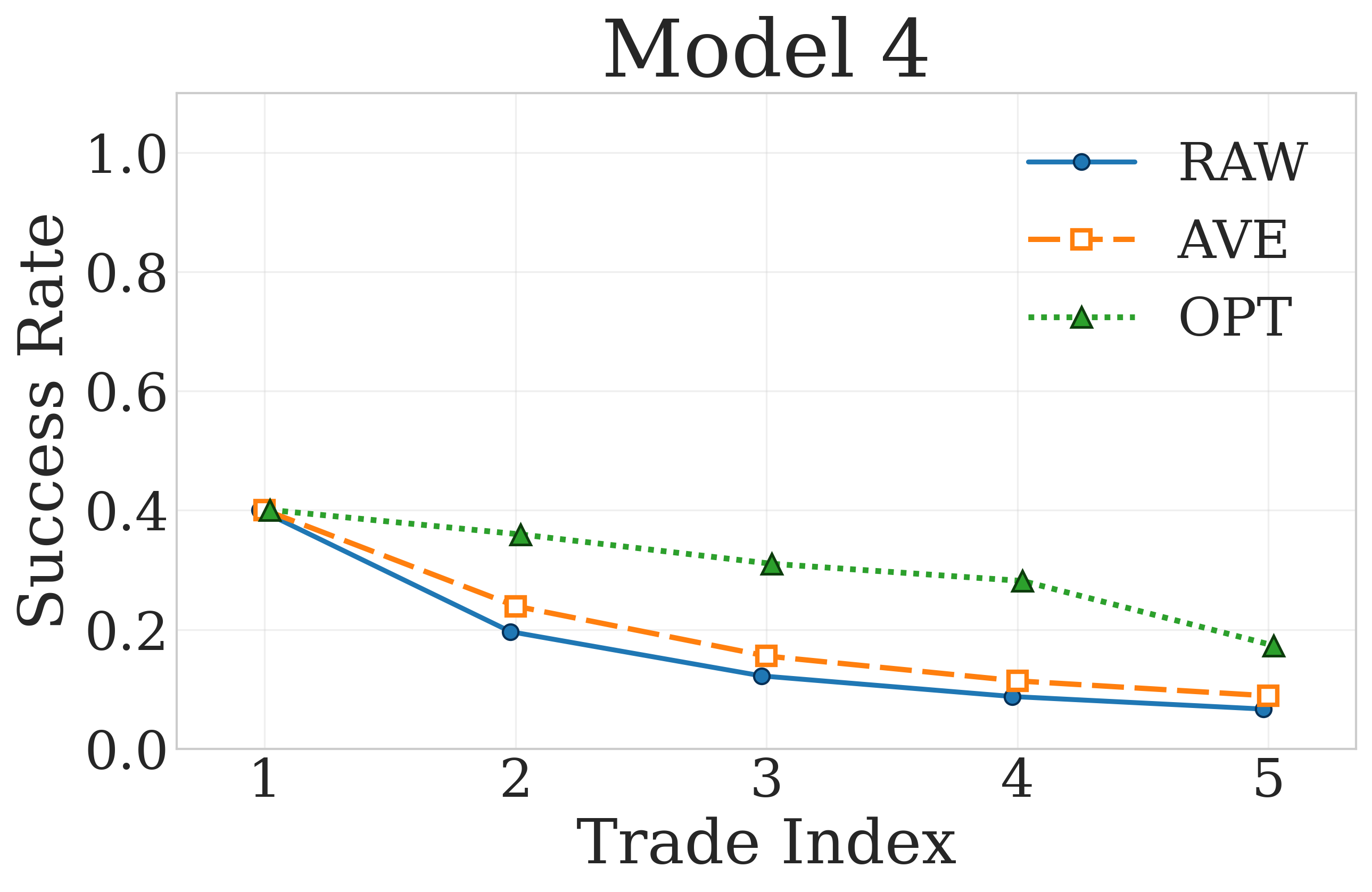}
\end{subfigure}
\medskip

\begin{subfigure}{0.235\linewidth}
\includegraphics[width=\linewidth]{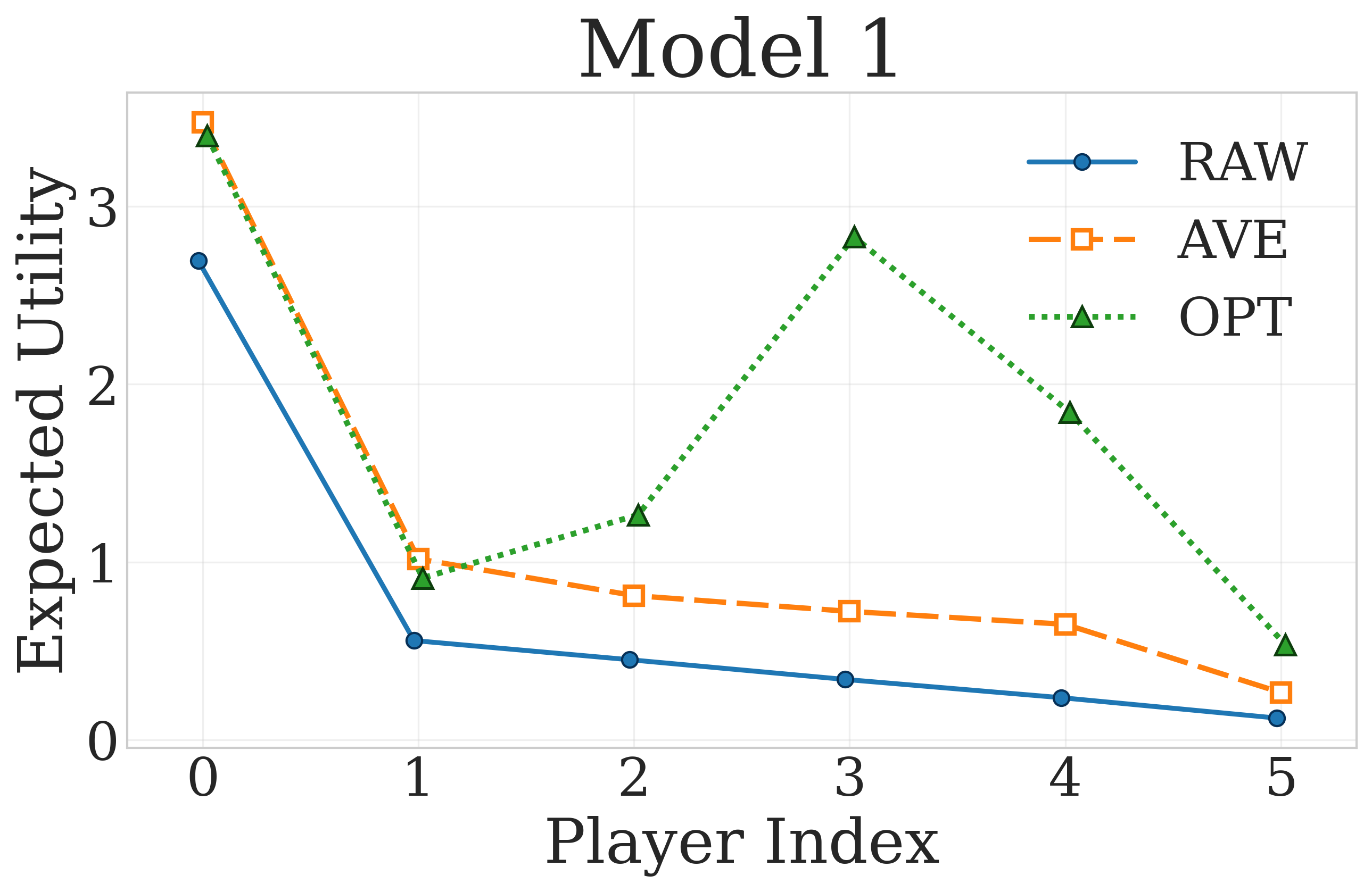}
\end{subfigure}
\hfill
\begin{subfigure}{0.235\linewidth}
\includegraphics[width=\linewidth]{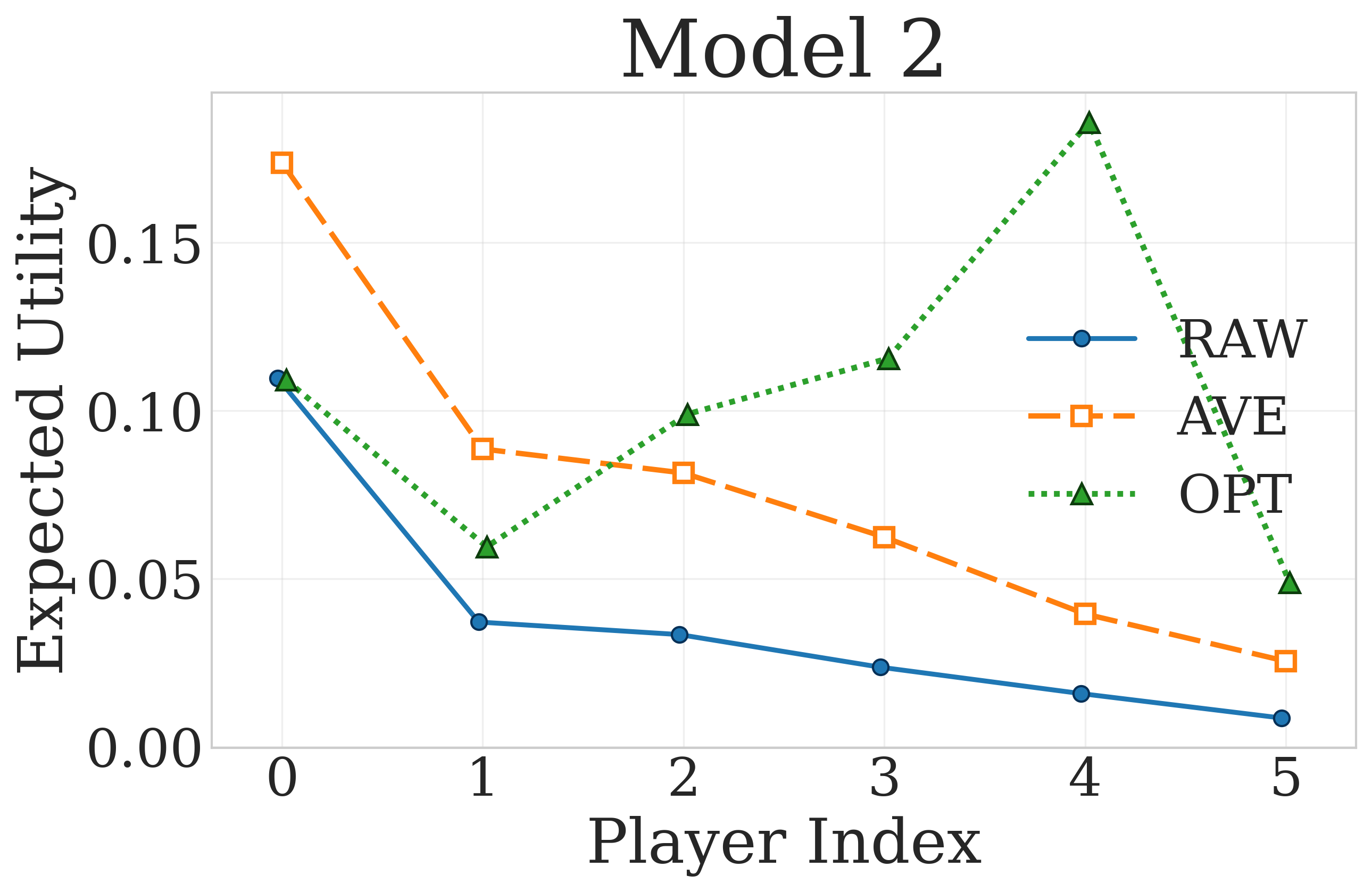}
\end{subfigure}
\hfill
\begin{subfigure}{0.235\linewidth}
\includegraphics[width=\linewidth]{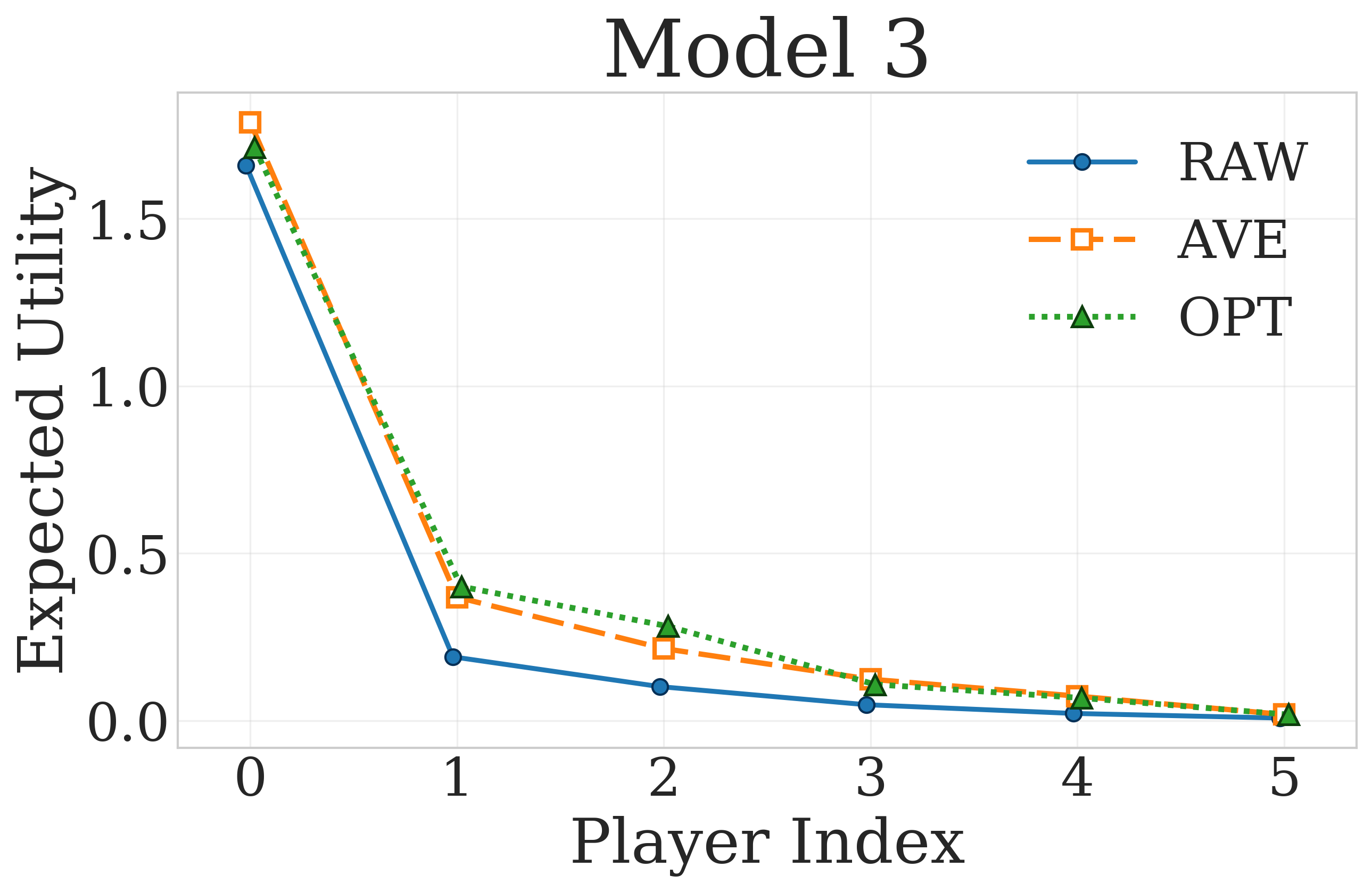}
\end{subfigure}
\hfill
\begin{subfigure}{0.235\linewidth}
\includegraphics[width=\linewidth]{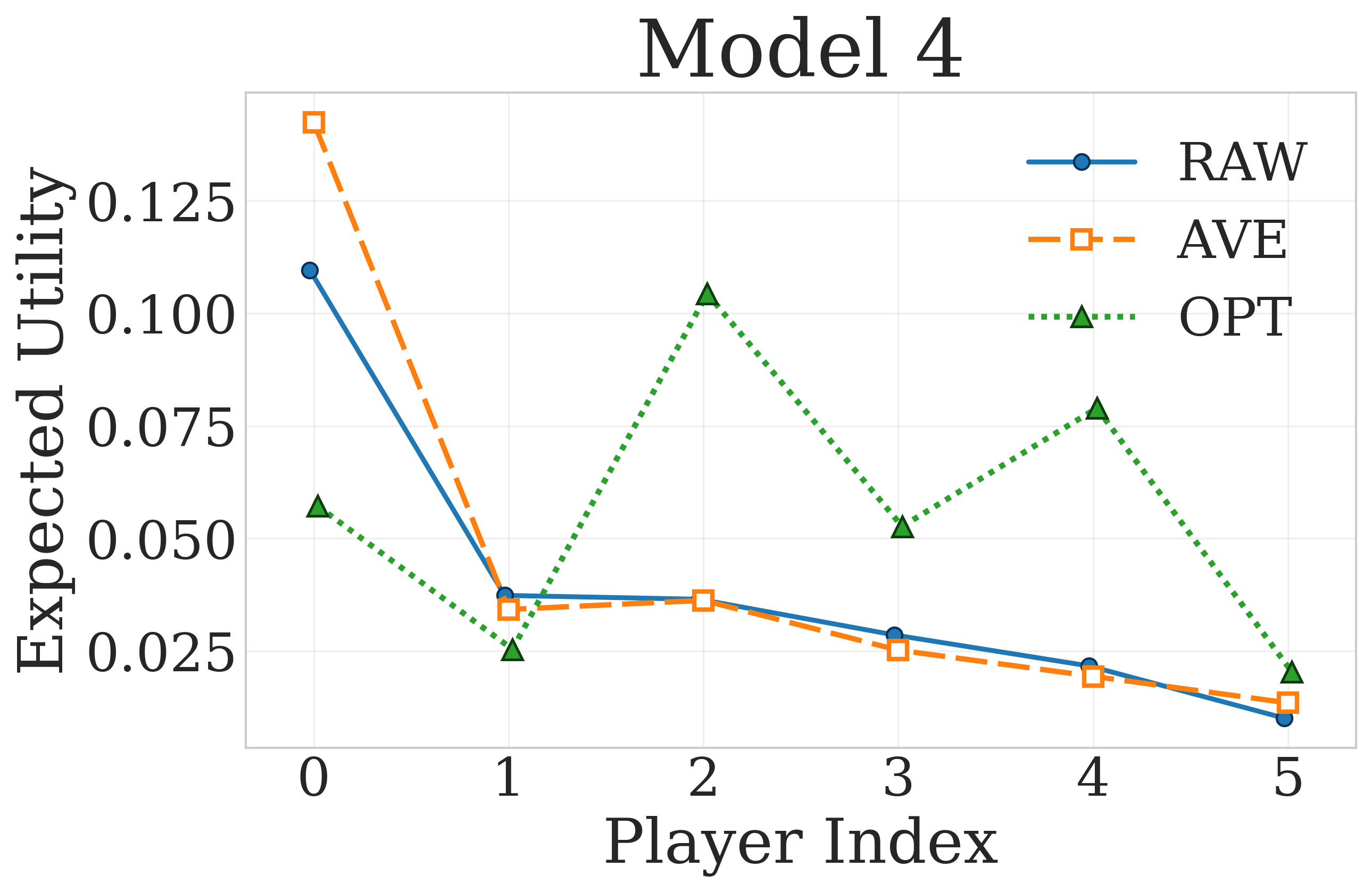}
\end{subfigure}

\caption{For each model with fixed chain length $n=5$, we plot per-trade success rates (top row) and per-player expected utilities (bottom row) under RAW, AVE, and OPT at equilibrium.}
\label{fig:exp}
\end{figure*}

\begin{figure*}[t]
\centering
\begin{subfigure}{0.235\linewidth}
\includegraphics[width=\linewidth]{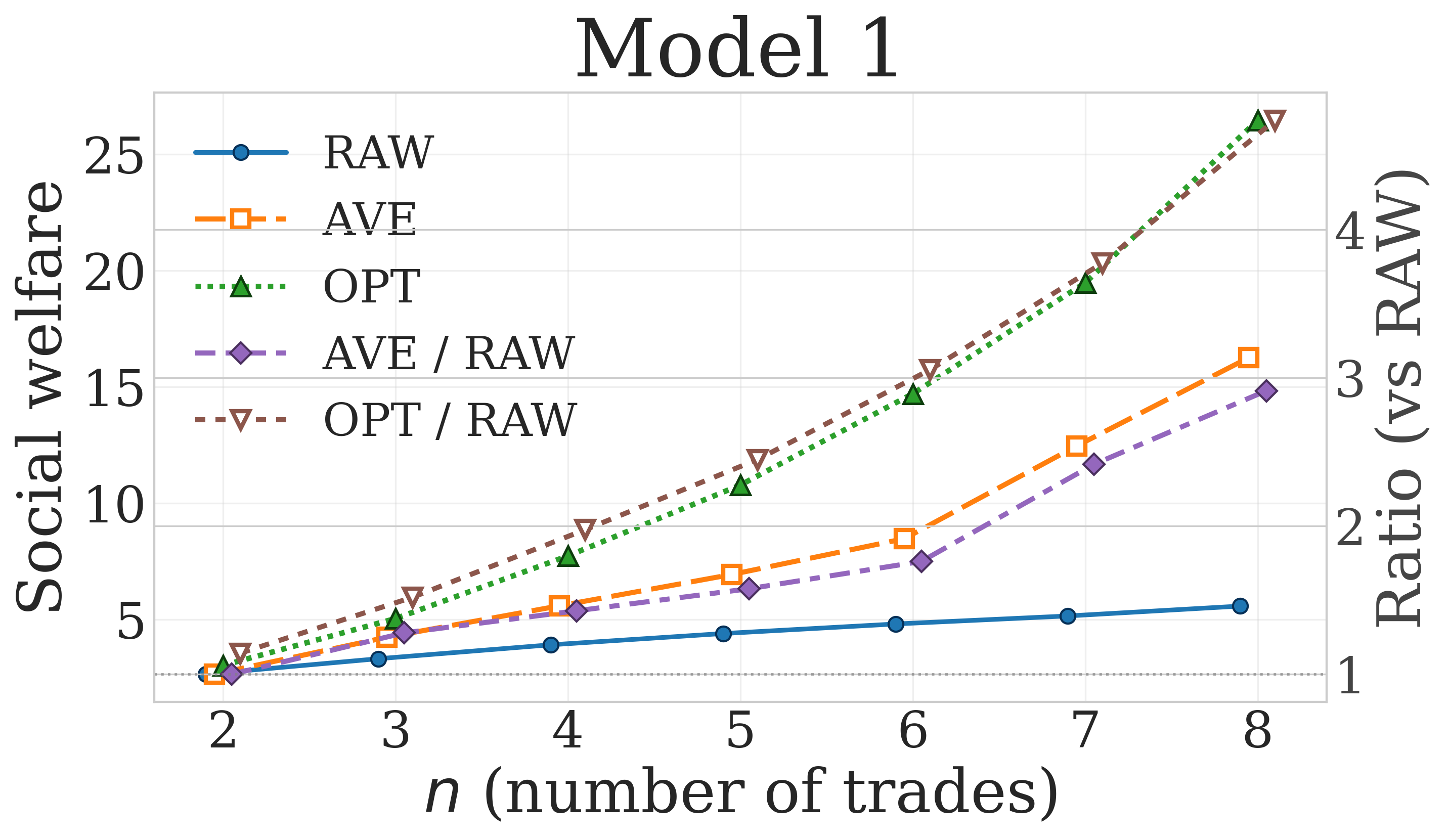}
\end{subfigure}
\hfill
\begin{subfigure}{0.235\linewidth}
\includegraphics[width=\linewidth]{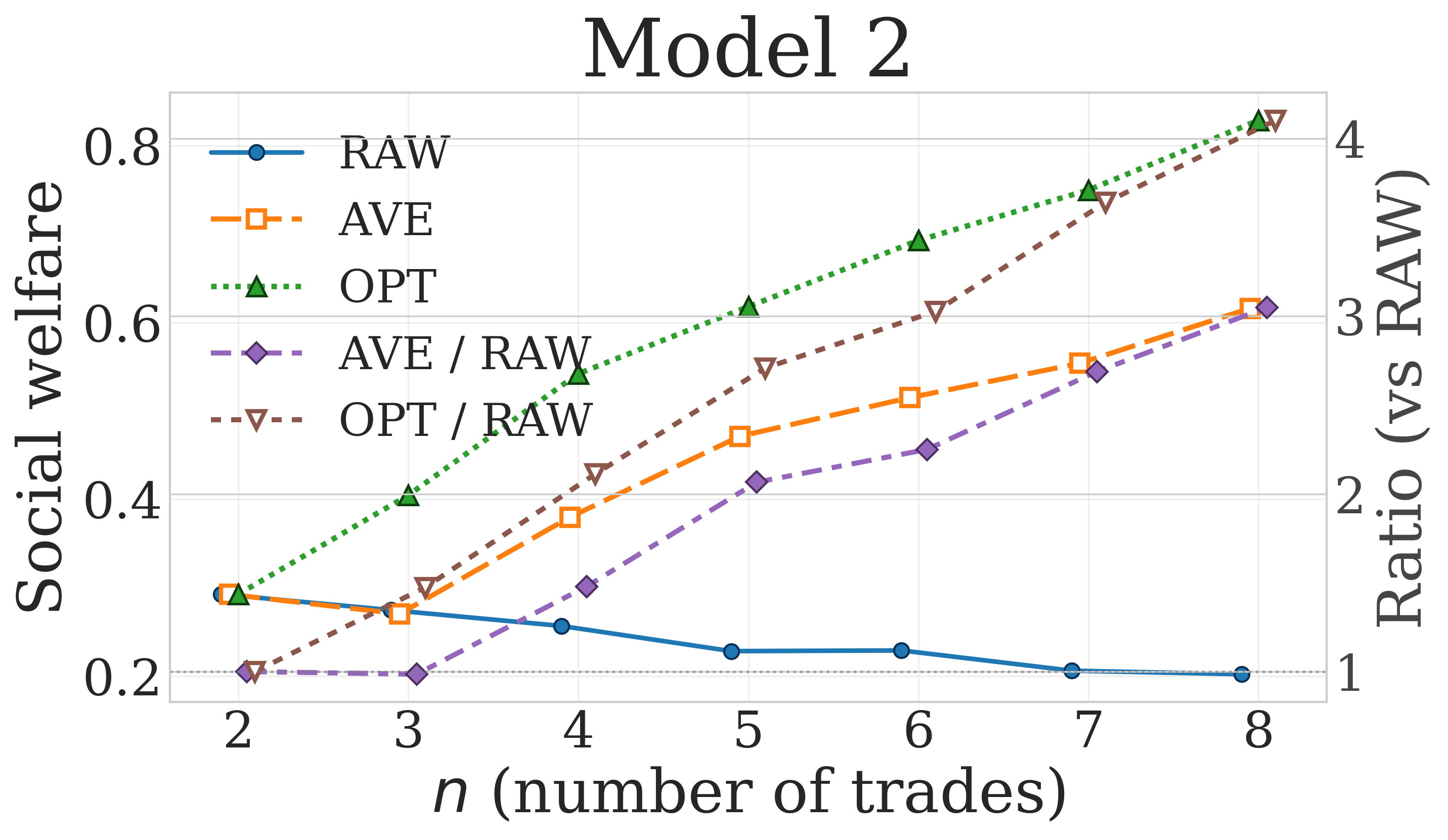}
\end{subfigure}
\hfill
\begin{subfigure}{0.235\linewidth}
\includegraphics[width=\linewidth]{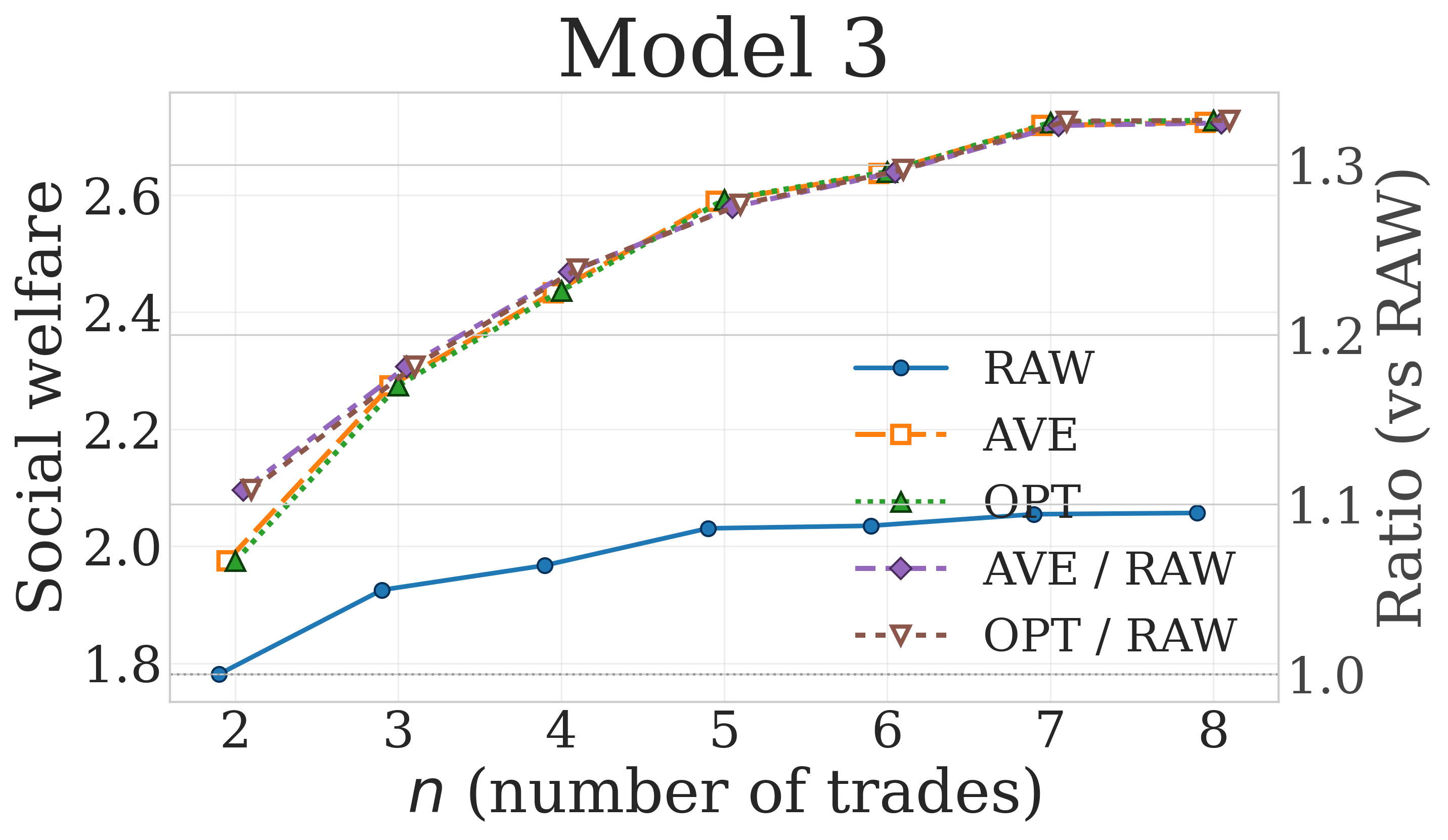}
\end{subfigure}
\hfill
\begin{subfigure}{0.235\linewidth}
\includegraphics[width=\linewidth]{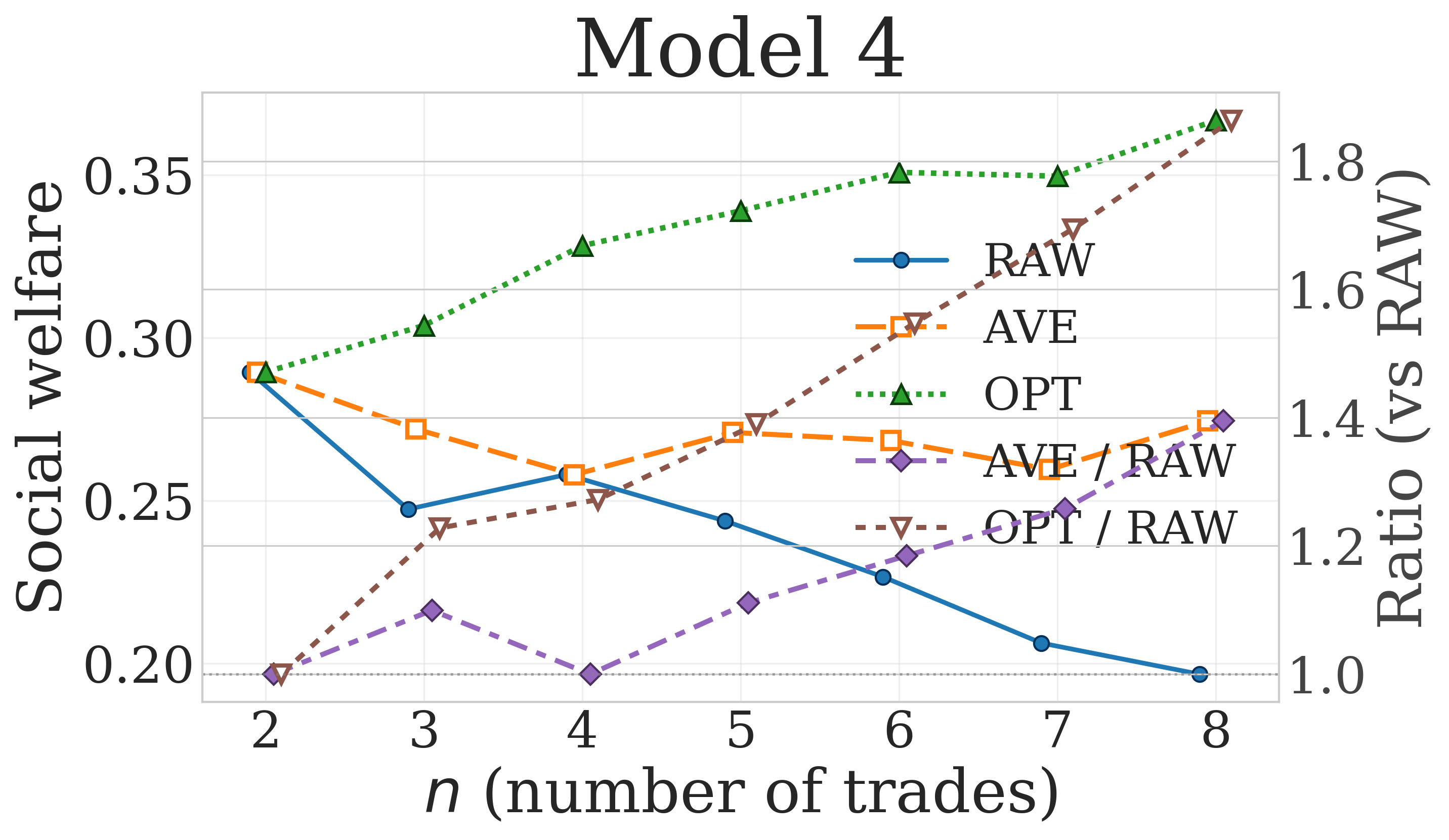}
\end{subfigure}
\medskip

\begin{subfigure}{0.235\linewidth}
\includegraphics[width=\linewidth]{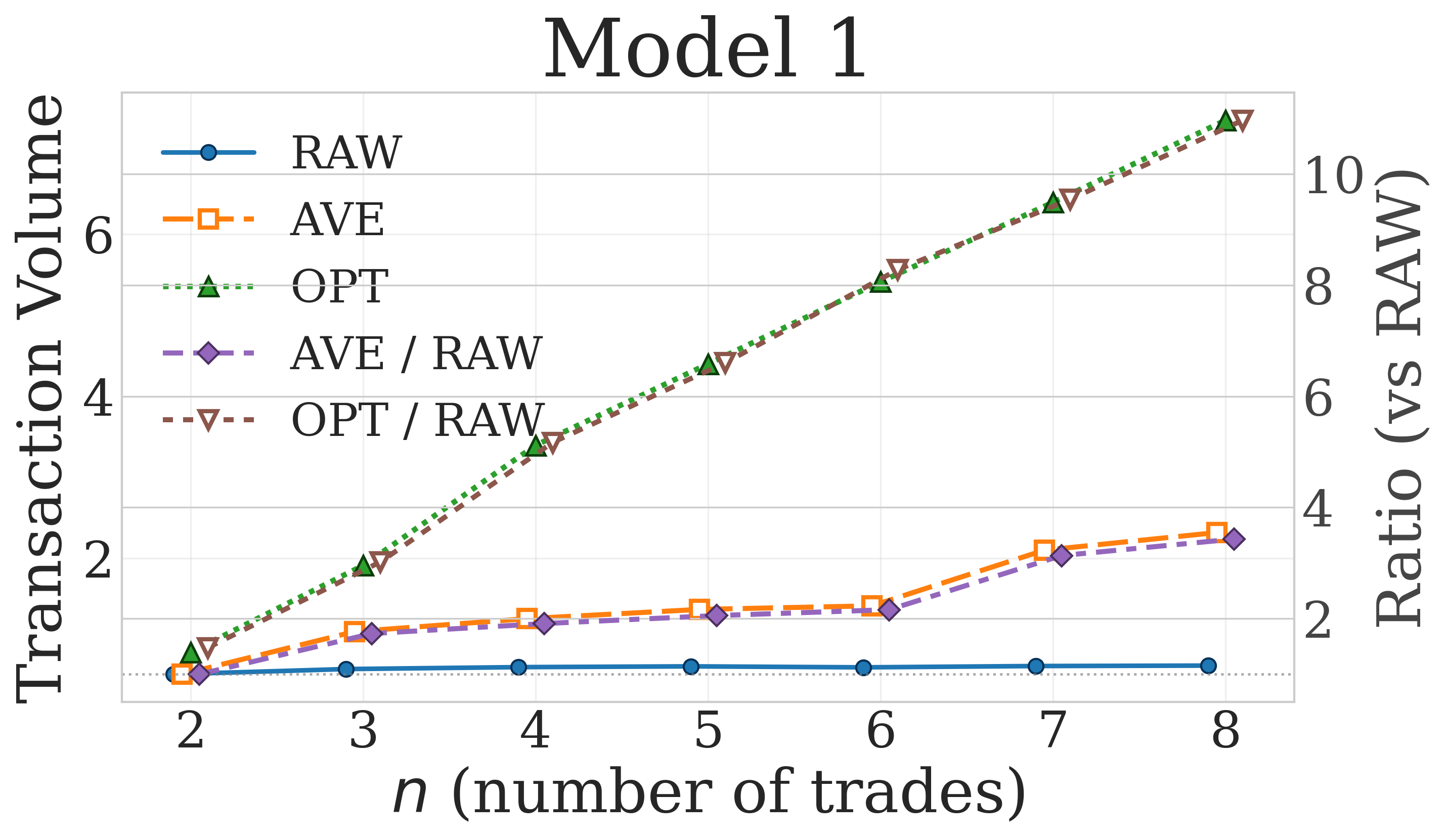}
\end{subfigure}
\hfill
\begin{subfigure}{0.235\linewidth}
\includegraphics[width=\linewidth]{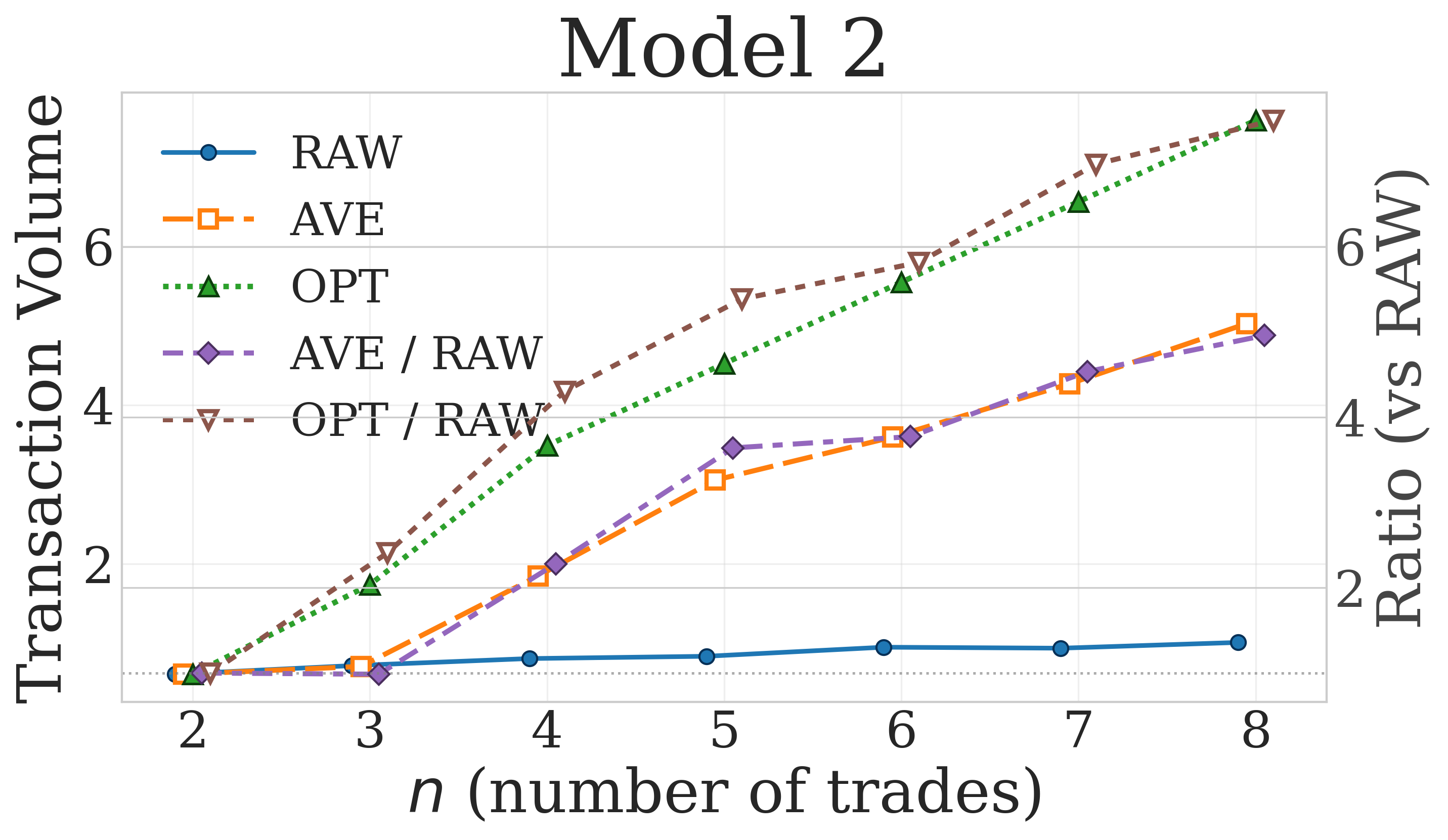}
\end{subfigure}
\hfill
\begin{subfigure}{0.235\linewidth}
\includegraphics[width=\linewidth]{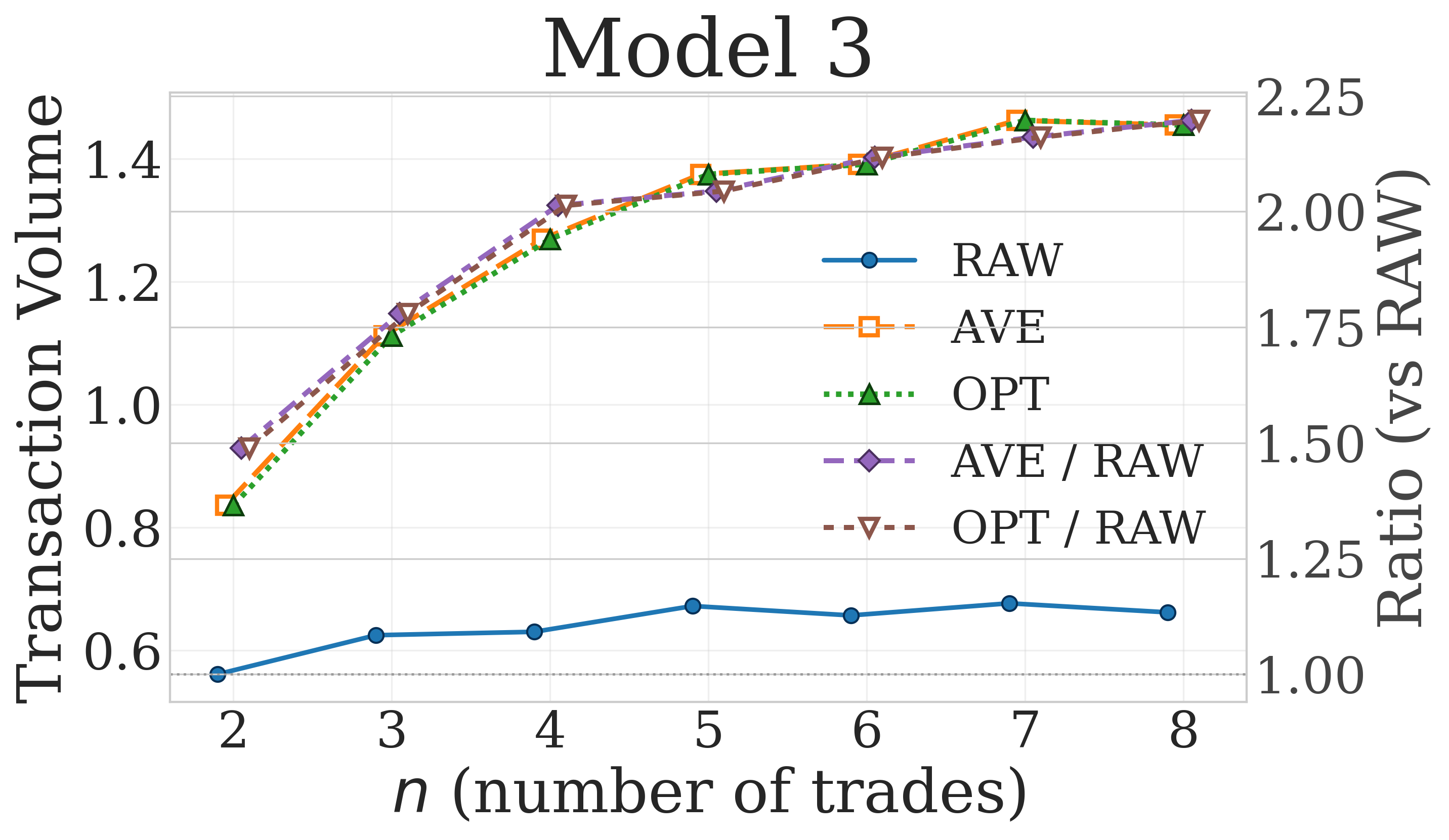}
\end{subfigure}
\hfill
\begin{subfigure}{0.235\linewidth}
\includegraphics[width=\linewidth]{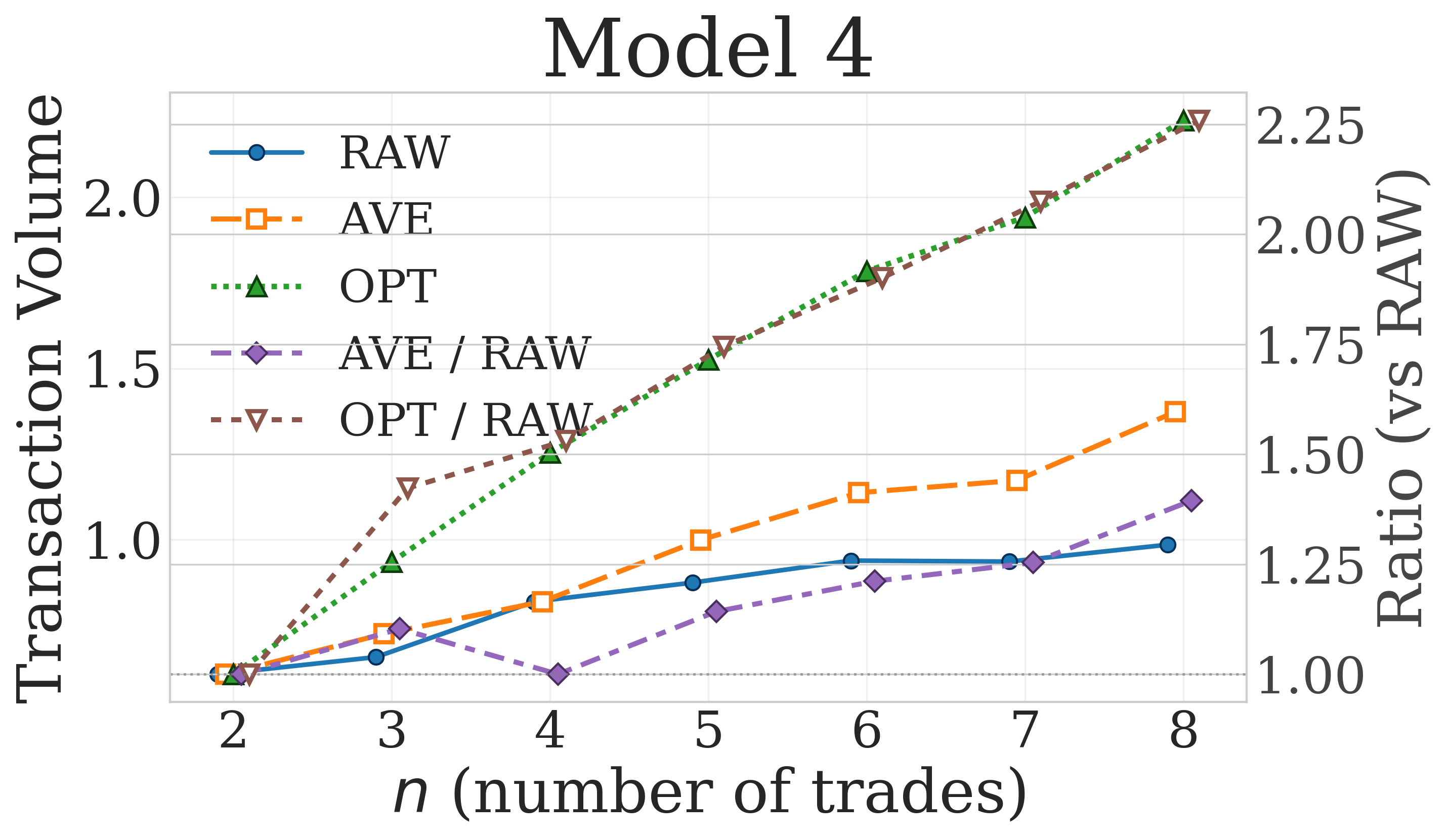}
\end{subfigure}

\caption{Expected social welfare (top row) and transaction volume (bottom row) as the chain length $n$ varies from $2$ to $8$.}
\label{fig:exp-vary}
\end{figure*}

\paragraph{Welfare improvements.}
To address questions (a) and (b), \cref{tab:exp} reports equilibrium transaction volume and social welfare for each mechanism-model pair.
Profit reallocation yields substantial improvements in both metrics, including in models that violate \cref{asp:homo,asp:posi}.
Relative to RAW, OPT improves transaction volume and social welfare by 291.9\% and 95.0\% on average, respectively.
Even the simple, model-agnostic AVE mechanism improves transaction volume and social welfare by 120.0\% and 50.4\% on average.
In addition, in the positive-valuation models, OPT induces almost all trades to occur, which accounts for a large portion of the welfare gain.

\paragraph{Robustness.}
To address question (c), we vary the chain length from $n=2$ to $n=8$ and evaluate expected social welfare and transaction volume under all four valuation models.
\Cref{fig:exp-vary} shows that the relative advantage of profit reallocation over RAW generally increases with chain length $n$.
The gains are robust to violations of homogeneity, while they become smaller when positivity is not satisfied.

\paragraph{Fairness.}
Because profit reallocation redistributes profits among players, it is important to examine whether the mechanism reduces the utility of individual players.
\Cref{fig:exp} reports the success rate of each trade and the expected utility of each player.
Relative to RAW, almost all players obtain weakly higher utility under OPT or AVE, with the exception of players $0$ and $1$ in Model 4.
This pattern suggests that the reduction in one-shot profit is typically offset by future reallocated profits and the larger transaction volume.
These results address question (d): profit reallocation generally preserves individual participation incentives.


\section{Conclusion and Future Work}
\label{sec:conclusion}

This paper presents a game-theoretic, chain-based data trading model together with a profit reallocation mechanism that aims to incentivize trades.
We develop a polynomial-time algorithm and an FPTAS for computing equilibria under the proposed mechanism.
Both theoretical analysis and empirical evaluation show that profit reallocation increases transaction volume and improves expected social welfare compared with a baseline mechanism that does not reallocate profits.

Future work includes extending the chain-based model to trees or DAGs, studying alternative solution concepts beyond sequential equilibria, such as anti-collusion and sybil-proofness, as well as extending posted-price dynamic to more general dynamics.



\newpage
\bibliographystyle{plainnat}
\bibliography{_reference}

\appendix
\crefalias{section}{appendix}
\crefalias{subsection}{appendix}
\crefalias{subsubsection}{appendix}
\newpage
\section*{Appendix}

\section{Further Related Work}
\label{sec:related}

A line of works focus on ensuring the security, privacy, trustworthiness, and tractability of data utilization through federated learning \cite{bonawitz2019towards, hard2018federated}, differential privacy \cite{abadi2016deep, liu2021dealer}, and blockchain \cite{xin2025tbds, kim2024blockchain}.
These works enable tracking transaction chains, allowing us to execute profit reallocation in real world based on this technique.

In practical data trading, data flow typically follows a specific sequential direction, constituting a chain structure.
For instance, in training of Large Language Models (LLMs), raw data on the internet may be processed to produce clean data \citep{CommonCrawl_Repository}, which are then used by LLM algorithm \citep{agarwal2019marketplace} and finally sold to users \citep{bommasani2021opportunities}.
Household data is another example: these data are often traded among data brokers and companies for better strategic decisions-making \citep{FTC2014DataBrokers,acquisti2016economics}.

Related to our study, diffusion auctions, referral auctions, and multilevel marketing mechanisms \citep{emek2011mechanisms, li2017mechanism, li2022diffusion, zhang2020redistribution} share a similar idea of rewarding intermediate participants; however, players are often rewarded for propagating information instead of propagating trades.
\citet{condorelli2017bilateral,manea2018intermediation} also study players' equilibrium behavior on sequential trades by explicitly modeling each trade process as a take-it-or-leave-it offer.
Our study differs from these works in many aspects. First, the traded object is replicable data rather than non-replicable traditional goods, so the seller can sell the data without losing it.
Second, we explicitly study the effect of external mechanisms on the players' trading behaviors.
Lastly, data trading often occurs on a platform rather than a social network; players typically do not know one another, making collusion unlikely to form between players.

\section{Omitted Algorithms}
\label{sec:omitted-algo}

\cref{sec:omitted-algo} provides all algorithms used in \cref{subsec:exact-finite,subsec:approx-continuous}.

\begin{algorithm}[!htp]
\caption{Algorithm for Exact Sequential Equilibrium}
\label{alg:exact}
\KwIn{$\mathrm{MODEL}=\left(n, L, W=\{w_l\}_\linL, \bmt = \{t_{ij}\}_{0\le i,j\le n}, Q = \{q_{i,k,l}\}_{\iinn,\kinL,\linL}\right)$}
\KwOut{$\{u^*(i,k)\}_{\iinn,\kinL}, \{p^*(i,k)\}_{0\le i<n, \kinL}$}
\textbf{Constraint: $\sum_\linL q_{i,k,l} = 1$}\\
$u^*(n,k) \gets w_k$\\
\For{$i=n-1,n-2,...,1,0$}
{
    \For{$k=1,2,...,L$}
    {
        $u^*(i,k),l^*(i,k) \gets \text{Algorithm \ref{alg:exact-OS}}(\mathrm{MODEL},i,k,$ \\
        $\qquad \{u^*(j,l),p^*(j,l),r^*(i,l)\}_{j>i,\linL})$ \\
        \eIf{$l^*(i,k) \ne -1$}
        {
            $p^*(i,k) \gets u^*(i+1,l^*(i,k))$\\
            $r^*(i,k) \gets p^*(i,k) \cdot \left( \sum_l \one\{u^*(i+1, l) \ge p^*(i,k)\} q_{i,k,l} \right)$\\
        }
        {
            $p^*(i,k) \gets 0$\\
            $r^*(i,k) \gets 0$\\
        }
    }
}
\textbf{return}: $\{u^*(i,k)\}_{\iinn,\kinL}, \{p^*(i,k)\}_{0\le i<n, \kinL}$
\end{algorithm}

\begin{algorithm}[!htp]
\caption{OptimalStrategy}
\label{alg:exact-OS}
\KwIn{$\mathrm{MODEL}, i, k, \{u^*(j,l), p^*(j,l), r^*(i,l)\}_{j>i,\linL}$}
\KwOut{$u^*(i,k), l^*(i,k)$}
\textbf{Constraints:} $0\le i < n$\\
$\calL \gets \{l\in [L]: u^*(i+1, l) \ge 0 \}$\\
\If{$\calL = \emptyset$}{\textbf{return}: $u^*(i,k) \gets w_k, l^*(i,k) \gets -1$}
\For{$k' \in \calL$}
{
    $u(i,k,k') \gets \text{Algorithm \ref{alg:exact-EU}}(\mathrm{Model}, i, k, k', \{u^*(j,l), p^*(j,l), r^*(j,l)\}_{j>i,\linL})$\\
}

$l^*(i,k) \gets \argmax_{k'\in\calL} u(i,k,k')$\\
$u^*(i,k) \gets u(i,k,l^*(i,k))$\\
\textbf{return}: $u^*(i,k), l^*(i,k)$
\end{algorithm}

\begin{algorithm}[!htp]
\caption{ExpectedUtility}
\label{alg:exact-EU}
\KwIn{$\mathrm{MODEL}, i, k, k', \{u^*(j,l), p^*(j,l), r^*(i,l)\}_{j>i,\linL}$}
\KwOut{$u(i,k,k')$}
\textbf{Constraints:} $0\le i < n$\\
\For{$l=1,2,...,L$}
{
    $s_{i,k,k'}(i+1,l) \gets \one\{u^*(i+1,l) \ge u^*(i+1,k')\} \cdot q_{i+1,k,l}$\\
}
\For{$j=i+1,...,n-1$}
{
\For{$l'=1,2,...,L$}
{
    $s_{i,k,k'}(j+1,l') \gets \sum_\linL \left( s_{i,k,k'}(j,l) \cdot q_{j+1,l,l'} \cdot \one\{u^*(j+1,l')\ge p^*(j,l)\}\right)$\\
}
}
success rate $\gets \sum_{l\in L} \left(q_{i+1,k,l} \cdot \one\{u^*(i+1,l)\ge u^*(i+1,k')\}\right)$\\
one-shot profit $\gets w_k + \left( t_{i+1,i}\cdot u^*(i+1,k') \cdot \text{success rate}\right)$\\
future reallocated profit $\gets \sum_{i<j<n,\linL} \left(s_{i,k,k'}(j,l) \cdot t_{j+1,i} \cdot r^*(j,l)\right)$\\
\textbf{return}: $\text{one-shot profit} + \text{future reallocated profit}$
\end{algorithm}

\begin{algorithm}[t]
\caption{Algorithm for $\varepsilon$-Approximate Sequential Equilibrium}
\label{alg:approx}
\KwIn{$\varepsilon$: approximation measure, $\mathrm{MODEL}=(n,\calF, \{t_{ij}\}_{0\le i,j\le n})$}
\KwOut{$\tbmp^*,\tbms^*$: an $\varepsilon$-approximate sequential equilibrium}
\textbf{Parameters:} $L_0, H_0\in \bbR, K_0\in\bbR_+$\\
\eIf{$H_0 > 0$}
{
    $\varepsilon_0 \gets \frac{\min(\varepsilon,1)}{4(1 + n K_0)(1 + \max(0,-L_0))}$
}
{
    $\varepsilon_0 \gets \varepsilon / 2$
}
$L \gets \lfloor \frac{H_0 - L_0}{\varepsilon_0} \rfloor + 1$\\
$W = \{L_0 + (l-1)\cdot \varepsilon_0\}_\linL$\\
\For{$\iinn,k,\ell\in [L]$}
{
$q(i,k,l) = F_i(w_l|w_k) - F_i(w_{l-1}|w_k)$\\
}
$\{u^*(i,k)\}_{\iinn,\kinL}, \{p^*(i,k)\}_{0\le i<n, \kinL} \gets \text{Algorithm \ref{alg:exact}}(n,L,W,\{q(i,k,l)\}_{\iinn,\kinL,\linL}, \{t_{ij}\}_{0\le i,j\le n})$\\
Transform $\{u^*(i,k)\}_{\iinn,\kinL}, \{p^*(i,k)\}_{0\le i<n, \kinL}$ into $(\tbmp, \tbms)$ by rounding inputs with \cref{eq:fptas-1}\\
\textbf{return} $\tbmp^*$, $\tbms^*$
\end{algorithm}

\newpage
\section{Omitted Proofs}
\label{sec:omitted-proofs}

\subsection{Proof of \texorpdfstring{\cref{lem:simplify-sell}}{}}
\lemSimplifySell*

\begin{proof}
\label{prf:lem:simplify-sell}
We prove this by backward induction starting from the last \emph{trade} $n$.

Recall that player $n$'s expected utility is $u_n(x_n;v_n,h_n) = \prod_{j<n} x_j \cdot x_n(v_n - p_{n-1})$ from \cref{eq:utility}.
It's clear that we only need to consider the case that trade $n-1$ occurs ($x_j = 1$ for all $j \in [n-1]$).
Then it simplifies to $u_n(x_n;v_n,h_n) = x_n(v_n - p_{n-1})$
Clearly, regardless of player $i\le n-1$'s strategies, there always exists player $n$'s optimal strategy $\tilde{x}_n^*(v_n, p_{n-1})$ that relies only on $(v_n,p_{n-1})$.

Now, we can assume that player $n$'s strategy always has the form in \cref{lem:simplify-sell}.
Then, we consider player $n-1$'s expected utility by choosing optimal $p_{i-1}$, which is:
\begin{equation}
\label{eq:prf:lem:simplify-sell:1}
\begin{aligned}
& u_{n-1}(x_{n-1}, p_{n-1};v_{n-1}, h_{n-1}) = 
\left( \prod_{j\le n-1} x_j \right) \cdot
\\
& \left( v_{n-1} - p_{n-2} + \bbE_{v_n\sim \tmu_{n-1}(\cdot | v_{n-1},h_{n-1})} [t_{n,n-1}\cdot p_{n-1}\cdot \tx_n^\dagger(v_n, p_{n-1})] \right).
\end{aligned}
\end{equation}

Note that trade $i-1$ occurs is the only interested case. In this case, player $i-1$'s utility simplifies to 
\begin{equation}
\label{eq:prf:lem:simplify-sell:1-1}
\begin{aligned}
& u_{n-1}(x_{n-1}, p_{n-1};v_{n-1}, h_{n-1}) = 
\\
& \left( v_{n-1} - p_{n-2} + \bbE_{v_n\sim \tmu_{n-1}(\cdot | v_{n-1},h_{n-1})} [t_{n,n-1}\cdot p_{n-1}\cdot \tx_n^\dagger(v_n, p_{n-1})] \right).
\end{aligned}
\end{equation}

Note that $p_{n-1}$ only appears in the expectation term in \cref{eq:prf:lem:simplify-sell:1-1}.
Also note that by \cref{def:markov}, $\tmu_{n-1}$ depends only on $v_{n-1}$.
As a consequence, the expectation term in \cref{eq:prf:lem:simplify-sell:1-1} relies only on $v_{n-1}$ and $p_{n-1}$.
Clearly, regardless of strategies in trade $j \le n-1$, there always exists player $n-1$'s optimal strategy $\tp_{n-1}^\dagger(v_{n-1})$ that maximizes player $n-1$'s utility and relies only on $v_{n-1}$. 


Now, by backward induction, we assume the existence of optimal strategies for all \emph{trade} $j > i$ that depend only on $v_j$ and $p_{j-1}$, i.e., $\tx_j(v_j,h_j) = \tx^\dagger_j(v_j, p_{j-1}) $, $\tp_{j-1}(v_{j-1},h_{j-1}) = \tp^\dagger_{j-1}(v_{j-1})$.
We study player $i-1$'s pricing strategy and player $i$'s buying strategy in trade $i$. Again, we only need to consider the case that trade $i-1$ occurs.

First, consider player $i$'s utility from choosing $x_i$ as in \cref{eq:utility}.
Based on the induction hypothesis, all $p_j$s ($\jgei$) and $x_j$ ($j>i$) within the expectation term depend only on $\bmv_{i:n}$.
Since the posterior of $\bmv_{i+1:n}$ relies solely on $v_i$ (by \cref{def:markov} again), the entire expectation term is fully a function of $v_i$.
Overall, the utility function in \cref{eq:utility} depends only on $x_i, v_i$ and $p_{i-1}$.
Clearly, regardless of strategies in trade $j < i$, there exists a optimal strategy $\tx^\dagger_i(v_i, p_{i-1})$ that relies only on $v_i, p_{i-1}$.

Second, consider player $i-1$'s utility from setting price $p_{i-1}$ when trade $i-1$ occurs:
{
\begin{equation}
\label{eq:prf:lem:simplify-sell:2}
\begin{aligned}
& u_{i-1}(x_{i-1}, p_{i-1};v_{i-1},h_{i-1}) = (v_{i-1}-p_{i-2})
+ 
\\
& \bbE_{\bmv_{i:n}\sim \tmu_{i-1}(\cdot | v_{i-1},h_{i-1})}
\left[
\sum_{i-1 < j \le n}\left( t_{j,i-1} p_{j-1} 
\prod_{i-1 < k \le j} x_{k}
\right)\right]
\end{aligned}
\end{equation}
}
Note that $p_{i-1}$ only affects the expectation term.
Also note that by \cref{def:markov}, $\tmu_{i-1}$ depends only on $v_{i-1}$. 
As a consequence, the expectation term in \cref{eq:prf:lem:simplify-sell:2} relies only on $v_{i-1}$ and $p_{i-1}$.
Clearly, regardless of strategies in trade $j \le i-1$, there always exists a optimal strategy $\tp_{i-1}(v_{i-1})$ that maximizes player $i-1$’s utility and relies only on $v_{i-1}$.
This completes the inductive step.

The strategy profile consisting optimal strategies constructed in the above proof naturally constitutes a sequential equilibrium. This completes the proof.

\end{proof}

\subsection{Proof of \texorpdfstring{\cref{lem:simplify-buy}}{}}
\lemSimplifyBuy*
\begin{proof}
\label{prf:lem:simplify-buy}
Let $\tp_i^\dagger(v_i)$ be the equilibrium pricing strategy in \cref{lem:simplify-sell}.
Similar with proof of \cref{lem:simplify-sell}, recall that player $i$'s utility is,
\begin{equation}
\begin{aligned}
u_i(x_i, p_i;v_i,h_i) = \prod_{j \le i}x_j \left(
(v_i-p_{i-1})
+ \bbE_{\bmv_{i+1:n}\sim \calF(\cdot|v_i)}
\left[
\sum_{i < j \le n} t_{ji} p_{j-1} 
\prod_{i< k \le j} x_{k}
\right]\right)
\end{aligned}
\end{equation}

Similarly, if trade $i-1$ does not occur, then player $i$ has zero utility and any action is indifferent to player $i$.
Therefore, we focus on the interested case where trade $i-1$ occurs, then, player $i$'s utility becomes,
\begin{equation}
\begin{aligned}
u_i(x_i, p_i;v_i,h_i) = x_i \left(
(v_i-p_{i-1})
+ \bbE_{\bmv_{i+1:n}\sim \calF(\cdot|v_i)}
\left[
\sum_{i < j \le n} t_{ji} p_{j-1} 
\prod_{i< k \le j} x_{k}
\right]\right)
\end{aligned}
\end{equation}

It's clear that $\tx^\dagger_i(v_i,p_{i-1}) = \one \{ \max_{p_i} u_i(1,p_i;v_i,h_i) \ge \max_{p_i} u_i(0,p_i;v_i,h_i) \}$ can maximize player $i$'s utility.
This leads to the condition:
\begin{equation}
\begin{aligned}
v_i + \max_{p_i \in \bbR_+}\quad&
\bbE_{\bmv_{i+1:n}\sim \calF(\cdot|v_i)}
\left[
\sum_{i < j \le n} t_{ji} p_{j-1} 
\prod_{i< k \le j} x_{k}
\right] \ge p_{i-1}
\\
\Leftrightarrow\quad& \ts^\dagger_i(v_i) \ge p_{i-1}
\end{aligned}
\end{equation}
In addition, setting
\begin{equation}
\tp^\dagger_i(v_i) \in \argmax_{p_i \in \bbR_+}\quad
\bbE_{\bmv_{i+1:n}\sim \calF(\cdot|v_i)}
\left[
\sum_{i < j \le n} t_{ji} p_{j-1} 
\prod_{i< k \le j} x_{k}
\right]
\end{equation}
can maximize player $i$'s utility. This completes the proof.


\end{proof}

\subsection{Proof of \texorpdfstring{\cref{thm:algo-exact}}{}}
\thmAlgoExact*

\begin{proof}
\label{prf:thm:algo-exact}

We prove \cref{thm:algo-exact} by describing how \cref{alg:exact} can compute exact equilibrium.

\paragraph{Price Simplification.}
Before describing the main algorithm, we first establish the finite search space for the seller's optimal price.
For a fixed player $i$, consider the finite set $\mathbf{P_i}$ of player $i+1$'s possible threshold values:
\begin{equation}\label{eq:price-is-next-u}
\mathbf{P_i} = \big\{\,u^*(i{+}1,\ell):\ \ell\in[L]\,\big\}.
\end{equation}

A critical observation is that $\mathbf{P_i}$ must contain player $i$'s optimal price as long as $\mathbf{P_i}$ consists of at least one non-negative element. In the opposite case that all elements in $\mathbf{P_i}$ are negative, $p_i=0$ is one of the optimal prices for player $i$.

The reason is as follows:
(a) If all elements in $\mathbf{P_i}$ are negative, whatever price $p_i \in \bbR_+$ player $i$ sets, player $i+1$ will not buy the data from player $i$. (Recall that player $i+1$ with $v_{i+1} = w_l$ will buy the data iff $u^*(i+1,l) \ge p_i$.) Therefore, player $i$ always gets $0$ profit from all subsequent trades. As a consequence, all prices $p_i$ are indifferent to player $i$, and \Wlg $p_i = 0$ is one of the optimal prices.
(b) If some elements in $\mathbf{P_i}$ are non-negative, let the maximum value in $\mathbf{P_i}$ be $u^*(i+1,l_0)$. It's obvious that setting $p_i > u^*(i+1,l_0)$ is weakly dominated by setting $p_i \le u^*(i+1,l_0)$, as player $i$ guarantees zero profit in the former case, and non-negative profit in the latter case. Therefore, there is $p^*_i \le u^*(i+1,l_0)$ be player $i$'s optimal price.
In addition, increasing $p^*_i$ up to the next value in $\mathbf{P_i}$ does not affect player $i+1$'s buying action and weakly increases player $i$'s revenue from trade $i+1$, consequently weakly increasing player $i$'s utility.
Therefore, we conclude that there is optimal $p^*_i \in \mathbf{P_i}$.
This reduces the search for $p^*(i,k)$ from a continuous space to a discrete set of size at most $L$.

\paragraph{Backward Dynamic Program.}
The high-level idea of \cref{alg:exact} is to compute $\{u^*(i,k),p^*(i,k)\}$ by backward induction from $i=n{-}1,\ldots,0$,
with $u^*(n,k) = w_k$ as the terminal condition. We also define auxiliary variable $l^*(i,k) \in \{-1\} \cup [L]$. $l^*(i,k) = -1$ means that all elements in $\mathbf{P_i}$ are negative, and consequently we have $p^*(i,k)=0$. $l^*(i,k) \in [L]$ means that player $i$ with $v_i = w_k$ achieves optimal utility by setting price $p^*(i,k) = u^*(i+1,l^*(i,k))$.

For each player $i$ and each valuation $w_k$, we first check whether $\mathbf{P_i}$ contains non-negative elements: (a) If all elements in $\mathbf{P_i}$ are negative, we return $l^*(i,k) = -1$ in \cref{alg:exact-OS}. As player $i$ does not gain profit from selling data, we must have $u^*(i,k) = w_k$.
(b) $\mathbf{P_i}$ contains non-negative elements, and we let $\calL = \{l \in [L]: u^*(i+1,l) \ge 0\}$ as the non-negative indexes.
We enumerate over the $|\calL|$ candidate prices $p_i = u^*(i+1,k')$, where $k' \in \calL$ is the price index, and define $u(i,k,k')$ as the gain function value of player $i$ when $v_i = w_k$ and player $i$ sets price $p_i = u^*(i+1,k')$:
\begin{align*}
u(i,k,k') := g_i(w_k,u^*(i+1,k')).
\end{align*}
where $g_i$ is defined in \cref{eq:lem:simplify-buy:3}.

The optimal pricing strategy is then chosen by maximizing this gain function value over $\calL$:
\begin{align*}
&l^*(i,k)\in\arg\max_{k'\in \calL} \ u(i,k,k'), \\
u^*(i,k) &= u(i,k,l^*(i,k)),\ \  
p^*(i,k) = u^*(i+1,l^*(i,k)).
\end{align*}
We also define the expected one-shot profit as an auxiliary value, which will be used in the remaining steps:
\begin{equation}\label{eq:rev-one-shot}
r^*(i,k)\ :=\ p^*(i,k)\cdot \sum_{\ell\in[L]} q_{i,k,\ell}\, 1\!\left\{\,u^*(i{+}1,\ell)\ \ge\ p^*(i,k)\,\right\}.
\end{equation}

\paragraph{Computing the Gain Function Value $u(i,k,k')$.}
The core difficulty lies in calculating $u(i,k,k')$, which, due to the introduction of the profit reallocation mechanism, involves \emph{future reallocated profits} from all future trades $j>i$.
Naively computing this term by summing over all possible future valuation profiles $\bm v_{i+1:n}$ leads to an exponential complexity.

We overcome this by decomposing $u(i,k,k')$ and using a Forward Dynamic Program.
Note that $u(i,k,k')$ decomposes into three terms: direct valuation from data, one-shot profit from subsequent trade $i+1$, and the expected reallocated profits from all future trades:
\begin{align}
&g_i(v_i = w_k, p_i = u^*(i+1,k')) \nonumber \\
&= w_k + \bbE_{v_{i+1}}\!\Big[t_{i+1,i}\,p_i\cdot 1\{\text{$x_{i+1} = 1$}\}\Big]\label{eq:ui-two-terms}\\
&\quad+ \bbE_{\bmv_{i+1:n}}\!\Bigg[\sum_{i<j<n} t_{j+1,i}\,\tp^*_j(v_j)\cdot 1\{\text{$\prod_{i+1 \le k \le j+1} x_k = 1$}\}\Bigg].\label{eq:reallocated-profit-stream}
\end{align}

The first two terms \eqref{eq:ui-two-terms} are easy to compute within polynomial time.
The challenge is to compute the \emph{reallocated profits} term \eqref{eq:reallocated-profit-stream}. We first rewrite the reallocated profits term as a sum over the discrete events:
\begin{align}
&\bbE_{\bmv_{i+1:n}}\!\Bigg[\sum_{i<j<n} t_{j+1,i}\,\tp^*_j(v_j)\cdot 1\{\text{$\prod_{i+1 \le k \le j+1} x_k = 1$}\}\Bigg]\nonumber\\
&=\sum_{i<j<n}\sum_{\ell\in[L]} t_{j+1,i}\,p^*(j,\ell)\cdot \Pr\!\big[X(j,\ell)\big],\label{eq:reallocated-profits-expand}
\end{align}
where $X(j,l)$ is the event that $v_j = w_l$ and $\prod_{i+1 \le k \le j+1} x_k = 1$.

\paragraph{Forward Dynamic Program.}
To calculate the probabilities of event $X(j,l)$, we introduce a auxiliary event, $Y(j, l)$, defined as $\one\{v_j = w_l \text{ and } \prod_{i+1 \le k \le j} x_k = 1\}$.
We next design \emph{Forward Dynamic Program} to compute $\Pr[Y(j,l)]$ recursively:
\begin{itemize}[left=0em]
\item Initialization (Trade $i+1$): For all $l \in [L]$, the probability that $x_{i+1}=1$ and $v_{i+1}=w_l$, given $v_i = w_k$ is: \begin{equation}\label{eq:init-Y}
\Pr\!\big[Y(i{+}1,\ell)\big]
\;=\; q_{i+1,k,\ell}\cdot 1\!\left\{u^*(i{+}1,\ell)\ge u^*(i{+}1,k')\right\}.
\end{equation}
\item Recursion (Trade $j \rightarrow j+1$):
We can calculate the probability of $\Pr\!\big[Y(j{+}1,\ell')\big]$ by summing over the $L$ possible valuations $w_l$ at stage $j$:
\begin{equation}
\begin{aligned}
&\Pr\!\big[Y(j{+}1,\ell')\big]
=\Pr\left[v_{j+1} = w_{\ell'}, \prod_{i+1 \le k \le j+1} x_k = 1\right]
\\
= & \sum_l \Pr\left[v_{j+1} = w_{\ell'}, v_j = w_{\ell}, \prod_{i+1 \le k \le j+1} x_k = 1\right]
\\
= & \sum_l \Pr[v_{j+1} = w_{\ell'}, x_{j+1} = 1, Y(j,\ell)]
\\
= & \sum_l \Pr[v_{j+1} = w_{\ell'}, x_{j+1} = 1 | Y(j,\ell)] \cdot \Pr[Y(j,\ell)]
\\
= & \sum_l \Pr[v_{j+1} = w_{\ell'}, x_{j+1} = 1 | v_j = \ell] \cdot \Pr[Y(j,\ell)]
\\
= & \sum_{\ell\in[L]} \Pr\!\big[Y(j,\ell)\big]\cdot q_{j+1,\ell,\ell'}\cdot 1\!\left\{u^*(j{+}1,\ell')\ge p^*(j,\ell)\right\}.
\end{aligned}
\end{equation}
\end{itemize}
This recursion is precisely the update at lines 5--9 of \Cref{alg:exact-EU} (where $s_{i,k,k'}(j,l) \coloneqq \Pr[Y(j,l)]$).
The forward dynamic program runs from $j=i+1$ to $n-1$ in $O(n L^2)$ time, eliminating the exponential complexity through enumeration.

Finally, the required $\Pr[X(j,l)]$ is related to the expected one-shot profit $r^*(j,l)$ and $\Pr[Y(j,l)]$.
Note that conditioned on $v_j=w_\ell$, trade $j{+}1$ occurs if and only if (i) trade $j$ occurs, and (ii) player $j+1$ with valuation $v_{j+1}$ buys the data with price $p_j$. Therefore
{
\begin{equation}
\label{eq:X-from-Y}
\begin{aligned}
\Pr[X(j,\ell)|Y(j,\ell)]
=& \Pr[x_{j+1}=1 | v_j = w_l]
\\
=& \sum_{\ell'\in[L]} \Pr[x_{j+1}=1, v_{j+1} = w_{l'} | v_j = w_l]
\\
=& \sum_{\ell'\in[L]} \Pr[v_{j+1} = w_{l'}, u^*(j+1,l') > p^*(j,l) | v_j = w_l]
\\
=& \sum_{\ell'\in[L]} q_{j,\ell,\ell'}\cdot 
1\left\{u^*(j+1,\ell')\ge p^*(j,\ell)\right\}
\end{aligned}
\end{equation}
}
Combining \cref{eq:reallocated-profits-expand} and \cref{eq:X-from-Y}, we have
\begin{equation}\label{eq:rev-identity}
p^*(j,\ell)\cdot \Pr\!\big[X(j,\ell)\big] \;=\; r^*(j,\ell)\cdot \Pr\!\big[Y(j,\ell)\big].
\end{equation}
\cref{eq:rev-identity} allows the polynomial-time computation of the entire reallocated profit term in \cref{eq:reallocated-profits-expand} based on the values of $\Pr[Y(j,\ell)]$s.
This computation is implemented in line 12 of \Cref{alg:exact-EU}.



Overall, by nesting the Forward DP within the Backward DP, \cref{alg:exact} successfully computes all $p^*(i,k)$ and $u^*(i,k)$ values.
Based on \cref{lem:simplify-buy} and by defining $\tp_i^*(w_k) = p^*(i,k)$, $\ts_i^*(w_k) = u^*(i,k)$, we know the strategy profile $\tbmp^*,\tbmp^s$ constitutes an exact sequential equilibrium.

The running time is:
$$\underbrace{\mathcal{O}(n)}_{\text{Backward DP}} \times \underbrace{\mathcal{O}(L)}_{\text{Seller valuation}} \times \underbrace{\mathcal{O}(L)}_{\text{Price candidates}} \times \underbrace{\mathcal{O}(n L^2)}_{\text{Forward DP}} = \mathcal{O}(n^2 L^4),$$

\end{proof}

\subsection{Proof of \texorpdfstring{\cref{thm:algo-approx}}{}}
\thmAlgoApprox*
\begin{proof}
\label{prf:thm:approx-algo}

We first show the proof idea of \cref{thm:algo-approx}.
The algorithm consists of three steps: (i) discretize the continuous valuation support and construct a discretized model, (ii) compute the exact sequential equilibrium of the discretized model by \cref{alg:exact}, and (iii) lift those strategies with discrete inputs back to strategies with continuous inputs.
The core difficulty lies in proving a polynomial approximation bound of the output strategy profile with the discretization precision $\varepsilon_0$, the Lipschitz constant $K_0$, as well as the interval length $H_0 - K_0$.

\noindent\textbf{Interval Discretization.}
We select a discretization precision $\varepsilon_0 >0$ (to be determined later) and define the grid size $L$ and the discretized grid points $W$:
\[
L ~:=~ \Big\lfloor \varepsilon^{-1} (H_0 - L_0) \Big\rfloor + 1,\qquad
w_\ell ~:=~ \ell \cdot \varepsilon + L_0,\quad 0\le \ell\le L.
\]
The grid points $w_\ell$ partition the continuous support $[L_0,H_0]$ into cells $I_\ell = [w_\ell, w_{\ell+1})$.
We can verify that $w_\ell \in [L_0, H_0]$ for $0 \le \ell \le L-1$.

To construct the induced discretized model, we define the one-step transition probabilities $q_{i,k,\ell}$ between grid points $w_k$ and $w_\ell$ as the probability of the next player's valuation falling in $I_\ell$ given the current player's valuation is exactly $w_k$:
\[
q_{i,k,\ell} ~:=~ F_i(w_{\ell+1}\mid w_k) - F_i(w_\ell\mid w_k),
\qquad \iinn, 0\le k,\ell < L
\]
with the conventions $F_i(w_0\mid w_k)=0$ and $F_i(w_L\mid w_k)=1$. Also denote $\hat{\calF}$ as the valuation profile distribution induced by discrete transition probabilities $Q \coloneqq \{q_{i,k,\ell}\}_{\iinn, 0\le k,\ell < L}$. 

Applying the exact algorithm (\cref{alg:exact}) to the discretized model $\mathrm{Model} = (n,L,W,\bmt, Q)$ yields the exact sequential equilibrium for the discretized model, characterized by the elements:
\[
\{u^*(i,k)\}_{i\in[n],\,0\le k < L}, \qquad
\{p^*(i,k)\}_{0\le i<n,\,0\le k < L}
\]
with $p^*(i,k) = u^*(i+1,l^*(i,k))$ if $l^*(i,k) \ne -1$ and $p^*(i,k) = 0$ otherwise. 

\vspace{0.2cm}
\paragraph{Lifting to Continuous Strategies.}

The next step is to construct an approximate equilibrium $(\tbmp^*,\tbms^*)$ based on the $\{u^*(i,k)\}_{i\in[n],\,0\le k < L},\;\{p^*(i,k)\}_{0\le i<n,\,0\le k < L}$.

For any continuous valuation $v_i \in [L_0, H_0]$, we define its corresponding grid index $k(v_i)$:
\[
k(v_i) := \big\lfloor \varepsilon^{-1} (v_i - L_0)\big\rfloor,
\]
We can verify that $0 \le k(v_i) \le L - 1$ for all $\iinn, v_i\in [L_0, H_0]$.
so that $w_{k(v_i)}\le v_i < w_{k(v_i)+1}$. 
The \emph{lifted strategies} for the continuous model representation $\mathrm{Model} = (n,\calF,\{t_{ij}\}_{0\le i,j \le n})$ are piecewise constant, using the discrete solutions $u^*(i,k)$ and $p^*(i,k)$ evaluated at the corresponding grid index $k(v_i)$:
\begin{equation}\label{eq:fptas-1}
\tilde s_i^*(v_i) ~:=~ u^*\bigl(i,k(v_i)\bigr), \qquad
\tilde p_i^*(v_i) ~:=~ p^*\!\bigl(i,\,k(v_i)\bigr).
\end{equation}

Importantly, for a similar reason in the proof of \cref{thm:algo-exact} (\cref{prf:thm:algo-exact}), since player $i+1$'s threshold strategy ($\ts^*_{i+1}$) is piecewise constant over the grid cells, the optimal deviated price $p'_i$ for player $i$ can, without loss of generality, again be restricted to the finite set of potential threshold values of player $i+1$:
$
\calP_i ~:=~ \bigl\{\,u^*(i{+}1,\ell):~ 0\le \ell < L\,\bigr\}.
$
This is crucial for subsequent analysis, since we need to consider the deviations to a finite set $\calP_i$.

\vspace{0.2cm}
\noindent\textbf{The Polynomial Error Bound.}
The core challenge is giving a polynomial error bound between $\ts^*_i(v_i)$ and $\ts^\dagger_i(v_i)$ as defined by \cref{def:eps-equilibrium}.

Recall the player $i$'s gain function:
\begin{equation}\label{eq:fptas-ui}
\begin{aligned}
& g_i(v_i,p_i) = v_i ~+~ \bbE_{\bmv_{i+1:n}\sim \calF(\cdot \mid v_i)} \!\Big[\,r_i\big(p_i,\bmv_{i+1:n}\big)\,\Big],
\\
& r_i\big(p_i,\bmv_{i+1:n}\big) = \sum_{i<j\leq n}\left( t_{ji} p_{j-1} \prod_{i<k\leq j}x_k\right)
\end{aligned}
\end{equation}
where $r_i(p_i,\bmv_{i+1:n})$ is the profits from all subsequent trades $(i+1:n)$ when valuation profile is $\bmv_{i+1:n}$, player $i$ sets the price $p_i$, and players $(i+1:n)$ follow the strategies $\big(\ts_j(v_j), \tp_j(v_j)\big)$.
Note that the strategies for players $(i+1:n)$ are constant within each grid cell. Whenever $v_j\in [w_{k_j},w_{k_j+1})$ for all $j>i$, we have
\begin{equation}\label{eq:cell-constant}
r_i\big(p_i,\bmv_{i+1:n}\big) ~=~ r_i\big(p_i, w_{k_{i+1}},\ldots,w_{k_n}\big).
\end{equation}


Using \eqref{eq:cell-constant}, we can decompose the expectation term in \eqref{eq:fptas-ui} by the events
\[
W_j(k)~:=~\{\,v_j\in[w_{k},w_{k+1})\,\}\quad\text{and}\quad
\hat W_j(k)~:=~\{\,v_j=w_{k}\,\},\qquad  \forall i+1\le j \le n,\quad 0\le k \le L-1
\]
to obtain
\begin{equation}\label{eq:cell-sum}
\begin{aligned}
g_i(v_i,p_i) =& v_i + \sum_{\forall i+1 \le j \le n, 0\le k_j\le L-1} \bbE_{\bmv_{i+1:n}\sim \calF(\cdot \mid v_i)} \Big[r_i\big(p_i,\bmv_{i+1:n}\big)\one\{\forall i+1 \le j \le n, v_j \in [w_{k_j}, w_{k_j+1}) \}\Big]
\\
=& v_i + \sum_{\forall i+1 \le j \le n, 0\le k_j\le L-1} \bbE_{\bmv_{i+1:n}\sim \calF(\cdot \mid v_i)} \Big[r_i\big(p_i,w_{k_{i+1}},...,w_{k_n}\big)\one\{\bigcap_{j=i+1}^n W_j(k_j) \}\Big]
\\
=& v_i + \sum_{\forall i+1 \le j \le n, 0\le k_j\le L-1}
r_i\big(p_i,w_{k_{i+1}},\ldots,w_{k_n}\big)\,
\Pr_{\bmv_{i+1;n} \sim \calF(\cdot|v_i)}\!\Big[\bigcap_{j=i+1}^n W_j(k_j)\Big].
\end{aligned}
\end{equation}

\vspace{0.2cm}
\noindent\textbf{Bounding the Probability between $\calF$ and $\hat{\calF}$.}
We use log-Lipschitzness (\Cref{asp:lipschitz}) to relate the probability under the continuous distribution $\calF$ and the probability under the discrete distribution $\hat{\calF}$.
Since in event $W_j(k_j)$ we have $v_j \in [w_{k_j}, w_{k_j+1})$, we have $|v_j -w_{k_j}| \le \varepsilon_0$. Applying the log-Lipschitzness condition across all players $(i+1:n)$, we can derive:
\begin{equation}
\label{eq:cell-prob-bound-upper}
\begin{aligned}
&\Pr_{\bmv_{i+1:n}\sim\calF(\cdot|v_i)}\left[\bigcap_\jggei W_j(k_j)\right]
= \iint_{\bmv_{i+1:n}\in \times_\jggei [w_{k_j},w_{k_j+1})} \prod_\jggei f_j(v_j|v_{j-1}) \dd \bmv\\
=& \iint_{\bmv_{i+1:n}\in \times_\jggei [w_{k_j},w_{k_j+1})} \prod_\jggei f_j(v_j|w_{k_{j-1}} + v_{j-1} - w_{k_{j-1}}) \dd \bmv\\
\le& \iint_{\bmv_{i+1:n}\in \times_\jggei [w_{k_j},w_{k_j+1})} \prod_\jggei f_j(v_j|w_{k_{j-1}}) \cdot \exp(K_0\cdot |v_{j-1} - w_{k_{j-1}}|) \dd \bmv\\
\le& \iint_{\bmv_{i+1:n}\in \times_\jggei [w_{k_j},w_{k_j+1})} \prod_\jggei f_j(v_j|w_{k_{j-1}}) \cdot \exp(K_0\cdot\varepsilon_0) \dd \bmv\\
\le& e^{nK_0\varepsilon_0} \cdot \prod_\jggei \int_{v_j\in[w_{k_j},w_{k_j+1})} f_i(v_j|w_{k_{j-1}}) \dd v_j\\
=& e^{nK_0\varepsilon_0} \cdot \prod_\jggei q_{j,k_{j-1},k_j}
= e^{nK_0\varepsilon_0} \cdot \Pr_{\bmv_{i+1:n}\sim \hat{\calF}(\cdot|v_i = w_{k_i})}[\bigcap_\jggei \hat{W}_j(k_j)]
\end{aligned}
\end{equation}

The symmetric inequality from~\Cref{asp:lipschitz} yields a similar lower bound:
\begin{equation}\label{eq:cell-prob-bound-lower}
\Pr_{\bmv_{i+1:n}\sim\calF(\cdot|v_i)}\Big[\bigcap_{j=i+1}^n W_j(k_j)\Big]
~\ge~ e^{-nK_0\varepsilon_0}\, \Pr_{\bmv_{i+1:n}\sim \hat{\calF}(\cdot|v_i = w_{k_i})}\Big[\bigcap_{j=i+1}^n \hat W_j(k_j)\Big].
\end{equation}

Next, we can bound the gain function $g_i(v_i,p_i)$ in $\calF$ relative to the same gain function in $\hat{\calF}$.

\paragraph{Upper Bound on $\ts^\dagger_i(v_i)$.}
Fix $v_i\in [w_{k_i},w_{k_i+1})$.
Let $p'_i \in \mathcal{P}_i$ be player $i$'s deviation in price.
By former discussions, it's without loss of generality to let $p'_i = u^*(i+1,\ell')$ for some $\ell'$ or $p'_i=0$ if $u^*(i+1,\ell')$s are negative for all $\ell'$, since there is always some optimal deviation within $\{0\} \cup \{u^*(i+1,\ell')\}_{0\le \ell' \le L-1}$.
Using $v_i \le w_{k_i} + \varepsilon_0$ and the upper probability bounds \eqref{eq:cell-sum}, \eqref{eq:cell-prob-bound-upper}, and note that $r_i(\cdot)$ is always non-negative, we have
\begin{equation}\label{eq:upper-final}
\begin{aligned}
&g_i(v_i,p'_i) \nonumber\\
&\le w_{k_i} + \varepsilon_0 + \sum_{k_{i+1:n}} r_i\big(p'_i,w_{k_{i+1:n}}\big)\, e^{nK_0\varepsilon_0}
\,\Pr_{\bmv_{i+1:n} \sim \hat{\calF}(\cdot|v_i = w_{k_i})}\!\Big[\bigcap_{j=i+1}^n \hat W_j(k_j)\Big]\nonumber\\
&= w_{k_i} + \varepsilon_0 + e^{nK_0\varepsilon_0} \left( \sum_{k_{i+1:n}} r_i\big(p'_i,w_{k_{i+1:n}}\big)\, 
\,\Pr_{\bmv_{i+1:n} \sim \hat{\calF}(\cdot|v_i = w_{k_i})}\!\Big[\bigcap_{j=i+1}^n \hat W_j(k_j)\Big] \right)\nonumber\\
& = w_{k_i} + \varepsilon_0 + e^{nK_0\varepsilon_0} (u(i,k_i,\ell') - w_{k_i})
\\
&\le w_{k_i} + \varepsilon_0 + e^{nK_0\varepsilon_0}\cdot (u^*(i,k_i) -w_{k_i}).
\\
&\le (1-e^{nK_0\varepsilon_0})L_0  + \varepsilon_0 + e^{nK_0\varepsilon_0}\cdot u^*(i,k_i)
\end{aligned}
\end{equation}

The second equality holds because the term in the parenthesis (plus $w_{k_i}$) is exactly the gain function of player $i$ with valuation distribution $\hat{\calF}$, player $i$'s valuation $v_i = w_{k_i}$ and price $p'_i$, \ie, $u(i,k_i,\ell')$. 
The second inequality holds because $u^*(i, k_i)$ is the maximum of $u(i,k_i,\ell')$ over all $\ell'$. The last inequality holds because $L_0 \le w_{k_i}$.

Next, we know that
\begin{equation}
\label{eq:prf:thm:approx-algo:1}
s^\dagger_i(v_i) \coloneqq \max_{p'_i}\quad g_i(v_i,p'_i) ~\le~ (1-e^{nK_0\varepsilon_0})L_0  + \varepsilon_0 + e^{nK_0\varepsilon_0}\cdot u^*(i,k_i)
\end{equation}

\paragraph{Lower Bound on $g_i(v_i,\tp^*_i(v_i))$.}
For the lifted price $\tp_i^*(v_i)$ ($\tp_i^*(v_i)=u^*(i{+}1,l^*(i,k_i))$ if $l^*(i,k_i) \ne -1$ and $\tp_i^*(v_i)=0$ otherwise), we use $v_i\ge w_{k_i}$ and the lower bound \eqref{eq:cell-prob-bound-lower}:
\begin{equation}\label{eq:lower-final}
\begin{aligned}
&g_i\!\big(v_i,\tp_i^*(v_i)\big) \nonumber\\
&\ge w_{k_i} + \sum_{k_{i+1:n}} r_i\big(\tp_i^*(v_i),w_{k_{i+1:n}}\big)\,
e^{- nK_0\varepsilon_0}\,
\Pr_{\bmv_{i+1:n} \sim \hat{\calF}(\cdot|v_i = w_{k_i})}\!\Big[\bigcap_{j=i+1}^n \hat W_j(k_j)\Big]\nonumber\\
&= w_{k_i} + e^{- nK_0\varepsilon_0}
\left( \sum_{k_{i+1:n}} r_i\big(\tp_i^*(v_i),w_{k_{i+1:n}}\big)\,
\Pr_{\bmv_{i+1:n} \sim \hat{\calF}(\cdot|v_i = w_{k_i})}\!\Big[\bigcap_{j=i+1}^n \hat W_j(k_j)\Big]\right)\nonumber\\
&= w_{k_i} + e^{-nK_0\varepsilon_0}
(u^*(i,k_i) - w_{k_i})\nonumber\\
&\ge (1-e^{-nK_0\varepsilon_0}) L_0 + e^{-nK_0\varepsilon_0}
u^*(i,k_i)
\end{aligned}
\end{equation}

The second equality holds by definition, because the term in the parenthesis (plus $w_{k_i}$) is the gain function of player $i$ with valuation distribution $\hat{\calF}$, player $i$'s valuation $v_i = w_{k_i}$ and optimal price $p^*(i,k_i)$. The last inequality holds because $L_0 \le w_{k_i}$.

\paragraph{Approximation Bound for $H_0 < 0$.}

We first consider $H_0 < 0$, which is roughly an uninterested corner case.
In the game where $v_i$ is guaranteed negative for all $i$, a strategy profile $(\tbmp,\tbms)$ where $\tp_i(v_i)=0$ and $\ts_i(v_i) = v_i$ is a straightforward sequential equilibrium. The approximate equilibrium output by \cref{alg:approx} will return $\ts^*_i(v_i) = w_{k(v_i)}$ and $\tp^*_i(v_i)=0$. It's straightforward to see that as long as $\varepsilon_0 \le \varepsilon$, then $\tbmp^*,\tbms^*$ output by \cref{alg:approx} constitutes an $\varepsilon$-approximate sequential equilibrium.

\paragraph{Approximation Bound for $H_0 \ge 0$.}
Combining the bounds \eqref{eq:upper-final} and \eqref{eq:lower-final}, and noting that $\ts_i^*(v_i)=u^*(i,k_i)$, we have following inequalities:
\begin{equation}\label{eq:prf:thm:approx-algo:2}
\begin{aligned}
\ts_i^*(v_i) \ge& e^{-nK_0\varepsilon_0}\cdot \Big(\ts^\dagger_i(v_i) - (1-e^{nK_0\varepsilon_0})L_0 - \varepsilon_0\Big)
\\
\ts^*_i(v_i) \le& e^{nK_0\varepsilon_0}\cdot \Big(\ts^\dagger_i(v_i) - (1-e^{-nK_0\varepsilon_0})L_0\Big)
\\
g_i(v_i,\tp_i^*(v_i)) \ge& e^{-2nK_0\varepsilon_0}\cdot \Big(\ts^\dagger_i(v_i) - (1-e^{2nK_0\varepsilon_0})L_0 - \varepsilon_0\Big)
\end{aligned}
\end{equation}
and consequently,
\begin{equation}\label{eq:prf:thm:approx-algo:3}
\begin{aligned}
\ts_i^*(v_i) - \ts^\dagger_i(v_i) \ge& (e^{-nK_0\varepsilon_0}-1)\ts^\dagger_i(v_i) - (e^{-nK_0\varepsilon_0}-1)L_0 - e^{-nK_0\varepsilon_0} \varepsilon_0
\\
\ts^*_i(v_i) - \ts^\dagger_i(v_i) \le& (e^{nK_0\varepsilon_0}-1)\ts^\dagger_i(v_i) - (e^{nK_0\varepsilon_0}-1)L_0
\\
g_i(v_i,\tp_i^*(v_i)) - \ts^\dagger_i(v_i) \ge& (e^{-2nK_0\varepsilon_0}-1)\ts^\dagger_i(v_i) - (e^{-2nK_0\varepsilon_0}-1)L_0 - e^{-2nK_0\varepsilon_0} \varepsilon_0
\end{aligned}
\end{equation}

Recall the requirement for $\varepsilon$-approximate sequential equilibrium for strategy profile $(\tbmp^*,\tbms^*)$:

\begin{itemize}[left=0em]
\item $g_i(v_i,\tp_i(v_i)) - \ts^\dagger_i(v_i) \ge - \varepsilon \cdot \max\{1, \ts^\dagger_i(v_i)\}$, 
\item $|\ts_i(v_i) - \ts^\dagger_i(v_i)| \le \varepsilon \cdot \max\{1, \ts^\dagger_i(v_i)\}$, 
\end{itemize}

By regarding $\ts^\dagger_i(v_i)$ as a variable, the approximation requirement $\varepsilon\cdot \max\{1, \ts^\dagger_i(v_i)\}$ is point-wise maximum of two linear functions, and the bound we already have in \cref{eq:prf:thm:approx-algo:3} is a linear function. A sufficient condition for $\varepsilon$-approximation is to check that our linear bound is below the requirement bound for any $\ts^\dagger_i(v_i)$. To do so, we need to guarantee that the slope of linear function in the bound lies between the slope of two linear functions in the approximation requirements, and the value of our bound in $\ts^\dagger_i(v_i)$ is below the value of the approximation requirements in $\ts^\dagger_i(v_i)$. This leads to following inequalities:
\begin{equation}\label{eq:prf:thm:approx-algo:4}
\begin{aligned}
& -\varepsilon \le e^{-n K_0 \varepsilon_0} - 1 \le 0,
& & (e^{-nK_0\varepsilon_0}-1)(1-L_0) - e^{-nK_0\varepsilon_0} \varepsilon_0 \ge -\varepsilon
\\
& 0 \le e^{n K_0 \varepsilon_0} - 1 \le \varepsilon,
& & (e^{nK_0\varepsilon_0}-1)(1-L_0) \le \varepsilon
\\
& -\varepsilon \le e^{-2n K_0 \varepsilon_0} - 1 \le 0,
& & (e^{-2nK_0\varepsilon_0}-1)(1-L_0) - e^{-2nK_0\varepsilon_0} \varepsilon_0 \ge -\varepsilon
\end{aligned}
\end{equation}

By choosing $\varepsilon_0 = \frac{\min(\varepsilon,1)}{4(1 + n K_0)(1 + \max(0,-L_0))}$, above inequalities hold simultaneously.
This establishes that the strategy profile $(\tbmp^*,\tbms^*)$ is an $\varepsilon$-approximate sequential equilibrium.


\vspace{0.2cm}
\noindent\textbf{Complexity Analysis}\label{subsubsec:fptas-complexity}
With the chosen $\varepsilon_0$, the grid size $L$ is:
\[
L ~=~ \Big\lfloor \frac{4(1 + n K_0)(1 + \max(0,-L_0))(H_0 - L_0)}{\min(\varepsilon,1)} \Big\rfloor + 1,
\]
It's clear to see that $L$ is polynomial in $n,K_0,\varepsilon^{-1},H_0 - L_0$, and the algorithm for exact equilibrium runs in $O(n^2L^4)$ time (\Cref{thm:algo-exact}). Therefore, the overall running time of \cref{alg:approx} is $\poly\big(n,K_0,\varepsilon^{-1},H_0 - L_0\big)$.

\end{proof}

\subsection{Proof of \texorpdfstring{\cref{thm:partial}}{}}
\thmPartial*
\begin{proof}\label{prf:thm:partial}

We start with a technical lemma that establishes the property of the equilibrium.

\begin{restatable}{lemma}{lemSequentialInvariance}
\label{lem:sequential-invariance}
Let $(\tbmp^*,\tbms^*)$ be the sequential equilibrium under mechanism $\bmt$. Fix player index $i$ and valuation $v$. If $t_{j,i}=0$ for all $j\ne i+1$, then
$\tp^*_i(v)=\hat p^*_{i+1}\cdot \ts^*_{i+1}(1)\cdot v$,
for some constant $\hat p^*_{i+1}$ determined solely by $F_{i+1}$.
\end{restatable}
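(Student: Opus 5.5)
The plan is to reduce the claim to a scale-invariance property of the equilibrium thresholds. I would first show that $\ts^*_{i+1}(w)=\ts^*_{i+1}(1)\cdot w$ for all $w>0$. Then I would use the hypothesis $t_{j,i}=0$ for $j\neq i+1$ to reduce player $i$'s pricing problem to a one-dimensional posted-price problem, and finally rescale that problem using \cref{asp:homo}.

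\textbf{Step 1 (homogeneity of equilibrium strategies).} I would prove by backward induction on $j=n,n-1,\dots,i+1$ that, for all $\alpha>0$,
\[
\ts^*_j(\alpha v)=\alpha\,\ts^*_j(v) \quad\text{and}\quad \tp^*_j(\alpha v)=\alpha\,\tp^*_j(v).
\]
\begin{itemize}[left=0em]
\item \emph{Base case.} It holds trivially since $\ts^*_n(v)=v$.
\item \emph{Inductive step.} By \cref{asp:homo} and \cref{def:markov}, the conditional law of $\bmv_{j+1:n}$ given $v_j=\alpha v$ equals the law of $\alpha\bmv_{j+1:n}$ given $v_j=v$. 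Suppose player $j$ posts $\alpha p$ when $v_j=\alpha v$, and all downstream strategies are homogeneous by the induction hypothesis. Then every downstream buying decision $\one\{\ts^*_k(v_k)\ge p_{k-1}\}$ is unchanged, and every downstream price is multiplied by $\alpha$. Hence
\[
g_j(\alpha v,\alpha p)=\alpha\, g_j(v,p),
\]
with $g_j$ as in \cref{eq:lem:simplify-buy:3}.
\item \emph{Conclusion of the step.} The map $p\mapsto\alpha p$ is a bijection of $\bbR_+$, so the argmax sets scale by $\alpha$ and $\ts^*_j$ scales by $\alpha$.
\end{itemize}
The point needing care is that the \emph{selected} maximizer also scales. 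Here I would invoke the convention of \cref{sec:theoretic} that tie-breaking is fixed independently of the mechanism, and take it to be scale-consistent, for instance by choosing the argmax in normalized units $p/v$. By \cref{asp:posi}, only $v>0$ matters. This step is the main obstacle, since it rests on the equilibrium selection convention rather than on pure optimization.

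\textbf{Step 2 (reduction of player $i$'s problem).} Since $t_{j,i}=0$ for $j\ne i+1$, \cref{eq:lem:simplify-buy:3} collapses to
\[
g_i(v,p)=v+t_{i+1,i}\,p\,\Pr\bigl[\ts^*_{i+1}(v_{i+1})\ge p \mid v_i=v\bigr].
\]
Set $c=\ts^*_{i+1}(1)$. Because $\ts^*_{i+1}(w)\ge w>0$ under \cref{asp:posi} (the max in \cref{eq:lem:simplify-buy:2} includes nonnegative reallocation), we have $c>0$. By Step 1 the event becomes $\{v_{i+1}\ge p/c\}$, whose probability is $1-F_{i+1}\bigl((p/c)^-\mid v\bigr)$.

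\textbf{Step 3 (rescaling).} Substitute $p=c\,v\,q$ and apply \cref{asp:homo} with $\alpha=v$. This gives
\[
g_i(v,c v q)=v+t_{i+1,i}\,c\,v\,\phi(q), \qquad \phi(q)=q\,\bigl(1-F_{i+1}(q^-\mid 1)\bigr).
\]
The function $\phi$ depends only on $F_{i+1}$.
\begin{itemize}[left=0em]
\item If $t_{i+1,i}>0$, the maximizers in $p$ are exactly $c\,v\cdot\argmax_q\phi(q)$. Define $\hat p^*_{i+1}$ as the tie-broken element of $\argmax_q\phi$, which is consistent with Step 1.
\item If $t_{i+1,i}=0$, every price is optimal, and the mechanism-independent tie-breaking again selects the same normalized price $\hat p^*_{i+1}$.
\end{itemize}
In both cases $\tp^*_i(v)=\hat p^*_{i+1}\,\ts^*_{i+1}(1)\,v$. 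The existence of a maximizer of $\phi$ should follow from $\phi(0)=0$, boundedness of the support, and upper semicontinuity of $q\mapsto 1-F(q^-\mid 1)$; I would check this briefly.
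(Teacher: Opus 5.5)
Your proposal is correct and is essentially the paper's own proof. The paper proves the stronger \cref{lem:sequential-invariance-2} by the same backward induction on $g_j(\alpha v,\alpha p)=\alpha\,g_j(v,p)$ (\cref{lem:partial-ui,lem:partial-ui:2}) plus a scale-consistent tie-breaking convention, and then, as in your Steps 2--3, collapses player $i$'s gain to a one-shot posted-price problem normalized by $\ts^*_{i+1}(1)\,v$.

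Two small tie-breaking points, which the paper also leaves implicit. First, for $\hat p^*_{i+1}$ to be independent of $c=\ts^*_{i+1}(1)$, the rule should commute with all positive rescalings; normalizing by $v$ alone is not enough. Second, in the degenerate case $t_{i+1,i}=0$ the tie set is all of $\bbR_+$, and mechanism-independence does not force the selection $\hat p^*_{i+1}\,c\,v$: a rule that depends only on the tie set and commutes with rescaling would pick $0$. This case is harmless for \cref{thm:global}, which invokes the lemma only for players still on the baseline allocation $t_{i+1,i}=1$.
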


\noindent The proof of \cref{lem:sequential-invariance} is only technical and deferred to \cref{sec:omitted-proofs}. We now focus on proving the main theorem.

\paragraph{Proving (2).}

By condition (a), the profit reallocation mechanism $t^k_{j,i'}$ for all players $i' \ge i+1$ in the subgame starting at player $i+1$ are identical between $k=1,2$.
Since the equilibrium in the subgame is determined solely by the model representation and profit reallocation mechanism, the equilibrium strategies must coincide, as $\argmax$ is uniquely defined by some tie-breaking rule. Therefore, we must have $\tp^{2,*}_j=\tp^{1,*}_j$ and $\ts^{2,*}_j=\ts^{1,*}_j$ for all $j>i$. This directly proves (2) for $j>i+1$.


For trades $j \le i$, by condition (d), player $j-1$ only receives one-shot profit from trade $j$ ($t^k_{l,j-1}=0$ for $l>j$). This satisfies the condition for \cref{lem:sequential-invariance}. Thus, the price set by seller $j-1$ satisfies $\tp^{k,*}_{j-1}(v_{j-1}) = \hat p^*_j \cdot \ts^{k,*}_{j}(1) \cdot v_{j-1}$, where $\hat p^*_j$ is a constant determined solely by $F_{j+1}$.

The equilibrium trade event then becomes,
\begin{equation*}
\begin{aligned}
E^k_i
= \{\bmv: \ts^{k,*}_j(v_{j}) \ge \tp^{k,*}_{j-1}(v_{j-1})\}
= \{\bmv: v_j \cdot \ts^{k,*}_j(1) \ge v_{j-1} \cdot \hat p^*_i \cdot \ts^{k,*}_j(1) \}
= \{\bmv: v_j \ge v_{j-1}\cdot \hat p^*_i \}.
\end{aligned}
\end{equation*}
which is independent of the mechanism $k \in \{1, 2\}$. Therefore, the equilibrium trade event $E^k_i$ is constant between $k=1,2$ for $j \le i$. This completes the proof of (2).


\paragraph{Proving (1).}

Since the threshold strategy $\ts^{k,*}_{i+1}(v_{i+1})$ is the same for $k=1,2$, proving $\tp^{2,*}_i(v)\ \le\ \tp^{1,*}_i(v)$ (Statement (1)) is sufficient to prove $E^1_{i+1} \subseteq E^2_{i+1}$.
Therefore, we focus on proving $\tp^{2,*}_i(v)\ \le\ \tp^{1,*}_i(v)$.

Let $g^k_i(v_i,p_i)$ be the gain function of seller $i$ for setting price $p_i$ under mechanism $\bmt^k$ (\cref{eq:lem:simplify-buy:3}), we make below decomposition:
\begin{equation}
\begin{aligned}
g^k_i(v_i,p_i)\coloneqq =v_i+r^k_i(v_i,p_i)+d^k_i(v_i,p_i),\quad
r^k_i(v_i,p_i)\coloneqq \bbE\!\left[t^k_{i+1,i}\,p_i\,\one\!\{\ts^{k,*}_{i+1}(v_{i+1})\ge p_i\}\right],\\
d^k_i(v_i,p_i)\coloneqq \bbE \Big[\sum_{i<j<n} t^k_{j+1,i}\,\tp^{k,*}_j(v_j)\,\one\!\{\ts^{k,*}_{i+1}(v_{i+1})\ge p_i\}
\prod_{i<j'\le j}\one\!\{\ts^{k,*}_{j'+1}(v_{j'+1})\ge \tp^{k,*}_{j'}(v_{j'})\}\Big].
\end{aligned}
\end{equation}
where $r^k_i$ is the one-shot profit and $d^k_i$ is the future reallocated profit.
The gain function $g^k_i(v_i,p_i)$ becomes
\begin{equation}
g^k_i(v_i,p_i)=v_i+r^k_i(v_i,p_i)+d^k_i(v_i,p_i).
\end{equation}
We focus on the change in the gain function $\Delta g_i(p) \coloneqq g^2_i(v_i,p) - g^1_i(v_i,p)$ when the mechanism moves from $\bmt^1$ to $\bmt^2$.
Let $p^k = \tp^{k,*}_i(v_i)$ be the optimal price for mechanism $\bmt^k$. 
By conditions (b) and (c):
\begin{itemize}[left=0em]\itemsep0.2em
\item Let $\Delta r_i(v_i,p) \coloneqq r^2_i(v_i,p) - r^1_i(v_i,p)$. Then, for any $p, p'$,
\begin{equation}
r^2_i(v_i,p) \le r^2_i(v_i,p')\quad \Longrightarrow\quad \Delta r_i(v_i,p) \ge \Delta r_i(v_i,p'),
\end{equation}
because $\ts^{1,*}_{i+1} = \ts^{2,*}_{i+1}$ and consequently $r^2_i(v_i,p) = \frac{t^2_{i+1,i}}{t^1_{i+1,i}} r^1_i(v_i,p)$ with coefficient $\frac{t^2_{i+1,i}}{t^1_{i+1,i}} \le 1$.

\item The difference $\Delta d_i(v_i,p):=d^2_i(v_i,p)-d^1_i(v_i,p)$ and $d^k_i(v_i,p), k\in\{1,2\}$ are \emph{weakly decreasing} in $p$, because higher $p$ weakly reduces the event $\{\ts^{k,*}_{i+1}(v_{i+1}) \ge p\}$ and $t^2_{j+1,i} \ge t^1_{j+1,i}\ge 0$ for all $j\ge i+1$.
\end{itemize}

Now, suppose, for contradiction, that $p^2>p^1$.
By the optimality of $p^2$ under $\bmt^2$, we have $g^2_i(v_i,p^2)\ge g^2_i(v_i,p^1)$.
Since $d^2_i$ is weakly decreasing in $p$, we must have $d^2_i(v_i,p^2) \le d^2_i(v_i,p^1)$.
This implies that:
\begin{equation}\label{eq:thm:partial:1}
\begin{aligned}
g^2_i(v_i,p^2)\ge g^2_i(v_i,p^1),\; d^2_i(v_i,p^2) \le d^2_i(v_i,p^1)
\Rightarrow r^2_i(v_i,p^2) \ge r^2_i(v_i,p^1)
\Rightarrow \Delta r_i(v_i,p^2) \le \Delta r_i(v_i,p^1).
\end{aligned}
\end{equation}
We now analyze the term $S \coloneqq \left[g^1_i(v_i,p^2) - g^1_i(v_i,p^1)\right] - \left[g^2_i(v_i,p^2) - g^2_i(v_i,p^1)\right]$.
\begin{equation*}
\begin{aligned}
S =& r^1_i(v_i,p^2) - r^1_i(v_i,p^1)
+ d^1_i(v_i,p^2) - d^1_i(v_i,p^1)
- \big(r^2_i(v_i,p^2) - r^2_i(v_i,p^1)
+ d^2_i(v_i,p^2) - d^2_i(v_i,p^1)\big)
\\
=& \Delta r_i(v_i,p^1) - \Delta r_i(v_i,p^2) + \Delta d_i(v_i,p^1) - \Delta d_i(v_i,p^2) \ge 0
\end{aligned}
\end{equation*}
where $\Delta r_i(v_i,p^1) - \Delta r_i(v_i,p^2) \ge 0$ is because \cref{eq:thm:partial:1} and $\Delta d_i(v_i,p^1) - \Delta d_i(v_i,p^2)\ge 0$ is because $\Delta d_i(v_i,p)$ weakly decreases in $p$ and $p^2 > p^1$. Then, we conclude that
$g^1_i(v_i,p^2) - g^1_i(v_i,p^1) \ge 0$,
which indicates $g^1_i(v_i,p^2) = g^1_i(v_i,p^1)$ by optimality of $p^1$, and $g^2_i(v_i,p^2) = g^2_i(v_i,p^1)$ by $S\ge 0$. But this indicates $p^1 = p^2$ because the tie-breaking rule does not rely on $\bmt$, which contradicts the assumption $p^2 > p^1$ and proves (1).
\end{proof}

\subsection{Proof of \texorpdfstring{\cref{lem:sequential-invariance}}{}}
\label{prf:lem:sequential-invariance}

We will actually prove a more general lemma:

\begin{restatable}{lemma}{lemSequentialInvarianceTwo}
\label{lem:sequential-invariance-2}
Let $(\tbmp^*,\tbms^*)$ be the sequential equilibrium under mechanism $\bmt$. Then, for all player index $i$, $\alpha>0$ and valuation $v$:
\begin{enumerate}\itemsep0.25em
\item[(1)] $\ts^*_i(\alpha v)=\alpha\,\ts^*_i(v)$;
(2) $\tp^*_i(\alpha v)=\alpha\,\tp^*_i(v)$
\item[(3)] If, in addition, $t_{j,i}=0$ for all $j\ne i+1$, then
\[
\tp^*_i(v)=\hat p^*_{i+1}\cdot \ts^*_{i+1}(1)\cdot v,
\]
for some constant $\hat p^*_{i+1}$ determined solely by $F_{i+1}$.
\end{enumerate}
\end{restatable}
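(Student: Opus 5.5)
We argue by backward induction on the player index $i$, from $i=n$ down to $i=0$, proving (1) and (2) together at each stage and then deriving (3) from them. The only distributional input is \cref{asp:homo}, which we use as a change of variables. If $v_{i+1}\sim F_{i+1}(\cdot\mid \alpha v)$, then $v_{i+1}/\alpha\sim F_{i+1}(\cdot\mid v)$. Iterating this along the Markov chain (\cref{def:markov}), the whole future profile $\bmv_{i+1:n}$ conditional on $v_i=\alpha v$ has the law of $\alpha\bmv_{i+1:n}$ conditional on $v_i=v$. By \cref{asp:posi}, all valuations are positive, so every scaling argument stays inside the domain where \cref{asp:homo} applies.

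\textbf{Base case.} For $i=n$ we have $\ts^*_n(v)=v$, so (1) is immediate and (2) is vacuous.

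\textbf{Inductive step.} Assume (1) and (2) hold for every $j>i$. Fix $v$ and $\alpha>0$, and compare $g_i(\alpha v,\alpha p)$ with $g_i(v,p)$ from \cref{eq:lem:simplify-buy:3}. Realize the future valuations under $v_i=\alpha v$ as $\alpha\bmv_{i+1:n}$ with $\bmv_{i+1:n}\sim\calF(\cdot\mid v)$.

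By the induction hypothesis, every downstream quantity scales by $\alpha$:
\begin{itemize}
\item each price satisfies $\tp^*_j(\alpha v_j)=\alpha\tp^*_j(v_j)$;
\item each threshold satisfies $\ts^*_j(\alpha v_j)=\alpha\ts^*_j(v_j)$.
\end{itemize}
Consequently each purchase indicator $\one\{\ts^*_{j+1}(\alpha v_{j+1})\ge \alpha\tp^*_j(\alpha v_j)\}$ is unchanged. The first indicator, $\one\{\ts^*_{i+1}(\alpha v_{i+1})\ge\alpha p\}$, is also unchanged, while every price $p_{j-1}$ appearing in the sum is multiplied by $\alpha$. Hence
\[
g_i(\alpha v,\alpha p)=\alpha\, g_i(v,p)\quad\text{for all } p\ge 0 .
\]
Since $p\mapsto\alpha p$ is a bijection of $\bbR_+$, the argmax set at $\alpha v$ is exactly $\alpha$ times the argmax set at $v$. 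Taking maxima then gives (1) for player $i$, via \cref{eq:lem:simplify-buy:2}.

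\textbf{Main obstacle: tie-breaking in (2).} Statement (2) needs the selected maximizer itself to scale, not just the argmax set. This is where I expect trouble. My plan is to invoke the convention of \cref{sec:theoretic} that the tie-breaking rule is fixed. I would interpret it as scale-invariant, for instance picking the largest maximizer, which commutes with multiplication by $\alpha$. With that convention, (2) follows directly from the scaling of the argmax set. If a general tie-breaking rule must be allowed, the fallback is to show that the equilibrium obtained by rescaling, $v\mapsto \alpha\tp^*_i(v/\alpha)$, is again an equilibrium; (2) would then hold for that chosen equilibrium.

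\textbf{Statement (3).} Now suppose $t_{j,i}=0$ for every $j\ne i+1$. Then
\[
g_i(v,p)=v+t_{i+1,i}\,p\,\Pr\bigl[\ts^*_{i+1}(v_{i+1})\ge p\mid v_i=v\bigr].
\]
Apply (1) to player $i+1$ with scale factor $v_{i+1}$, which gives $\ts^*_{i+1}(v_{i+1})=v_{i+1}\ts^*_{i+1}(1)$. Substitute $p=\hat p\,\ts^*_{i+1}(1)\,v$ and use \cref{asp:homo} to write $v_{i+1}=v\,z$, where $z\sim F_{i+1}(\cdot\mid 1)$. Assuming $\ts^*_{i+1}(1)>0$, which positivity should give, the objective becomes
\[
v+t_{i+1,i}\,\ts^*_{i+1}(1)\,v\cdot \hat p\,\Pr[z\ge\hat p].
\]
Its maximizer is $\hat p^*_{i+1}\in\argmax_{\hat p}\hat p\,(1-F_{i+1}(\hat p^-\mid 1))$, which depends only on $F_{i+1}$, again under the fixed tie-breaking rule. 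This yields $\tp^*_i(v)=\hat p^*_{i+1}\,\ts^*_{i+1}(1)\,v$. The degenerate case $t_{i+1,i}=0$ is handled by the tie-breaking convention as well.
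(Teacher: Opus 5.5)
Your proposal is correct and follows the paper's proof. Both argue by backward induction: homogeneity plus the inductive hypothesis give $g_i(\alpha v,\alpha p)=\alpha\, g_i(v,p)$, taking maxima yields (1), a fixed tie-breaking rule yields (2), and normalizing the price by $\ts^*_{i+1}(1)\,v$ yields (3). You also state explicitly what the paper's ``by tie-breaking rule'' leaves implicit: the selection must be scale-equivariant (e.g., the largest maximizer), and (3) requires $\ts^*_{i+1}(1)>0$.
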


We first provide some lemmas (\cref{lem:price-invariance,lem:partial-ui,lem:partial-ui:2}), then prove the main lemma (\cref{lem:sequential-invariance-2}).

\begin{lemma}
\label{lem:price-invariance}
Let $v_i>0$, $\hat{p}_i^*(v_i) = \argmax_{p} G_i(p|v_i)\cdot p$ be the optimal price to sell a data product to some player with distribution $F_i(\cdot|v_i)$, then $\hat{p}_i^*(\alpha v_i) = \alpha \hat{p}_i^*(v_i)$ for $\alpha > 0$.

Let $\hat{\alpha}_i^*(v_i) = G_i(\hat{p}_i^*(v_i)|v_i)$ be the probability that the data product is sold, then $\hat{\alpha}_i^*(v_i) = \hat{\alpha}_i^*$ for some $\hat{\alpha}_i^* > 0$.
\end{lemma}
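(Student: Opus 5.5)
The plan is to reduce everything to the homogeneity assumption (\cref{asp:homo}) by a change of variables. Here $G_i(p\mid v_i)$ denotes the survival probability $1-F_i(p\mid v_i)$, or more precisely $\Pr[v_{i}\ge p\mid v_{i-1}=v_i]$ in the indexing of the lemma. Homogeneity gives $F_i(\alpha p\mid \alpha v)=F_i(p\mid v)$ for all $\alpha,p,v>0$, and hence $G_i(\alpha p\mid \alpha v)=G_i(p\mid v)$. For prices $p\le 0$ the same identity holds trivially, since under \cref{asp:posi} both sides equal $1$.

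\textbf{Step 1: scaling of the revenue curve.} Define the revenue function $R(p\mid v)\coloneqq G_i(p\mid v)\cdot p$. For $\alpha>0$ and any $p\in\bbR_+$, homogeneity gives
\[
R(\alpha p\mid \alpha v_i)=G_i(\alpha p\mid \alpha v_i)\,\alpha p=\alpha\, G_i(p\mid v_i)\,p=\alpha R(p\mid v_i).
\]
The map $p\mapsto \alpha p$ is a bijection of $\bbR_+$ onto itself, and multiplying the objective by $\alpha>0$ preserves maximizers. Therefore $\argmax_p R(p\mid \alpha v_i)=\alpha\cdot \argmax_p R(p\mid v_i)$.

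This gives $\hat p_i^*(\alpha v_i)=\alpha\hat p_i^*(v_i)$, provided the argmax is selected consistently. The tie-breaking rule is the delicate point. I would fix a scale-invariant selection, for instance the largest maximizer, or any rule commuting with scaling. The paper's convention that tie-breaking is independent of the mechanism should be read as allowing this choice.

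\textbf{Step 2: constancy of the sale probability.} Using Step 1 and homogeneity,
\[
\hat\alpha_i^*(\alpha v_i)=G_i(\alpha\hat p_i^*(v_i)\mid \alpha v_i)=G_i(\hat p_i^*(v_i)\mid v_i)=\hat\alpha_i^*(v_i).
\]
Every $v_i>0$ is a positive multiple of $1$, so the sale probability does not depend on $v_i$. We may therefore set $\hat\alpha_i^*\coloneqq\hat\alpha_i^*(1)$.

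\textbf{Step 3: existence and positivity (main obstacle).} Two things remain to be shown: that a maximizer exists, and that $\hat\alpha_i^*>0$.

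\emph{Positivity.} Under \cref{asp:posi}, $v_{i}>0$ almost surely. Hence there is some $p_0>0$ with $G_i(p_0\mid 1)>0$, which gives $\sup_p R(p\mid 1)\ge R(p_0\mid 1)>0$. Any maximizer $\hat p$ then satisfies $G_i(\hat p\mid 1)\,\hat p>0$, so $G_i(\hat p\mid 1)>0$.

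\emph{Existence.} The survival function $G$ is left-continuous in $p$ when it is defined with $\ge$. For existence I would either use the bounded support available in the settings considered, or argue via upper semicontinuity. On a compact interval $[0,\bar p]$ the revenue $R$ is upper semicontinuous, because it is a product of the nonnegative left-continuous, nonincreasing $G$ with $p$; outside that interval $R$ is small whenever $\bar p$ is large enough to push $G$ near zero. If the supremum were not attained, I would fall back on the finite-support case, where maximizers lie in a finite set as in the proof of \cref{thm:algo-exact}.

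I expect this existence and tie-breaking issue to be the only real subtlety. The scaling identities themselves are immediate from homogeneity.
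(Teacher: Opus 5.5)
Your argument is correct and is essentially the paper's. You substitute $p\mapsto\alpha p$, use homogeneity to get $G_i(\alpha p\mid\alpha v_i)\,\alpha p=\alpha\,G_i(p\mid v_i)\,p$, use tie-breaking to identify $\hat p_i^*(\alpha v_i)$ with $\alpha\hat p_i^*(v_i)$, and read off invariance of the sale probability. You also make two points explicit that the paper only asserts: that the tie-break must commute with scaling, and that $\hat\alpha_i^*>0$.

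The only slip is in the optional existence discussion, which the statement already presupposes by defining $\hat p_i^*$ as an argmax. There, $G_i(p\mid 1)\to 0$ does not force $p\,G_i(p\mid 1)\to 0$; Pareto tails with index at most $1$ are counterexamples, and the maximizer set can even be unbounded, so the ``largest maximizer'' may not exist. If you want a concrete scale-equivariant rule, take the smallest maximizer instead: it exists whenever some maximizer does, because the maximizer set is closed (by upper semicontinuity) and excludes $0$.
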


\begin{proof}[Proof of \cref{lem:price-invariance}]

We first show $\hat{p}_i^*(\alpha v_i) = \alpha \hat{p}_i^*(v_i)$. By simple calculation,
\begin{equation}
\begin{aligned}
& G_i(\hat{p}_i^*(\alpha v_i)|\alpha v_i) \cdot \hat{p}_i^*(\alpha v_i)
\\
=& \max_p\quad G_i(p|\alpha v_i) \cdot p 
\\
=& \max_p\quad G_i(\alpha p|\alpha v_i) \cdot \alpha p \cdots\text{rewrite $p$ with $\alpha p$}
\\
=& \max_p\quad G_i(p|v_i) \cdot \alpha p 
\\
=& G_i(\hat{p}_i^*(v_i)|v_i) \cdot \alpha \hat{p}_i^*(v_i) 
\\
&\cdots \text{Note that $\hat{p}_i^*(v_i)$ is the maximum point by definition}
\\
=& G_i(\alpha \hat{p}_i^*(v_i)|\alpha v_i) \cdot \alpha \hat{p}_i^*(v_i)
\end{aligned}
\end{equation}

By tie-breaking rule, both $\hat{p}_i^*(\alpha v_i)$ and $\alpha \hat{p}_i^*(v_i)$ is the argmax point of $G_i(p|\alpha v_i)\cdot p$.
Therefore, it must have $\hat{p}_i^*(\alpha v_i) = \alpha \hat{p}_i^*(v_i)$.
Lastly, we prove that $\hat{q}_i^*(\alpha v_i) = \hat{q}_i^*(v_i)$. Note that $\hat{q}_i^*(\alpha v_i) = G_i(\hat{p}_i^*(\alpha v_i)|\alpha v_i) = G_i(\alpha \hat{p}_i^*(v_i)|\alpha v_i) = G_i(\hat{p}_i^*(v_i)|v_i) = \hat{q}_i^*(v_i)$. We then complete the proof.
\end{proof}

\begin{lemma}
\label{lem:partial-ui}
Under conditions in \cref{lem:sequential-invariance-2}. Fix player index $i$ and assume \cref{lem:sequential-invariance-2} holds for $j > i$, then we have $g_i(\alpha_i v_i,\alpha_i p_i) = \alpha_i g_i(v_i,p_i)$.
\end{lemma}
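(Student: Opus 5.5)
We need to show $g_i(\alpha v_i,\alpha p_i)=\alpha\, g_i(v_i,p_i)$, given that parts (1)–(2) of \cref{lem:sequential-invariance-2} hold for every $j>i$: $\ts^*_j(\alpha v)=\alpha\ts^*_j(v)$ and $\tp^*_j(\alpha v)=\alpha\tp^*_j(v)$. The plan is a change of variables in the expectation defining $g_i$ in \cref{eq:lem:simplify-buy:3}. We write the expectation over $\bmv_{i+1:n}\sim\calF(\cdot\mid \alpha v_i)$ as an integral against the Markov chain of conditional laws $F_{i+1}(\cdot\mid \alpha v_i),F_{i+2}(\cdot\mid v_{i+1}),\ldots$. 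Then we substitute $v_j=\alpha v'_j$ for all $j>i$.

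\textbf{Step 1: the distribution scales.} By \cref{asp:homo}, $F_j(\alpha v\mid\alpha w)=F_j(v\mid w)$. Applying this link by link along the chain, the law of $\bmv_{i+1:n}$ given $v_i$ scaled to $\alpha v_i$ equals the law of $\alpha\bmv'_{i+1:n}$, where $\bmv'_{i+1:n}\sim\calF(\cdot\mid v_i)$. Formally, we show by induction on $j$ that the joint CDFs agree. Positivity (\cref{asp:posi}) ensures all arguments are positive, so \cref{asp:homo} applies.

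\textbf{Step 2: realized quantities scale pathwise.} Fix a realization $\bmv'_{i+1:n}$ and compare two runs: run A with price $p_i$ and valuations $\bmv'$, and run B with price $\alpha p_i$ and valuations $\alpha\bmv'$. We show by induction on $j$ that in run B every downstream price is $\alpha$ times the price in run A, and every buying decision is identical.

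For the base case $j=i+1$: $x_{i+1}=\one\{\ts^*_{i+1}(\alpha v'_{i+1})\ge\alpha p_i\}=\one\{\ts^*_{i+1}(v'_{i+1})\ge p_i\}$, using homogeneity of $\ts^*_{i+1}$ and $\alpha>0$. Also $p_{i+1}=\tp^*_{i+1}(\alpha v'_{i+1})=\alpha\tp^*_{i+1}(v'_{i+1})$. The inductive step is the same computation, with the threshold comparison $\ts^*_{j}(\alpha v'_j)\ge\alpha p_{j-1}$.

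It follows that the reallocated sum $\sum_{j>i}t_{ji}p_{j-1}\prod_{i<k\le j}x_k$ in run B is exactly $\alpha$ times that in run A. The $t_{ji}$ are mechanism constants, so they do not depend on $\alpha$.

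\textbf{Step 3: combine.} Taking expectations using Step 1, the expected reallocated profit at $(\alpha v_i,\alpha p_i)$ is $\alpha$ times the one at $(v_i,p_i)$. Adding the first term $\alpha v_i$ of $g_i$ gives the claim.

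\textbf{Main obstacle.} The delicate point is Step 1: making the distributional scaling rigorous for a multi-step Markov chain when only CDF homogeneity is assumed. We handle it by writing the expectation as an iterated integral and applying the pushforward under $v\mapsto\alpha v$ one coordinate at a time, from $j=n$ inward. At each step we use that the remaining integrand is already known to scale; this is the pathwise statement of Step 2. Ties in the threshold comparisons cause no trouble, because multiplying both sides by $\alpha>0$ preserves $\ge$ exactly.
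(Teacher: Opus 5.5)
Your proposal is correct and follows essentially the paper's argument: a change of variables $v_j \mapsto \alpha v_j$ justified by Homogeneity (with Positivity), combined with the scaling $\ts^*_j(\alpha v)=\alpha\,\ts^*_j(v)$ and $\tp^*_j(\alpha v)=\alpha\,\tp^*_j(v)$ for $j>i$, which leaves every buying decision unchanged and multiplies every downstream price by $\alpha$. The only difference is organizational: the paper first isolates the continuation term $w_{i+1}(v_{i+1})$, proves $w_{i+1}(\alpha v_{i+1})=\alpha\, w_{i+1}(v_{i+1})$ as a sublemma, and then substitutes in the outer expectation over $v_{i+1}$, whereas you push forward the whole chain at once and argue pathwise.
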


\begin{proof}[Proof of \cref{lem:partial-ui}]
\label{prf:lem:partial-ui}
We rewrite the form of $g_i(v_i,p_i)$ below:

\begin{equation}
\label{eq:lem:partial-ui:1}
\begin{aligned}
g_i(v_i,p_i) =& v_i + \bbE_{\bmv_{i+1:n}\sim \calF(\cdot|v_i)} \left[t_{i+1,i} \cdot p_i\cdot \one\{\ts^*_{i+1}(v_{i+1})\ge p_i\}\right.
\\
+& \sum_{i<j<n} t_{j+1,i} \cdot \tp^*_j(v_j)
\cdot \one\{\ts^*_{i+1}(v_{i+1})\ge p_i\}
\cdot \left. \left(\prod_{i< j'\le j} \one\{\ts^*_{j'+1}(v_{j'+1}) > \tp^*_{j'}(v_{j'})\}\right)
\right]
\end{aligned}
\end{equation}
We briefly explain what this form means: the first line consists of the valuation of data and the one-shot profit, the second line is the sum of future reallocated profit from future trade $j$, where the event $\one\{\ts^*_{i+1}(v_{i+1})\ge p_i\}$ represents the success of trade $i$.
Similarly, $\one\{\ts^*_{j'+1}(v_{j'+1}) > \tp^*_{j'}(v_{j'})\}$ is the event that trade $j+1$ succeeds.
Note that player $i$ can receive reallocated profit from trade $j$ if and only if all trades $i+1,\ldots,j-1,j$ succeed, represented by a product of event indicators
$$\one\{\ts^*_{i+1}(v_{i+1})\ge p_i\}\cdot \left(\prod_{i< j'\le j} \one\{\ts^*_{j'+1}(v_{j'+1}) > \tp^*_{j'}(v_{j'})\}\right)$$ representing that all trades in $i+1,...,j$ succeed.

We then transform $\bmv_{i+1:n}\sim \calF(\cdot|v_i)$ to 
\begin{equation}
v_{i+1}\sim F_{i+1}(\cdot|v_i),\quad \bmv_{i+2:n}\sim \calF(\cdot|v_{i+1})
\end{equation}
by \cref{def:markov}.
We merge the $\one\{v_{i+1}\ge p_i\}$ term in \cref{eq:lem:partial-ui:1} and get the follows:

\begin{equation}
\begin{aligned}
& g_i(v_i,p_i) = v_i + 
\bbE_{v_{i+1}\sim F_{i+1}(\cdot|v_i)}
\left[
\one\{v_{i+1} \ge p_i\} \right.
\\
\cdot& 
\left.
\left(
t_{i+1,i}\cdot p_i + 
\bbE_{\bmv_{i+2:n}\sim \calF(\cdot|v_{i+1})}
\left[
\sum_{i<j<n} t_{j+1,i} \cdot \tp^*_j(v_j) \cdot X_{i,j}(\bmv_{i+1:j+1})
\right]
\right)
\right]
\end{aligned}
\end{equation}
where $X_{i,j}(\bmv_{i+1:j+1}) = \prod_{i< j'\le j} \one\{\ts^*_{j'+1}(v_{j'+1}) > \tp^*_{j'}(v_{j'})\}$ represents the event that trade $i+1$ to $j+1$ succeeds.

We define the inner term

\begin{align*}
w_{i+1}(v_{i+1}) \coloneqq \bbE_{\bmv_{i+2:n}\sim \calF(\cdot|v_{i+1})}
\left[
\sum_{i<j<n} t_{j+1,i} \cdot \tp^*_j(v_j) \cdot X_{i,j}(\bmv_{i+1:j+1})
\right]
\end{align*}
then 

\begin{equation}
\begin{aligned}
g_i(v_i,p_i) =& v_i + 
\bbE_{v_{i+1}\sim F_{i+1}(\cdot|v_i)}
\left[
\one\{v_{i+1} \ge p_i\} \right.
\cdot
\left.
( t_{i+1,i}\cdot p_i + w_{i+1}(v_{i+1}) )
\right]
\end{aligned}
\end{equation}

We next show the following lemma holds.
\begin{lemma}
\label{lem:partial-ui:2}
Under the conditions in \cref{lem:partial-ui},
$w_{i+1}(\alpha v_{i+1}) = \alpha w_{i+1}(v_{i+1})$ for $\alpha > 0$.
\end{lemma}

\begin{proof}[Proof of \cref{lem:partial-ui:2}]
\label{prf:lem:partial-ui:2}

We observe that $X_{i,j}(\bmv_{i+1:j+1})$ is invariant under linear scaling, \ie, $X_{i,j}(\bmv_{i+1:j+1}) = X_{i,j}(\alpha v_{i+1},...,\alpha v_{j+1})$. This is because each inner term in $X_{i,j}$, $\one\{\ts^*_{j'+1}(\alpha v_{j'+1}) \ge \tp^*_{j'}(\alpha v_{j'})\} = \one\{\alpha \ts^*_{j'+1}(v_{j'+1}) \ge \alpha \tp^*_{j'}(v_{j'})\} = \one\{\ts^*_{j'+1}(v_{j'+1}) \ge \tp^*_{j'}(v_{j'})\}$. The first equality holds because the inductive assumption in \cref{lem:partial-ui}.

By \cref{asp:homo}, we know that $\calF(\bmv_{i+2:n}|v_{i+1}) = F_{i+2}(v_{i+2}|v_{i+1})\cdot...\cdot F_n(v_n|v_{n-1}) = F_{i+2}(\alpha v_{i+2}|\alpha v_{i+1})\cdot...\cdot F_n(\alpha v_n|\alpha v_{n-1})$. Then,
\begin{equation}
\label{eq:lem:partial-ui:2}
\begin{aligned}
w_{i+1}(\alpha v_{i+1}) =& \bbE_{\bmv_{i+2:n}\sim \calF(\cdot|\alpha v_{i+1})}
\left[
\sum_{i<j<n} t_{j+1,i} \cdot \tp^*_j(v_j) \cdot X_{i,j}(\alpha v_{i+1}, v_{i+2:j+1})
\right]
\\
=& \bbE_{(\alpha v_{i+2},...,\alpha v_n)\sim \calF(\cdot|\alpha v_{i+1})} 
\left[
\sum_{i<j<n} t_{j+1,i} \cdot \tp^*_j(\alpha v_j) \cdot X_{i,j}(\alpha v_{i+1},...,\alpha v_{j+1})
\right]
\\
=& \bbE_{\bmv_{i+2:n}\sim \calF(\cdot|v_{i+1})} 
\left[
\sum_{i<j<n} t_{j+1,i} \cdot \alpha \tp^*_j(v_j) \cdot X_{i,j}(v_{i+1},...,v_{j+1})
\right]
\\
=& \alpha w_{i+1}(v_{i+1})
\end{aligned}
\end{equation}

Then we complete the proof of \cref{lem:partial-ui:2}.

\end{proof}

We then prove the \cref{lem:partial-ui}. By simple calculation

\begin{equation}
\label{eq:lem:partial-ui:3}
\begin{aligned}
&g_i(\alpha v_i,\alpha p_i) 
\\
=& \alpha v_i + 
\bbE_{v_{i+1}\sim F_{i+1}(\cdot|\alpha v_i)}
\left[
\one\{v_{i+1} \ge \alpha p_i\} \cdot
( t_{i+1,i}\cdot \alpha\cdot p_i + w_{i+1}(v_{i+1}) )
\right]
\\
=& \alpha v_i + 
\bbE_{\alpha v_{i+1}\sim F_{i+1}(\cdot|\alpha v_i)}
\left[
\one\{\alpha v_{i+1} \ge \alpha p_i\} \cdot
( t_{i+1,i}\cdot \alpha\cdot p_i + w_{i+1}(\alpha v_{i+1}) )
\right] \cdots\text{let $v_{i+1} \gets \alpha v_{i+1}$.}
\\
=& \alpha v_i + 
\bbE_{v_{i+1}\sim F_{i+1}(\cdot|v_i)}
\left[
\one\{v_{i+1} \ge p_i\} \cdot
( t_{i+1,i}\cdot \alpha\cdot p_i + \alpha w_{i+1}(v_{i+1}) )
\right] \cdots \text{rewriting each term directly}
\\
=& \alpha v_i + \alpha \cdot 
\bbE_{v_{i+1}\sim F_{i+1}(\cdot|v_i)}
\left[
\one\{v_{i+1} \ge p_i\} \cdot
( t_{i+1,i}\cdot p_i + w_{i+1}(v_{i+1}) )
\right]
\\
=& \alpha g_i(v_i,p_i)
\end{aligned}
\end{equation}

This completes the proof of \cref{lem:partial-ui}.


\end{proof}

Now, we begin with the proof of main lemma (\cref{lem:sequential-invariance-2}).

\begin{proof}[Proof of \cref{lem:sequential-invariance-2}]

We will use mathematical induction from the last player to the first player.
Firstly $\ts^*_n(v) = v$ which satisfies (1) directly. Then the distribution of $\ts^*_n(v_n)$ given $v_{n-1}$ is exactly $F_n$. Player $n-1$'s optimal selling strategy $\tp^*_{n-1}(v_{n-1})$ will maximizes $G_n(p|v_{n-1})\cdot p$, which by \cref{lem:price-invariance} satisfies (2).

Now, assume $\ts^*_{i+1},\tp^*_{i+1},...\ts^*_n$ satisfies (1)(2). This satisfies the requirement of \cref{lem:partial-ui} by setting $j=i$.
Then we will prove $\ts^*_i$ satisfies (1) and $\tp^*_i$ satisfies (2) accordingly, by utilizing the conclusion of \cref{lem:partial-ui}.

We first show that how (1)(2) can be derived from \cref{lem:partial-ui}.

\paragraph{$\ts^*_i$ satisfies (1) and $\tp^*_i$ satisfies (2) given \cref{lem:partial-ui}.}

Taking maximum on (a) in \cref{lem:partial-ui}, we get
\begin{align*}
\ts_i^*(v_i) =& \max_{p_i} g_i(\alpha_i v_i, \alpha_i, p_i) 
\\
=& \alpha_i \max_{p_i} g_i(v_i,p_i) = \alpha_i \ts^*_i(v_i)
\end{align*}

For $p^*_i$, notice that
\begin{align*}
\ts^*_i(\alpha_i v_i, \tp^*_i(\alpha_i v_i))
=& \max_{p_i} g_i(\alpha_i v_i, p_i)
\\
=& \max_{p_i} g_i(\alpha_i v_i, \alpha_i, p_i)
\\
=& \alpha_i \max_{p_i} g_i(v_i,p_i)
\\
=& \alpha_i g_i(v_i, \tp^*_i(v_i))
\\
=& g_i(\alpha_i v_i, \alpha_i \tp^*_i(v_i))
\end{align*}

By tie-breaking rule for argmax, we must have $\alpha_i \tp^*_i(v_i) = \tp^*_i(\alpha_i v_i)$. This completes the induction step in \cref{lem:sequential-invariance-2}.

Now, by mathematical induction we complete the proof of \cref{lem:sequential-invariance-2} (a),(b).

For (c), note that $g_i(v_i,p_i)$ degrades to:
\begin{equation}
\begin{aligned}
g_i(v_i,p_i) =& v_i + \bbE_{v_{i+1}\sim F_{i+1}(\cdot|v_i)} 
\left[t_{i+1,i} \cdot p_i\cdot \one\{\ts^*_{i+1}(v_{i+1})\ge p_i\}\right]
\\
=& v_i + t_{i+1,i} \cdot \bbE_{v_{i+1}/v_i \sim F_{i+1}(\cdot|1)} 
\left[p_i\cdot \one\{\ts^*_{i+1}(v_{i+1})\ge p_i\}\right]
\\
=& v_i + t_{i+1,i} \cdot \bbE_{v_{i+1} \sim F_{i+1}(\cdot|1)} 
\left[p_i\cdot \one\{\ts^*_{i+1}(v_{i+1}\cdot v_i)\ge p_i\}\right]
\cdots\text{by setting $v_{i+1} \gets v_i \cdot v_{i+1}$}
\\
=& v_i + t_{i+1,i} \cdot \bbE_{v_{i+1} \sim F_{i+1}(\cdot|1)} 
\left[p_i\cdot \one\{v_{i+1} \ge \frac{p_i}{(\ts^*_{i+1}(1)\cdot v_i )}\}\right] \cdots\text{$\ts^*_{i+1}(v_{i+1}\cdot v_i) = v_{i+1}\cdot v_i\cdot\ts^*_{i+1}(1)$}
\\
=& v_i + t_{i+1,i}\cdot \ts^*_{i+1}(1)\cdot v_i  \cdot \bbE_{v_{i+1} \sim F_{i+1}(\cdot|1)} 
\left[\frac{p_i}{(\ts^*_{i+1}(1)\cdot v_i )}\cdot \one\{v_{i+1} \ge \frac{p_i}{(\ts^*_{i+1}(1)\cdot v_i )}\}\right]
\end{aligned}
\end{equation}

By tie-breaking rule, it's clear that $\frac{\tp^*_i(v_i)}{(\ts^*_{i+1}(1)\cdot v_i )}$ is a constant that depends only on $F_{i+1}$, say, $\hat{p}^*_{i+1}$.
Therefore, $\tp^*_i(v_i) = \hat{p}^*_{i+1}\cdot \ts^*_{i+1}(1)\cdot v_i$.
This ends the proof of \cref{lem:sequential-invariance-2}.


\end{proof}

\subsection{Proof of \texorpdfstring{\cref{thm:global}}{}}
\thmGlobal*

\begin{proof}
\label{prf:thm:global}
The proof proceeds by constructing a sequence of mechanisms that progressively transitions from the baseline mechanism $\bmt$ to the general mechanism $\bmt'$.
We define a chain of $n$ mechanisms, $\bmt^n=\bmt, \bmt^{n-1}, \ldots, \bmt^1, \bmt^0=\bmt'$, by setting the $k$-th mechanism $\bmt^k$ such that the profit reallocations for players $j\geq k$ matches $\bmt'$, while the profit reallocations for players $j < k$ match the baseline mechanism $\bmt$:
\[
t^k_{ij} \;:=\;
\begin{cases}
t'_{ij}, & \text{if } j\ge k,\\
\one\{i=j+1\}, & \text{otherwise}.
\end{cases}
\]
For each $k\in\{n,\ldots,1\}$, the adjacent pair $(\bmt^k,\bmt^{k-1})$ differs only in profit reallocation to player $k-1$. One can verify that setting $\bmt^1 \gets \bmt^k,\bmt^2 \gets \bmt^{k-1}$ in \cref{thm:partial} satisfies the conditions of \cref{thm:partial} with $i=k-1$. Therefore, \cref{thm:partial} implies that, by denoting $E^k_i \coloneqq \{\bmv\in\calV: \one\{\ts^{k,*}_i(v_i) \ge \tp^{k,*}_{i-1}(v_{i-1})\}\}$ where $(\tbmp^k,\tbms^k)$ is the equilibrium under mechanism $\bmt^k$, we have

\begin{equation}
E^k_i \subseteq E^{k-1}_i
\end{equation}

Consequently, $E_i = E^n_i \subseteq E^0_i = E'_i$, which completes the proof.



\end{proof}

\section{Experimental Details}
\label{sec:omitted_exp}

We describe how we use SciPy to learn an optimal profit reallocation mechanism. All experiments were run on CPUs with a single MacBook.

Firstly, for each player index $i$, we apply a softmax to a real vector $x_{i,0:i-1} \in \bbR^i$ so that $t_{i,0:i-1} = \mathrm{softmax}(x_{i,0:i-1})$ automatically satisfies $\sum_{j} t_{ij} = 1$. This reparameterization turns the constrained problem into an unconstrained one with variables $\bmx=\{x_{ij}\}_{0\le j<i}$.

The model representation, \cref{alg:exact}, and the zero-th order optimization are all deterministic for fixed inputs: the only stochastic ingredient in the welfare objective is the estimation of expected social welfare w.r.t. $\bmv\sim\calF$.
For each model, we draw $M=10{,}000$ independent valuation profiles and average social welfare within these profiles; this yields a high-precision estimate of expected social welfare.

Then, given $\bmx$, we map it to a profit reallocation mechanism $\bmt$ and call \cref{alg:exact} to obtain the sequential equilibrium. We evaluate that equilibrium on the same $M$ profiles to approximate the expected social welfare. The pipeline from $\bmx$ to expected social welfare is denoted as $\SW(\bmx)$, which is the objective function. We then use \emph{differential evolution}, a zeroth-order algorithm 
packaged in scipy, to find the maximum point of $\SW(\bmx)$. We use $50$ iterations, population size $5$, and initialize from the AVE mechanism.
The running time is approximately one minute for model with $n=5$.

\section{Potential Social Impact}
\label{sec:impact}
By incorporating profit reallocation mechanisms to real data markets, data trading is likely to become more active, consequently facilitating data circulation and data utilization.
As is suggested by experiments, The interests of a small group of participants may be affected, especially those upstream (\eg, data owner) in the data transaction chain.
Additionally, this work does not directly resolve the risk (\eg, data leakage, data abuse) arose from inappropriate data usage. The promotion of data trading may also increase such risks. 

\end{document}